\renewcommand\bibsection%
\begin{document}

\title{Succinct Geometric Indexes Supporting Point Location Queries}

\author{Prosenjit Bose\inst{1} \and Eric Y. Chen\inst{2} \and Meng He\inst{1} \and Anil Maheshwari\inst{1} \and Pat Morin\inst{1}}
\institute{
  School of Computer Science, 
  Carleton University, Canada, 
  \email\{jit, mhe, anil, morin\}@cg.scs.carleton.ca
\and
  Cheriton School of Computer Science,
  University of Waterloo, Canada,
  \email y28chen@uwaterloo.ca
}

\maketitle

\setcounter{page}{1}
\pagestyle{plain}
\thispagestyle{plain}

\begin{abstract}
%We start a new line of research by proposing to design succinct geometric indexes. 
%The goal is to design an auxiliary data structure called succinct geometric index of negligible space (more precisely, $o(n)$ bits) that, by taking advantage of the $n$ points in the data set permuted and stored elsewhere as a sequence, to support geometric queries in optimal time. 
We propose to design data structures called succinct geometric indexes of negligible space (more precisely, $o(n)$ bits) that, by taking advantage of the $n$ points in the data set permuted and stored elsewhere as a sequence, to support geometric queries in optimal time. 
Our first and main result is a succinct geometric index that can answer point location queries, a fundamental problem in computational geometry, on planar triangulations in $O(\lg n)$ time\footnote{We use $\lg n$ to denote $\lceil \log_2 n \rceil$.}. 
We also design three variants of this index. 
The first supports point location using $\lg n + 2\sqrt{\lg n} + O(\lg^{1/4} n)$ point-line comparisons. 
The second supports point location in $o(\lg n)$ time when the coordinates are integers bounded by $U$. 
The last variant can answer point location in $O(H+1)$ expected time, where $H$ is the entropy of the query distribution. 
%The first is a succinct geometric index that supports point location using $\lg n + 2\sqrt{\lg n} + O(\lg^{1/4} n)$ point-line comparisons, which matches the result of Seidel and Adamy~\cite{sa2000} while using negligible space. 
%The preprocessing time is $O(n)$, which is an improvement upon the $O(n\lg n)$ preprocessing time of the latter result. 
%The second variant is a succinct geometric index that supports point location in $o(\lg n)$ time when the coordinates are integers bounded by $U$. 
%The last variant is a succinct geometric index that can answer point location in $O(H+1)$ expected time, where $H$ is the entropy of the query distribution. 
These results match the query efficiency of previous point location structures that use $O(n)$ words or $O(n\lg n)$ bits, while saving drastic amounts of space. 
%requiring negligible space in addition to the storage cost of storing the point set. 

We then generalize our succinct geometric index to planar subdivisions, and design indexes for other types of queries. 
% (we assume the subdivision is within a polygon boundary), and design an $o(n)$-bit index that supports point location on planar subdivisions in $O(\lg n)$ time. 
%This immediately yields another index that can test whether a query point is inside a given polygon. 
%We also use our techniques for point location to design a succinct geometric index that supports vertical ray shooting. 
Finally, we apply our techniques to design the first implicit data structures that support point location in $O(\lg^2 n)$ time. 
\end{abstract}

\section{Introduction}
The problem of efficiently storing and retrieving geometric data sets that typically consist of collections of data points and regions is fundamental in computational geometry. 
Researchers have designed many data structures to represent geometric data, and to support various types of queries, such as point location~\cite{ki1983, egs1986, co1986, st1986}, nearest neighbour~\cite{in2004}, range searching~\cite{ae1999} and ray shooting~\cite{hs1993}. 

Among these queries, planar point location is perhaps the most fundamental and thus has been studied extensively. 
Given a planar subdivision, the problem is to construct a data structure so that the face of the subdivision containing a query point can be located quickly. 
In the 1980s, various researchers~\cite{ki1983, egs1986, co1986, st1986} showed that data structures of $O(n)$ words, where $n$ is the number of vertices of the planar subdivision, can be constructed to support point location in $O(\lg n)$ time, which is asymptotically optimal. 

Researchers have also considered improving the query efficiency of point location structures under various assumptions. 
Several researches~\cite{gor1997,sa2000} considered the exact number of steps (i.e. point-line comparisons) required to answer point locations queries. 
Seidel and Adamy~\cite{sa2000} showed that there is an $O(n)$-word structure that can answer point location in $\lg n + 2\sqrt{\lg n} + O(\lg^{1/4} n)$ steps. 
Researchers later considered the case where the query distribution is known. 
If the probability of the $i${\kth} face of the planar subdivision containing the query point is $p_i$, the lower bound of the expected time of answering a query under the binary decision tree model is the entropy $H = \sum_{i=1}^f(p_i\log_2 \frac{1}{p_i})$, where $f$ is the number of faces. 
When the planar subdivision is a planar triangulation, data structures of $O(n)$ words can be constructed to answer point location queries in $O(H+1)$~\cite{in2004} expected time or even using $H+O(\sqrt{H}+1)$~\cite{ammw2007} expected comparisons per query. 
%The latter result also applies to the case where each face is convex. %of constant combinatorial complexity. 
Recently, Chan~\cite{ch2006} and P\v{a}tra\c{s}cu~\cite{pa2006} considered the case where the coordinates of the points are integers bounded by $U \le 2^w$, and proposed a linear space structure that answers point location queries in $O(\min\{\lg n/\lg\lg n, \sqrt{\lg U}\})$ time. 

As we have already seen, much work has been done to improve the query efficiency of point location. 
However, much less effort has been made to further reduce the storage cost. 
As a result of the rapid growth of geometric data sets available in Geometric Information Systems (GIS), spatial databases and graphics, many modern applications process geometric data measured in gigabytes or even terabytes. 
Although the above point location structures require linear space, the constants hidden in the asymptotic space bounds are usually large, so that they often occupy space many times the size of the geometric data. 
When the size of the data is huge, it is often impossible or at least undesirable to construct and store these data structures. 
Most data structures supporting other types of geometric queries are facing the same problem.

Some attempts, however, have been made to improve the space efficiency of various geometric data structures. %, but they appear not be completely satisfactory. 
Goodrich~\etal~\cite{gor1997} showed that given a planar triangulation, a structure of sublinear space can be constructed to answer point location queries in $O(\lg n)$ time. 
However, their approach assumed that the connectivity information (i.e. adjacencies) of the planar triangulation is given and stored elsewhere. 
%If we just want to store sufficicent information so that each neighbour of a given vertex can be 
This information can easily occupy much more space than that required to store the point coordinates (an adjacency list for a planar triangulation would take about 4n words), and can make the total space of the point location structure to be $O(n)$ words. 
By applying the idea of implicit data structures~\cite{M86}, researchers~\cite{BCC04, CC08} have designed some implicit geometric data structures. 
The idea is to store a permuted sequence of the point set, so that with zero or $O(1)$ extra space, geometric queries can be answered efficiently. 
The most recent result by Chan and Chen~\cite{CC08} showed that an implicit structure can be constructed to answer nearest neighbour query in the plane in $O(\lg^{1.71} n)$ time. 
This approach saves a lot of space,  but there are still limitations. 
First, the above query time is not asymptotically optimal. 
Second, it is not known how to support point location in planar triangulations or planar subdivisions using implicit data structures. 
%Second, it is not known previously how to apply similar ideas to support point location in planar triangulations or planar subdivisions. 

Have researchers tried all the major known techniques to design space-efficient geometric data structures? The answer is no. 
There has been another line of research on data structures called {\em succinct data structures}. 
Succinct data structures\index{succinct data structure} were first proposed by
Jacobson~\cite{j1989} to encode bit
vectors, (unlabeled) trees and planar graphs in space close 
to the information-theoretic lower bound, while supporting
 efficient navigational operations.
This technique was successfully applied to various other abstract
data types, such as dictionaries, strings, binary
relations~\cite{bgmr2006, bhmr2007}
and labeled
trees~\cite{grr2004,bgmr2006, bhmr2007}. 
It has also been applied to data structures related to computational geometry. 
There are succinct representations of planar triangulations and planar graphs~\cite{cll2001, cds2005, cds2006, bchm2007} that use $O(n)$ bits, and support queries such as testing the adjacency between two given vertices in constant time.    
%These results provide $O(n)$-bit representations of planar triangulations and planar graphs, while supporting queries such as testing the adjacency between two given vertices in constant time.   
However, they only encode the connectivity information, so they are succinct graph data structures rather than succinct geometric data structures. 
It is not known how to combine them with point location structures without using $O(n)$ extra words or $O(n\lg n)$ bits. 

In this paper, we propose to design succinct geometric data structures. 
Given a geometric data set, our goal is to store the coordinates of the points as a permuted sequence, and design an auxiliary data structure called {\em succinct geometric index} that occupies negligible space (more precisely, $o(n)$ bits) to support various geometric queries in optimal time. 
There is similarity between this model and the permutation + bits model presented by Chan and Chen~\cite{CC08}, but they are different. 
The latter was proposed only as an intermediate model for the design of implicit geometric data structures, in which $O(n)$ bits are allowed in addition to storing a permutation of the points. 
It also only allows bit probe operations to these bits, while we do not have such a restriction. 

\subsection{Our Results}
We design succinct geometric indexes to answer point location queries. 
Our first and main result is that, given a planar triangulation, we can permute its points to store the coordinates in sequence, and construct a succinct geometric index that occupies $o(n)$ bits to support point location in $O(\lg n)$ time. 
The preprocessing time is $O(n)$. 
Based on this, we design three variants of this index. 
The first variant is a succinct geometric index that supports point location in $\lg n + 2\sqrt{\lg n} + O(\lg^{1/4} n)$ steps, which matches the result of Seidel and Adamy~\cite{sa2000} while using negligible space. 
The preprocessing is $O(n)$, which is an improvement upon the $O(n\lg n)$ preprocessing time of the latter structure. 
The second variant is a succinct geometric index that supports point location in $o(\lg n)$ time when the coordinates are integers bounded by $U$. 
The last variant is a  succinct geometric index that can answer point location in $O(H+1)$ expected time. 
These results match the query efficiency of previous point location structures that use $O(n)$ words or $O(n\lg n)$ bits, while saving drastic amounts of space.
%These results match previous results in terms of query efficiency, while requiring negligible space in addition to the storage cost of storing the point set. 

We then generalize our approach to the case of planar subdivisions (we assume the subdivision is within a polygon boundary), and design an $o(n)$-bit index that supports point location on planar subdivisions in $O(\lg n)$ time. 
This immediately yields another succinct geometric index that can test whether a query point is inside a given polygon. 
We also use our techniques for point location to design a succinct geometric index that supports vertical ray shooting. 
Finally, we apply our succinct geometric indexes to design the first implicit data structures that support point location in $O(\lg^2 n)$ time. 
All our results are under the word RAM model of $\Theta(\lg n)$-bit word size.

\section{Preliminaries}

\subsection{Bit Vectors}
\label{sec:bitvector}

A key structure for many succinct data structures, and for the research work in this paper, is a bit vector\index{bit vector|(} $B$ of length $n$ that supports
$\rankop$ and $\selop$ operations. We assume that the positions
in $B$ are numbered $1, 2,\ldots, n$.  
For $\lab\in\{0,1\}$, the operator $\rankop_B(\lab,\object)$ returns
the number of occurrences of $\lab$ in $B[1\ldots\object]$, and the
operator $\selop_B(\lab,\rank)$ returns the position of the
$\rank$\textsuperscript{th} occurrence of $\lab$ in $B$.
We omit the subscript $B$ when it is clear from the context.  
Lemma~\ref{lem:ranksel} addresses the problem of succinctly representing bit vectors, in which part (a) is
from Jacobson~\cite{j1989} and Clark and
Munro~\cite{cm1996}, while part (b) is
from Raman~\etal~\cite{rrr2002}.

\begin{lemma} \label{lem:ranksel}
  A bit vector $B$ of length $n$ with $v$ $1$s can be represented
  using either: (a) $n + o(n)$ bits, or (b) $\lg{n \choose v} +
  O(n\lg\lg n / \lg n)$ bits, to support the access to each bit,
  $\rankop$ and $\selop$ in $O(1)$ time.
\end{lemma}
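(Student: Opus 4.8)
To prove Lemma~\ref{lem:ranksel}, the plan is to treat the two parts separately, in both cases combining a shallow hierarchy of precomputed cumulative counts with the ``method of four Russians'': tabulating the answer for every possible block of $O(\lg n)$ bits.

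For part (a), I would store $B$ itself verbatim using $n$ bits, which already yields $O(1)$-time access to each bit. For $\rankop$, cut $B$ into \emph{superblocks} of $\lg^2 n$ consecutive bits and, inside each superblock, into \emph{blocks} of $\frac12\lg n$ bits. For each superblock boundary store the number of $1$s in $B$ up to that point ($O(n/\lg n)$ bits total), and for each block boundary store the number of $1$s since the start of its enclosing superblock ($O(n\lg\lg n/\lg n)$ bits total). Then $\rankop(1,\object)$ is the sum of the stored superblock count, the stored block count, and the number of $1$s in a prefix of a single block; the last term is read from a universal table indexed by the $\frac12\lg n$-bit contents of a block and a position inside it, which has $\sqrt n\cdot\frac12\lg n$ entries of $O(\lg\lg n)$ bits, hence $o(n)$ bits; and $\rankop(0,\object)=\object-\rankop(1,\object)$. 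For $\selop$ one builds an analogous but slightly deeper hierarchy: bucket the positions of the $1$s (and separately the $0$s) so that the answer is pinned down by a constant number of lookups, storing answers explicitly for the ``sparse'' buckets (those spanning a long stretch of $B$) and recursing one further level plus a table for the ``dense'' ones. All these directories are again $o(n)$ bits, so the total is $n+o(n)$.

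For part (b) the only change is that the raw $n$ bits of $B$ are replaced by a compressed encoding of its blocks. Partition $B$ into blocks of $t=\frac12\lg n$ bits; a block with $c$ ones is described by a pair $(c,r)$, where $c$ is its \emph{class} and $r\in\{0,\dots,{t \choose c}-1\}$ is the index of the block within the enumeration of all $t$-bit strings with exactly $c$ ones. Store the classes in a fixed-width array, $\lceil\lg(t+1)\rceil$ bits each, i.e. $O(n\lg\lg n/\lg n)$ bits overall. Concatenate the variable-length codes $r$, of length $\lceil\lg{t \choose c}\rceil$, into one stream; since $\prod_i {t \choose c_i}\le{n \choose v}$ whenever $\sum_i c_i=v$ (a Vandermonde identity), the stream has length $\sum_i\lceil\lg{t \choose c_i}\rceil\le\lg{n \choose v}+O(n/\lg n)$ bits. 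To address this stream in $O(1)$ time, add one more layer of pointers into it, one per group of $\Theta(\lg n)$ blocks, together with in-group offsets of $O(\lg\lg n)$ bits per block, all fitting in $o(n)$ bits. Finally, replace each block-content table lookup from part (a) by first decoding the block from $(c,r)$ via a universal table of size $\sqrt n\cdot\mathrm{poly}(\lg n)$ and then proceeding exactly as before. Adding the (still $o(n)$-bit) directories of part (a) gives the claimed $\lg{n \choose v}+O(n\lg\lg n/\lg n)$ total.

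The main obstacle is the bookkeeping in part (b): the variable-length offset stream must be simultaneously \emph{space-optimal} --- so that the rounding ceilings and the navigation pointers contribute only $O(n\lg\lg n/\lg n)$ rather than a constant fraction of $n$ --- and \emph{$O(1)$-time addressable}, which is exactly why the choice $t=\frac12\lg n$ (so that $2^t=\sqrt n$ keeps every universal table $o(n)$) together with a two-level pointer scheme is forced; the $\selop$ hierarchy of part (a) is the other delicate point, but it is by now a standard construction.
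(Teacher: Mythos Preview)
The paper does not give a proof of this lemma at all: it is stated as a preliminary and attributed to Jacobson and Clark--Munro for part~(a) and to Raman, Raman and Rao for part~(b). Your proposal is a correct reconstruction of exactly those classical constructions --- the two-level block/superblock directories with a four-Russians table for $\rankop$, the Clark-style sparse/dense bucketing for $\selop$, and the RRR $(\text{class},\text{offset})$ block encoding with a Vandermonde bound for the compressed variant --- so there is nothing to compare beyond noting that you have supplied the argument the paper merely cites.
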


\subsection{Graph Separators}
Graph separators~\cite{lt1980} have been extensively studied to partition graph into subgraphs, to allow divide and conquer. 
The variant of graph separators we use is called {\em $t$-separators}. 
Let $G=\{V, E\}$ be a planar graphs of $n$ vertices, where each vertex has a non-negative weight. 
A {\em $t$-separator} ($0 < t < 1$) of $G$ is a subset, $S$, of $V$ whose removal from $G$ leaves no connected component of total weight more than $w(G)$, where $w(G)$ is the sum of the weights of the vertices of $G$. 
Aleksandrov and Djidjev~\etal~\cite{ad1996} have the following results:

\begin{lemma}[\cite{ad1996}]
\label{lem:tseparator}
\label{lem:tseparatorweight}
Consider a planar graph $G$ with $n$ vertices, whose vertices have nonnegative weights. 
For any $t$ such that $0<t<1$, there is a $t$-separator consisting of $O(\sqrt{n/t})$ vertices that can be computed in $O(n)$ time. 
\end{lemma}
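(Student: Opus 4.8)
The plan is to obtain Lemma~\ref{lem:tseparator} by iterating the classical \emph{weighted} planar separator theorem. Recall that version: for a planar graph $H$ on $m$ vertices with nonnegative vertex weights one can compute, in $O(m)$ time, a set of $O(\sqrt m)$ vertices whose deletion splits $H$ into parts each of total weight at most $\frac23 w(H)$ --- the proof being the standard breadth-first layering argument, which picks a light middle layer and, inside the remaining fat band, a fundamental-cycle separator. Throughout I would keep $G$'s own vertex weights fixed and write $W = w(G)$.

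The first step is a cleanup: place every vertex of weight more than $tW$ directly into the separator $S$. There are fewer than $1/t$ such ``heavy'' vertices, and $1/t = O(\sqrt{n/t})$ whenever $t \ge 1/n$; when $t < 1/n$ we have $\sqrt{n/t} > n$, so one may simply take $S = V$. After this step every vertex has weight at most $tW$. Now run the recursion: maintain a partition of the non-separator vertices into connected pieces, and while some piece $C$ has $w(C) > tW$, apply the weighted separator theorem to $C$, move its $O(\sqrt{|C|})$ separator vertices into $S$, and replace $C$ by the (at most two) resulting subpieces, each of weight at most $\frac23 w(C)$. Since a piece that gets split has weight more than $tW$ while every vertex has weight at most $tW$, such a piece has at least two vertices, so the separator theorem genuinely applies; because the weight drops by a constant factor at each split and starts below $W$, the process halts after $O(\log(1/t))$ levels with every remaining piece of weight at most $tW$, i.e.\ $S$ is a $t$-separator.

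The crux is bounding $|S|$ without incurring a spurious $\log(1/t)$ factor. Assign to each piece $C$ that is ever split the weight class $j$ with $w(C) \in \bigl((2/3)^{j+1}W,\ (2/3)^{j}W\bigr]$; since split pieces have weight more than $tW$, only classes $j \le J := \lceil \log_{3/2}(1/t) \rceil$ occur. The key observation is that the pieces occurring in a single class $j$ are pairwise disjoint: the pieces form a laminar family over time, and as soon as a piece is split its children --- hence all its later descendants --- have weight at most $(2/3)^{j+1}W$ and so fall into a strictly later class, which rules out nesting within class $j$. Therefore the class-$j$ pieces contain at most $n$ vertices in total and number at most $(3/2)^{j+1}$, so by Cauchy--Schwarz they contribute $\sum_{C} O(\sqrt{|C|}) = O\bigl(\sqrt{(3/2)^{j+1}n}\bigr)$ separator vertices; summing over $j = 0, \dots, J$ is a geometric series dominated by its last term, $O\bigl(\sqrt{n(3/2)^{J+1}}\bigr) = O(\sqrt{n/t})$, which together with the $O(\sqrt{n/t})$ heavy vertices gives the claimed bound. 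The same disjointness argument bounds the work per class by $O(n)$ and hence the naive running time by $O(n\log(1/t))$; pushing this down to the promised $O(n)$ is the one genuinely delicate point --- it requires the refined linear-time separator construction of Aleksandrov and Djidjev rather than plain recursion --- and that is what I would invoke~\cite{ad1996} for.
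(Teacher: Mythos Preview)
The paper does not give its own proof of this lemma: it is stated as a result of Aleksandrov and Djidjev, with the citation \cite{ad1996} in the lemma header, and no proof follows. The authors treat it as a black-box preliminary, so there is no ``paper's proof'' to compare your proposal against.

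That said, your sketch is a sound reconstruction of how one obtains the $O(\sqrt{n/t})$ size bound by iterating the weighted Lipton--Tarjan separator theorem. The weight-class disjointness argument (no two split pieces in the same class are nested, hence by laminarity they are vertex-disjoint) and the Cauchy--Schwarz step are correct, and the geometric sum over classes indeed collapses to $O(\sqrt{n/t})$. One minor slip: splitting a piece can yield more than two connected subpieces, not ``at most two'', but this is harmless since you only use the per-child weight bound. You are also candid about the genuine difficulty --- the $O(n)$ running time --- and correctly note that naive recursion gives only $O(n\log(1/t))$; shaving off the logarithm is exactly what the cited work of Aleksandrov and Djidjev achieves via a single-pass layered construction rather than repeated recursion. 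So your proposal is a reasonable proof sketch of the cited result; it is simply that the present paper never set out to prove this lemma itself.
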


\begin{REMOVED}
Graph separators~\cite{lt1980} have been extensively studied to partition graph into subgraphs, to allow divide and conquer. 
The variant of graph separators we use is called {\em $t$-separators}. 
Let $G=\{V, E\}$ be a planar graphs of $n$ vertices. 
A {\em $t$-separator} ($0 < t < 1$) of $G$ is a subset, $S$, of $V$ whose removal from $G$ leaves no connected component of more than $tn$ vertices. 
Aleksandrov and Djidjev~\etal~\cite{ad1996} considered the problem of partitioning graphs with weights and costs. 
Their results can be directly applied to unweighted planar graphs: 

\begin{lemma}[\cite{adgm2006}]
\label{lem:tseparator}
Consider a planar graph $G$ with $n$ vertices. 
For any $t$ such that $0<t<1$, there is a $t$-separator consisting of $O(\sqrt{n/t})$ vertices that can be computed in $O(n)$ time. %that partitions $G$ into components whose boundaries are simple cycles. 
\end{lemma}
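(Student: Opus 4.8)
The plan is to obtain this result by repeatedly invoking a basic planar separator theorem. Recall that the (weighted) Lipton--Tarjan theorem~\cite{lt1980} guarantees, for every planar graph on $m$ vertices with nonnegative vertex weights, a set of $O(\sqrt{m})$ vertices whose removal leaves each connected component with at most two thirds of the total weight, and that such a set can be found in $O(m)$ time. One application only shrinks the heaviest component by a constant factor, so to drive every component below the target weight $t\cdot w(G)$ I would recurse: keep a working collection of components, and while some component still has weight exceeding $t\cdot w(G)$, run the separator theorem on it, add the returned vertices to $S$, and continue on the pieces it produces.

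The heart of the argument is bounding $|S|$ by $O(\sqrt{n/t})$. I would classify every component ever produced in the recursion by its weight, saying a component is in class $k$ when its weight lies in $\bigl((2/3)^{k+1}w(G),\,(2/3)^{k}w(G)\bigr]$. Deleting separator vertices keeps components of the same class pairwise vertex-disjoint, since a child weighs at most $2/3$ of its parent and hence always lands in a strictly higher class; therefore class $k$ contains fewer than $(3/2)^{k+1}$ components whose vertex counts sum to at most $n$, and by Cauchy--Schwarz the number of separator vertices produced while processing the class-$k$ components is $O\bigl(\sqrt{(3/2)^{k+1}\,n}\,\bigr)$. Since only components of weight exceeding $t\cdot w(G)$ are ever processed, only classes with $k=O(\log(1/t))$ arise, so the geometric sum $\sum_{k}O\bigl(\sqrt{(3/2)^{k+1}\,n}\,\bigr)$ is dominated by its last term, which is $O\bigl(\sqrt{n/t}\,\bigr)$.

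For the running time, each weight class does $O(n)$ work in total (its components are disjoint and each separator computation is linear in the component's size), so the naive recursion runs in $O(n\log(1/t))$ time --- already $O(n)$ when $t$ is constant, but not in general. To obtain the clean $O(n)$ bound for all $t$ I would appeal directly to Aleksandrov and Djidjev~\cite{ad1996}: their partitioning theorem for planar graphs carrying both weights and costs becomes, once every cost is set to $1$ so that the cost of a separator equals its cardinality, exactly the statement claimed here, and it is proved there to run in linear time. I therefore expect the real combinatorial content to be the weight-class counting above, and the one genuinely delicate point --- a linear-time rather than $O(n\log(1/t))$-time construction --- to be the part best borrowed wholesale from~\cite{ad1996} rather than reproved here.
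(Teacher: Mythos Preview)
The paper does not prove this lemma at all: it is stated as a citation of Aleksandrov and Djidjev and used as a black box. So there is no ``paper's own proof'' to compare against; the authors simply import the result.

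Your sketch is nonetheless a correct and standard derivation of the $O(\sqrt{n/t})$ size bound. The weight-class argument is sound: descendants in the recursion tree land in strictly higher classes, so components within a class are pairwise disjoint; the count of class-$k$ components is bounded via their minimum weight; and Cauchy--Schwarz (equivalently, concavity of $\sqrt{\,\cdot\,}$) turns ``$\le(3/2)^{k+1}$ disjoint pieces totalling $\le n$ vertices'' into a class-$k$ contribution of $O\bigl(\sqrt{(3/2)^{k+1}n}\bigr)$, after which the geometric sum collapses to its last term. You also correctly flag that the naive recursion gives only $O(n\log(1/t))$ time and that getting the unconditional $O(n)$ bound requires the more careful construction of~\cite{ad1996}; that is exactly the part the paper is citing rather than reproving.

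In short: your proposal supplies more than the paper does. The paper's ``proof'' is a pointer to the literature, and your argument is the standard way one would justify the size bound if asked to do so.
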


When weights are assigned to vertices, a $t$-separator is defined as a set of vertices whose removal from $G$ leaves no connected component of total weight more than $w(G)$, where $w(G)$ is the sum of the weights of the vertices of $G$. 
The following lemma considers this case: 

\begin{lemma}[\cite{adgm2006}]
\label{lem:tseparatorweight}
Consider a planar graph $G$ with $n$ vertices, whose vertices have nonnegative weights. 
For any $t$ such that $0<t<1$, there is a $t$-separator consisting of $O(\sqrt{n/t})$ vertices that can be computed in $O(n)$ time. %that partitions $G$ into components whose boundaries are simple cycles.
\end{lemma}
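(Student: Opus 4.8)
The plan is to derive this directly from the weighted version of the planar separator theorem of Lipton and Tarjan~\cite{lt1980}, applied recursively. Recall that for an $n$-vertex planar graph with nonnegative vertex weights summing to $W$ one can, in linear time, find a set of $O(\sqrt{n})$ vertices whose removal leaves two sides each of total weight at most $2W/3$. Starting from $G$, I would build a binary \emph{separator tree}: at a node holding a piece $P$, if $w(P)\le t\,w(G)$ make it a leaf; otherwise apply the balanced separator theorem to $P$, place its $O(\sqrt{|P|})$ separator vertices into $S$, and recurse on the two sides. When the recursion halts, every connected component of $G\setminus S$ has weight at most $t\,w(G)$, so $S$ is a $t$-separator; it then remains only to bound $|S|$ and the running time.

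Before running the recursion I would first move into $S$ every vertex $v$ with $w(v)>t\,w(G)$. There are fewer than $1/t$ such vertices (they would otherwise exceed the total weight), which is $O(\sqrt{n/t})$ whenever $t\ge 1/n$; and when $t<1/n$ we have $\sqrt{n/t}=\Omega(n)$, so we may simply take $S=V$. After this step every vertex weight is at most $t\,w(G)$, which ensures that the recursion terminates with genuine small-weight components rather than stalling on a single heavy vertex.

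For the size bound I would group the pieces appearing in the separator tree by ``weight level'': level $i$ contains the pieces $P$ with $(2/3)^{i+1}w(G)<w(P)\le(2/3)^{i}w(G)$. The pieces at a fixed level are pairwise disjoint (their ancestors' separators have already been removed), so there are at most $(3/2)^{i+1}$ of them and their vertex counts sum to at most $n$; since the separator contributed by a piece $P$ has size $O(\sqrt{|P|})$ and the square root is concave, Cauchy--Schwarz bounds the total contribution of level $i$ by $O(\sqrt{(3/2)^{i}n})$. Only levels $i=O(\log(1/t))$ contain internal nodes, so summing this geometric series is dominated by its last term and gives $|S|=O(\sqrt{(3/2)^{\log_{3/2}(1/t)}n})=O(\sqrt{n/t})$, as required.

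The remaining point, which I expect to be the main obstacle in a from-scratch proof, is the $O(n)$ time bound: a literal recursive implementation spends linear time at each of the $O(\log(1/t))$ levels and so runs in $O(n\log(1/t))$ time. Achieving genuinely linear time requires the more careful bookkeeping of Aleksandrov and Djidjev~\cite{ad1996} (in essence, processing pieces in decreasing order of size and amortizing the separator computations against a size potential), together with the linear-time balanced-separator routine and a careful handling of the weighted balance condition. Since the statement is precisely their result, the cleanest option here is to invoke \cite{ad1996} directly; the sketch above is meant only to record why the separator has the claimed size.
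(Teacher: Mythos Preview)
The paper does not prove this lemma at all: it is stated with the attribution \cite{ad1996} and used as a black box. Your proposal likewise ultimately invokes \cite{ad1996} for the linear-time claim, so at the level of what is actually relied upon, you and the paper agree.

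The additional content in your proposal---the recursive Lipton--Tarjan separator tree, the weight-level grouping, and the Cauchy--Schwarz bound giving $|S|=O(\sqrt{n/t})$---is a standard and correct sketch of the size bound, and it is more than the paper offers. Your handling of heavy vertices (at most $1/t$ of them, which is $O(\sqrt{n/t})$ when $t\ge 1/n$, and the trivial $S=V$ otherwise) is also fine. You are right to flag that the naive recursion gives only $O(n\log(1/t))$ time and that the genuine $O(n)$ bound is the nontrivial part requiring the machinery of \cite{ad1996}; since the paper simply cites that result, there is nothing further to compare.
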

\end{REMOVED}

\subsection{Encoding a Planar Triangulation by Permuting its Vertex Set}
Denny and Sohler~\cite{ds1997} considered the problem of encoding the connectivity information of a planar triangulation by permuting its vertex set. They have the following result: 
\begin{lemma}[\cite{ds1997}]
\label{lem:encodetriag}
Given a planar triangulation of $n$ vertices where $n > 1090$, there is an algorithm that can encode it as a permutation of its point set in $O(n)$ time, such that it can be decoded from this permutation in $O(n)$ time. 
\end{lemma}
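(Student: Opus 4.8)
The plan is to encode the triangulation by (i) fixing a canonical representative of the triangulation up to relabeling, (ii) choosing a bijection between the point set and the vertex labels of that canonical copy, and (iii) realizing this bijection as the permutation in which we list the points. Since $n!$ grows faster than the number of labeled planar triangulations on $n$ vertices (which is $2^{O(n)}$, far below $n! = 2^{\Theta(n\lg n)}$), there is in principle enough "room" in a permutation of $n$ distinct points to record both the combinatorial type of the triangulation and the correspondence between points and abstract vertices. The technical content of the lemma is that this counting slack can be exploited \emph{algorithmically} in linear time in both directions.

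Concretely, I would proceed as follows. First, run a canonical traversal of the triangulation — e.g.\ a breadth-first or depth-first order starting from a distinguished edge on the outer face, with ties among neighbors broken by local rotation order — to assign to each vertex a canonical index in $\{1,\dots,n\}$; this is the step that makes the encoding independent of the input labeling and is computable in $O(n)$ time. Second, split the index range into a "header" block of the first $\Theta(\log n / \log\log n)$ (or, more simply, $O(\sqrt{n})$) positions and a "payload" block of the remaining positions. In the payload block, the \emph{relative order} of the point coordinates (using any fixed total order on points, such as lexicographic on $(x,y)$) is free to be chosen, so it can encode a word of $\log_2((n-k)!) = \Theta(n\log n)$ bits; into this word we write a description of the canonical triangulation structure on the payload vertices — for a triangulation $O(n)$ bits suffice (e.g.\ via one of the succinct encodings cited, or simply by writing the canonical adjacency sequence), and the $\Theta(n\log n)$ budget swallows this comfortably. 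The header block is used to store the "loop closing" information: which points were promoted to the header and where the remaining global structure (the few edges incident to header vertices, the outer-face data) sits, so that decoding is unambiguous. The hypothesis $n > 1090$ is exactly what guarantees the payload block is large enough that $\log_2((n-k)!)$ exceeds the number of bits we must store, once the constants in the separator/encoding bounds are pinned down.

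For decoding, one reads off the relative order of the payload points to recover the $\Theta(n\log n)$-bit word, parses out the $O(n)$-bit canonical triangulation description, rebuilds the canonical combinatorial triangulation, and then uses the header to re-attach the header vertices and identify the outer face; reading the recovered canonical index of each point against the stored point coordinates then reconstructs the embedded triangulation on the original point set. Each of these phases is a single linear scan plus $O(n)$ work on the triangulation, giving $O(n)$ decoding time.

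The main obstacle I expect is not the counting — that is routine — but making the scheme simultaneously (a) injective, (b) efficiently invertible, and (c) tight enough in the constants that $n > 1090$ actually suffices. The delicate point is the interface between the payload permutation (which encodes bits) and the combinatorial reconstruction: one must fix a canonical traversal that is both cheap to compute and cheap to \emph{invert} given only the abstract triangulation and the header, so that the same vertex gets the same canonical index on encoding and on decoding. I would handle this by committing early to one rigid canonical order (DFS from the lexicographically smallest outer-face edge, rotation-order tie-breaking) and designing the header purely to supply the $O(1)$ "seed" data that this canonical order needs, so that everything downstream is forced; the remaining work is a careful but elementary bookkeeping argument that the payload size $\log_2((n-k)!)$ beats the $O(n)$-bit encoding cost for all $n > 1090$.
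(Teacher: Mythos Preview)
Your proposal takes a genuinely different route from the Denny--Sohler construction that the paper relies on (and sketches in the proof of Lemma~\ref{lem:encodesub}), and as written it has a real gap.

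The gap is in the point--to--vertex correspondence. Your canonical traversal assigns an index in $[n]$ to each \emph{abstract} vertex of the embedded graph. You then spend the $\log_2((n-k)!)$ bits of the payload permutation to record an $O(n)$-bit description of the combinatorial triangulation. But decoding that description gives you only an abstract rotation system on unlabeled vertices; it does not tell you which \emph{point} sits at which abstract vertex. Recovering that bijection on the payload is itself a permutation of $n-k$ items and costs $\log_2((n-k)!)$ bits, exactly the budget you have already exhausted. Put differently: if payload position $i$ equals canonical vertex $i$, the relative order is forced and encodes nothing; if the payload order is chosen to encode bits, then positions no longer carry canonical indices and you have not said how to recover them. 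The header does not save you: with $k=O(\sqrt{n})$ it contributes only $O(\sqrt{n}\log n)$ bits, nowhere near enough to pin down the rest of the correspondence, and your ``seed data'' is only the DFS start, not a global matching.

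The Denny--Sohler scheme sidesteps this entirely by using geometry during decoding. It peels off a maximal independent set $I$ of vertices of degree at most~$6$ (so $|I|\ge n/10$), retriangulates the holes, and recurses. The removed points are output last, partitioned into constant-size groups whose internal permutation encodes, for each removed vertex, which of at most $41$ local re-insertion patterns to use. When decoding a layer, one already holds the smaller triangulation \emph{with coordinates}; for each removed point $p$ one performs a point-in-triangle test to find the face of the current triangulation containing $p$, and only then consults the $O(1)$ encoded bits to decide how to split that face. Thus the coordinates themselves carry the correspondence information, and only $O(1)$ bits per vertex need to be stored in permutations --- which is why the constant $1090$ arises from a local budget calculation, not from a global $n!$ versus $2^{O(n)}$ comparison. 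Your purely information-theoretic scheme discards this geometric leverage and, as a result, has no mechanism to recover the labeling in linear time.
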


\section{Point Location in Planar Triangulations}
\label{sec:triangulated}

In this section, we show how to design succinct geometric indexes to support point location queries on a planar triangulation $G$ of $n$ vertices, $m$ edges and $f$ internal faces. 
We define a planar triangulation to be a planar subdivision in which each face (including the outer face) is a triangle. 
For simplicity, we use the term planar triangulation to refer to both the triangulation itself (with coordinates), and the embedded abstract planar graph underlying it. 
We start with our scheme of partitioning the triangulation. 
We then show how to label its vertices (i.e. how to permute its point set). 
We finally design data structures and algorithms to support point location queries.

\subsection{Partitioning a Planar Triangulation by Removing Faces}
\label{sec:partriag}

In this section, we present an approach to partition a planar triangulation by removing a set of internal faces. 

We first define the following terms. 
In a planar triangulation, two faces of the graph are {\em adjacent} if they have a common edge. 
%They are {\em incident} if they have a common vertex. 
%Thus two faces are incident if they are adjacent, but two incident faces are not always adjacent. 
A {\em face path} is a sequence of the faces of the graph such that each two consecutive faces in this sequence are adjacent. 
An {\em adjacent face component} is a set of internal faces of the graph in which there exists a face path between any two faces in this set, and there is no face in this set that is adjacent to an internal face not in this set. 
We define the {\em size} of an adjacent face component to be the number of internal faces in it. 
With these we can define the notion of graph separators consisting of faces: 

\begin{definition}
Consider a planar triangulation with $f$ internal faces. 
A {\bf $\boldmath{t}$-face separator} ($0 < t < 1$) of $G$ is a set of its internal faces of size $O(\sqrt{f/t})$ whose removal from $G$ leaves no adjacent face component of more than $tf$ faces. 
\end{definition}

We have the following lemma: 

\begin{lemma}
\label{lem:tface}
Consider a planar triangulation $G$ of $f$ internal faces. 
For any $t$ such that $0<t<1$, there is a $t$-face separator consisting of $O(\sqrt{f/t})$ faces that can be computed in $O(n)$ time. 
\end{lemma}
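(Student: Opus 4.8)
The plan is to reduce the statement to the (vertex) $t$-separator result of Lemma~\ref{lem:tseparator} via planar duality. First I would build the \emph{face adjacency graph} $G^\ast$ of the triangulation $G$: its vertex set is the set of $f$ internal faces of $G$, and two vertices are adjacent in $G^\ast$ exactly when the corresponding faces share a common edge of $G$. This graph is planar --- it is the subgraph of the planar dual of $G$ obtained by deleting the vertex representing the outer face --- and each of its vertices has degree at most $3$, since every triangle has exactly three bounding edges. Given the triangulation together with its combinatorial embedding, $G^\ast$ can be constructed in $O(n)$ time by a single traversal of the edges of $G$, recording for each internal edge the (at most two) internal faces incident to it.

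Next I would assign weight $1$ to every vertex of $G^\ast$ and apply Lemma~\ref{lem:tseparator} with parameter $t$. Since $G^\ast$ has $f$ vertices and total weight $f$, this yields, in $O(f)=O(n)$ time, a set $S$ of $O(\sqrt{f/t})$ vertices of $G^\ast$ whose removal leaves no connected component of more than $tf$ vertices. Reading $S$ back as a set of internal faces of $G$, I claim it is a $t$-face separator. Deleting the faces in $S$ does not change which of the surviving internal faces share an edge, so the face paths among surviving faces are precisely the paths of $G^\ast - S$; hence the adjacent face components of $G$ after removal of $S$ are in one-to-one correspondence with the connected components of $G^\ast - S$, and each of them therefore contains at most $tf$ faces. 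Combined with $|S| = O(\sqrt{f/t})$ and the $O(n)$ time bound (using $f = \Theta(n)$ to move freely between $O(\sqrt{f/t})$ and any equivalent form), this proves the lemma.

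The only step that calls for genuine care, rather than being routine, is the correspondence in the previous paragraph: one must verify that the notion of an ``adjacent face component'' of $G$ with the faces of $S$ removed coincides with ordinary graph connectivity in $G^\ast - S$ --- in particular that removing faces creates neither new adjacencies nor new face paths, and that the maximality clause in the definition of an adjacent face component matches the notion of a maximal connected set of vertices. The remaining ingredients (planarity of $G^\ast$, the degree-$3$ bound, and the linear-time construction of $G^\ast$ from the embedding) are standard.
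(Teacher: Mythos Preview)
Your proposal is correct and is essentially identical to the paper's proof: both form the dual graph of $G$ with the outer-face vertex removed, assign unit weights, invoke Lemma~\ref{lem:tseparator} to obtain a vertex $t$-separator of size $O(\sqrt{f/t})$, and pull it back to a set of faces of $G$, noting that connected components of $G^\ast - S$ correspond exactly to adjacent face components of $G\setminus S$. The additional remarks you make (the degree-$3$ bound on $G^\ast$ and the explicit linear-time construction of $G^\ast$) are harmless elaborations but are not used in the argument.
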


\begin{proof}
We consider graph $G^*$, which is the dual graph of $G$ excluding the vertex corresponding to the outer face of $G$ and its incident edges. 
Then $G^*$ has $f$ vertices. 
As $G$ is a planar triangulation, $G^*$ is a simple planar graph. 
By Lemma~\ref{lem:tseparator} (we simply let the weight of each vertex be $1$), there exists a $t$-separator, $S^*$, for $G^*$, consisting of $O(\sqrt{f/t})$ vertices. 
We use $S$ to denote the set of faces of $G$ corresponding to the vertices in $S^*$. 
Then $S$ has $O(\sqrt{f/t})$ faces. 
As the removal of $S^*$ from $G^*$ leaves no connected component of more than $tf$ vertices, the removal of $S$ from $G$ leaves no adjacent face component of more than $tf$ faces. 
Therefore, $S$ is a $t$-face separator of $G$. 
\qed
\end{proof}

\begin{figure}[t]
\centering
\epsfig{file = 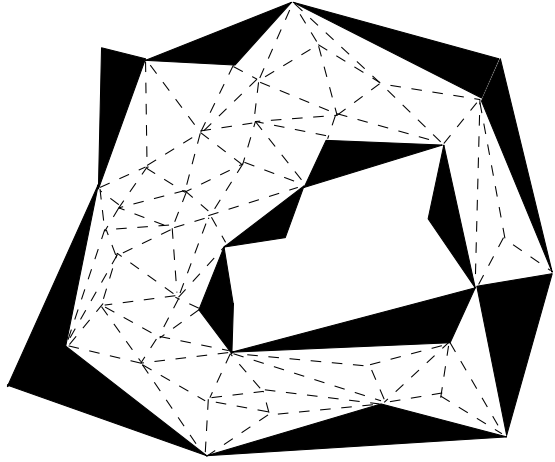}
\caption{A typical adjacent face component, triangulated using dashed lines. The black triangles are separator triangles that are adjacent to the triangles of this component. The other parts of the graph are not shown.}
\label{figure:comp}
\end{figure}

Figure~\ref{figure:comp} shows a sample adjacent face component. 
We define the {\em boundary} of an adjacent face component to be a set of edges in which each edge is shared by an internal face of this component and a face of the $t$-face separator. 
Thus the boundary of an adjacent face component consists of one or more simple cycles: one simple cycle which is the outer face of the adjacent face component, and at most one simple cycle corresponding to each adjacent face component inside it. 
The simple cycle corresponds to the outer face does not share an edge with any cycle inside; otherwise, the cycle inside is simply part of the outer face. 
A useful observation is that any two such simple cycles do not have a common edge, because otherwise, there are two faces of $G$ sharing an edge that are in two different adjacent face components, which contradicts the definition of adjacent face components. 

As each adjacent face component obtained using Lemma~\ref{lem:tface} may have $1$ to $t n$ vertices, there may be as many as $O(n)$ of them. 
To further bound the number of subgraphs we partition the graph into, we have the following lemma:

\begin{lemma}
\label{lem:componentno}
Consider a planar triangulation $G$ with $f$ internal faces and a $t$-face separator $S$ constructed using Lemma~\ref{lem:tface}. 
The number of adjacent face components of $G\setminus S$ is $O(\sqrt{f/t})$. 
\end{lemma}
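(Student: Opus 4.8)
The plan is to carry out the argument in the dual graph, reusing the setup from the proof of Lemma~\ref{lem:tface}. Recall that $G^*$ is the dual of $G$ with the vertex for the outer face (and its incident edges) deleted, so $G^*$ has exactly $f$ vertices, one per internal face of $G$, two vertices of $G^*$ being adjacent precisely when the corresponding internal faces of $G$ share an edge; moreover $G^*$ is a simple planar graph. Let $S^*$ be the $t$-separator of $G^*$ used to construct $S$, so that $|S^*| = |S| = O(\sqrt{f/t})$ by Lemma~\ref{lem:tface}. Under this correspondence the adjacent face components of $G \setminus S$ are exactly the connected components of $G^* \setminus S^*$ (a fact already used in the proof of Lemma~\ref{lem:tface}), so it suffices to bound the number of connected components of $G^* \setminus S^*$.

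First I would record two facts about $G^*$. (i) Every vertex of $G^*$ has degree at most $3$: in the full dual of $G$ the vertex for an internal face $F$ has degree equal to the number of edges on the boundary of $F$, which is $3$ because $G$ is a triangulation, and deleting the outer-face vertex only decreases degrees. (ii) $G^*$ is connected: when $f \ge 2$ the triangulation $G$ has at least four vertices and is therefore $3$-connected, so its full dual is $3$-connected, and deleting one vertex from a $3$-connected graph leaves a connected (indeed $2$-connected) graph; the cases $f \le 1$ are trivial, since $\sqrt{f/t} \ge 1$ whenever $0 < t < 1$.

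Given these, the count is immediate. Delete the vertices of $S^*$ from $G^*$ one at a time. Removing a vertex $v$ breaks the connected component currently containing $v$ into at most $\deg(v)$ pieces, hence increases the total number of connected components by at most the current degree of $v$, which is at most its degree in $G^*$ and so at most $3$ by~(i). Starting from the single connected component of $G^*$ guaranteed by~(ii), after deleting all $|S^*|$ vertices we are left with at most $1 + 3|S^*|$ connected components, and this is $O(\sqrt{f/t})$ by Lemma~\ref{lem:tface}. Translating back through the duality, $G \setminus S$ has $O(\sqrt{f/t})$ adjacent face components, as claimed.

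The one ingredient that really matters is fact~(i): it is the bounded degree of the dual of a triangulation that prevents an otherwise innocuous separator from shattering $G^*$ into too many components (for a graph with a single high-degree vertex, an $O(\sqrt{f/t})$-vertex separator could leave arbitrarily many components). Everything else is bookkeeping. An essentially equivalent alternative would argue directly in $G$: each separator triangle owns only its three edges, so the total number of edges lying on boundaries of adjacent face components is at most $3|S|$, while every adjacent face component has at least one such boundary edge because $G^*$ is connected, giving at most $3|S| = O(\sqrt{f/t})$ components. I would nonetheless prefer the dual formulation above, as it keeps the reasoning uniform with the proof of Lemma~\ref{lem:tface}.
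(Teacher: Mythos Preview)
Your proof is correct. Your primary argument---work in the dual $G^*$, use that its vertices have degree at most~$3$, and peel off the vertices of $S^*$ one at a time---is a different route from the paper's, which instead counts boundary edges directly: every adjacent face component contributes at least one boundary edge, no boundary edge is shared by two components, and every boundary edge lies on a separator triangle (or on the outer triangle), so there are at most $3|S|+3$ of them. Amusingly, you already sketch exactly this argument in your final paragraph as the ``essentially equivalent alternative,'' and that is in fact what the paper does.

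The two approaches trade off in minor ways. The paper's edge count is shorter and avoids having to argue that $G^*$ is connected, but it tacitly uses that every component has at least one boundary edge, which is really the same connectivity fact in disguise. Your dual-graph argument makes the role of bounded degree completely explicit and transfers immediately to any setting where the dual has bounded degree; on the other hand you spend a sentence on $3$-connectedness of maximal planar graphs that the paper sidesteps. One small tightening: removing a vertex of current degree $d$ increases the component count by at most $d-1$ (the component containing $v$ splits into at most $d$ pieces, replacing one component), so your bound can be sharpened to $1+2|S^*|$, though of course this does not affect the $O(\sqrt{f/t})$ conclusion.
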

\begin{proof}
%By the definition of adjacent face components, the outer face of an adjacent face component is always a simple cycle of the graph. 
%Any two such simple cycles do not have a common edge, because otherwise, there are two faces of $G$ sharing an edge that are in two different adjacent face components, which contradicts the definition of adjacent face components. 
%As the number of adjacent face components is always less than the number of edges of these simple cycles, we need only bound the number of edges of $G$ that are in such simple cycles. 
As the number of adjacent face components is always less than the number of edges in their boundaries, and no edge is in the boundaries of two different components, we need only bound the number of such edges. 
Observe that these edges are from the set $S\cup\{\{v_0, v_1\}, \{v_1,v_{n-1}\}, \{v_{n-1}, v_0\}\}$, where $v_0$, $v_1$ and $v_n$ are the three vertices on the outer face of $G$. 
Therefore, the number of such edges is at most linear in the size of $S$, which is $O(\sqrt{f/t})$. \qed
\end{proof}

A {\em boundary vertex} of an adjacent face component is a vertex on the boundary of the adjacent face component, and an {\em internal vertex} is a vertex inside it. 
A vertex {\em belongs to} an adjacent face component iff it is either a boundary vertex or an internal vertex of this component. 
The {\em duplication degree} of each vertex is the number of adjacent face components it belongs to. 
Thus the duplication degree of an internal vertex is $1$. 
To bound the sum of the duplication degrees of boundary vertices, we have the following lemma:

\begin{lemma}
\label{lem:dup}
Consider a planar triangulation $G$ with $f$ internal faces and a $t$-face separator $S$ constructed using Lemma~\ref{lem:tface}. 
The sum of the duplication degrees of all its boundary vertices is $O(\sqrt{f/t})$. 
\end{lemma}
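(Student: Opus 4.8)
The plan is to charge the duplication degree of each boundary vertex to incidences between that vertex and the faces of the separator $S$, and then to invoke $|S| = O(\sqrt{f/t})$ from Lemma~\ref{lem:tface}; any $t$-face separator would do, since only its size enters.

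The key step is a local analysis at a single boundary vertex $v$. Write $\deg_S(v)$ for the number of faces of $S$ incident to $v$. Going around $v$ in rotation order, the faces incident to $v$ that do not lie in $S$ fall into maximal runs; two consecutive faces of such a run share an edge incident to $v$ that is not a separator edge, so they belong to the same adjacent face component, and hence each run lies entirely in a single component. Conversely, every component that $v$ belongs to contains a face incident to $v$: being a boundary vertex, $v$ lies on a boundary cycle, hence is an endpoint of a boundary edge of that component, and such an edge bounds a face of that component. Since distinct components are disjoint as sets of faces, the map sending a component to a run contained in it is injective, so the duplication degree $d_v$ is at most the number of non-$S$ runs around $v$. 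When $v$ is not on the outer triangle, the rotation order is a cycle, so the number of non-$S$ runs equals the number of $S$-runs and is at most $\deg_S(v)$; when $v$ is one of the three outer-triangle vertices, its incident internal faces form a fan, costing one extra run. In every case $d_v \le \deg_S(v) + 1$.

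To finish I would count the boundary vertices and sum. Every boundary edge is an edge of a face of $S$ or one of the three outer-triangle edges (the observation already used in the proof of Lemma~\ref{lem:componentno}), so each boundary vertex is either incident to a face of $S$ --- and there are at most $3|S|$ such vertices, since every face of $S$ is a triangle --- or is one of the three outer-triangle vertices; in the latter case, if the vertex is incident to no face of $S$ then all its incident internal faces lie in one component and it contributes only $1$. Hence, summing over the boundary vertices $v$,
\[ \sum_v d_v \le 3 + \sum_{v:\,\deg_S(v)\ge 1}\bigl(\deg_S(v)+1\bigr) \le 3 + 3|S| + 3|S| = O(|S|) = O(\sqrt{f/t}), \]
using $\sum_v \deg_S(v) = 3|S|$ and that at most $3|S|$ vertices satisfy $\deg_S(v)\ge 1$, both because the faces of $S$ are triangles.

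The only delicate point is the local run-counting argument: one must be certain that around $v$ a maximal run of non-separator faces cannot straddle two distinct components, and must treat the three outer-triangle vertices separately because they see a fan rather than a full cycle of incident faces. Everything after that is bookkeeping that relies solely on the faces being triangles and on $|S| = O(\sqrt{f/t})$.
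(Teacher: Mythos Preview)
Your argument is correct, but it takes a different route from the paper's. The paper observes that the boundary of each adjacent face component is a union of simple cycles, and that whenever a vertex $v$ is a boundary vertex of a component it lies on at least one such cycle; hence the sum of duplication degrees is at most the number of vertex--cycle incidences, which equals the total number of edges on these cycles. Since every boundary edge is an edge of some face in $S$ (or one of the three outer edges) and no edge lies on two distinct cycles, this total is at most $3|S|+3 = O(\sqrt{f/t})$. So the paper's proof is a direct edge-count, two lines long.

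Your approach instead performs a local analysis at each vertex: you bound $d_v$ by the number of non-$S$ runs in the rotation at $v$, then bound that by $\deg_S(v)+1$, and finally sum. This is a bit longer but arguably more robust: it does not rely on the boundary decomposing into edge-disjoint simple cycles, and it would adapt more easily to settings where the face--vertex incidence structure is the natural thing to count. The paper's argument buys brevity; yours buys a cleaner handle on the outer-triangle corner case (which the paper's own proof of Lemma~\ref{lem:dup} actually omits, though it appears in the proof of Lemma~\ref{lem:componentno}). One small slip: your displayed equality $\sum_v \deg_S(v) = 3|S|$ should be an inequality when the sum is restricted to boundary vertices, but of course the $\le$ direction is all you need.
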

\begin{proof}
Recall that the boundary of an adjacent face component consists of a set of simple cycles. 
Thus the duplication degree of a boundary vertex is the number of simple cycles it is in. 
Therefore, the sum of the duplication degrees of all the boundary vertices is equal to the number of edges in all such simple cycles. 
As each such edge is an edge of a face of $S$, and no edge exists in two different cycles, the number of such edges is at most three times the number of faces of $S$, which is $O(\sqrt{f/t})$. \qed
\end{proof}

\begin{REMOVED}
As each adjacent face component obtained using Lemma~\ref{lem:tface} may have $1$ to $t n$ vertices, there may be as many as $O(n)$ of them. 
To further limit the number of subgraphs we partition the graph into, we introduce the notion of {\em regions}. 
We define a region to be the union of one or more adjacent face components. 
We define the {\em size} of a region to be the number of internal faces in it. 
We have the following lemma. 

\begin{lemma}
\label{lem:tregion}
Consider a planar triangulation $G$ with $n$ vertices and $f$ internal faces and a $t$-face separator, $S$, of $G$ consisting of $O(\sqrt{f/t})$ faces. 
There exists an $O(n)$-time algorithm that can group the adjacent face components obtained by removing $S$ from $G$ into at most $2/t$ regions, such that the size of each region is at least $tf/2$ and less than $3tf/2$. 
\end{lemma}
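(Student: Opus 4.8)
The plan is to group the adjacent face components greedily while controlling both the number of resulting regions and their sizes. First I would recall that, by Lemma~\ref{lem:componentno}, removing the $t$-face separator $S$ leaves only $O(\sqrt{f/t})$ adjacent face components, and each such component has size between $1$ and $tf$ (since $S$ is a $t$-face separator). This size bound on individual components is the crucial input: no single component can single-handedly overshoot the target region size $3tf/2$, which is what makes a greedy packing argument work.

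The main step is the grouping procedure itself. I would process the components in any order, maintaining a ``current region'' to which I keep adding components until its accumulated size first reaches $tf/2$; at that moment I close the region off and start a new one. Because every component has size at most $tf < tf/2 + tf/2 \le 3tf/2 - tf/2 \cdot$(adjust constants), adding one more component to a region whose size is just below $tf/2$ can push it to strictly less than $tf/2 + tf = 3tf/2$, so every closed region has size in $[tf/2, 3tf/2)$ as required. The only region that might fail the lower bound is the last one, if the leftover components sum to less than $tf/2$; this I would handle by merging that final leftover batch into the previously closed region (its size was $<3tf/2$ and we add $<tf/2$ more, so it stays $<2tf \le 3tf/2$ — here I would need to be slightly more careful with the constants, perhaps closing regions at threshold $tf/2$ and observing the merged region stays below $3tf/2$ because component sizes are bounded, or simply choosing the closing threshold and the stated bounds so the arithmetic goes through). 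Since the total size over all components is exactly $f$ and each region has size at least $tf/2$, the number of regions is at most $f/(tf/2) = 2/t$. The running time is $O(n)$ because we touch each component once and the total size of all components is $f = O(n)$.

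The main obstacle is purely the constant-juggling at the boundary case: ensuring simultaneously that (i) each region has size $\ge tf/2$, (ii) each region has size $<3tf/2$, and (iii) the count is $\le 2/t$, given that individual components can be as large as $tf$. The clean way is to set the ``stop adding'' threshold at exactly $tf/2$, so that before adding the final component a region has size $<tf/2$ and after adding it has size $<tf/2 + tf = 3tf/2$, giving (ii); then (i) holds for every region except possibly the last, which we absorb into its predecessor (whose size was $<3tf/2$; but absorbing could overshoot, so instead I would absorb the last underfull region's components one at a time back into earlier regions, or — cleaner — note that if there is only a handful of leftover components their total is $<tf/2$ and the predecessor had size $\ge tf/2$, and adjust the stated upper bound if necessary). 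I would then double-check that $2/t$ is an integer bound statement interpreted as $\lfloor 2/t \rfloor$ or simply $O(1/t)$ suffices for the later application, which removes the delicacy entirely.
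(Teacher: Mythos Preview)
Your greedy packing idea is exactly the paper's approach, and your count bound $f/(tf/2)=2/t$ and linear running time are argued the same way. The one place where you hedge --- the final merge --- is also the one place where your version does not quite close: a closed region in your scheme can have size up to just under $3tf/2$, so absorbing a leftover of size $<tf/2$ into it may give size approaching $2tf$, overshooting the claimed bound.

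The paper's fix is a single clean twist you are missing: handle the two kinds of components differently. If a component already has size $\ge tf/2$, declare it a region by itself (its size is then in $[tf/2,\,tf]$). Only components of size $<tf/2$ are fed into the running partial region; since both the partial region (before the addition) and the incoming component have size $<tf/2$, every closed partial region has size in $[tf/2,\,tf)$. Thus \emph{all} closed regions have size at most $tf$, and merging the final leftover (size $<tf/2$) into any one of them yields size $<tf + tf/2 = 3tf/2$, as required. With this adjustment your proof is complete and matches the paper's.
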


\begin{proof}
We first visit each adjacent face component once and create a set of regions from them. 
During this process, there may exist one region whose size is less than $tf/2$, and we call it a {\em partial region}. 
We keep track of the number of the internal faces in the partial region when it exists. 

For each adjacent face component, we first count the number of faces in it. 
If its size is at least $tf/2$, we make it a region and visit the next adjacent face component. 
If its size is less than $tf/2$, we check if there is a partial region. 
If there is not, we create a new partial region consisting of this adjacent face component. 
If there is, we add this adjacent face component into the partial region and update the information of the size of the partial region. 
If its size is greater than or equal to $tf/2$, we make it a region. 
At the end of this process, we have a set of of regions whose sizes are at least $tf/2$ and at most $tf$, and possibly one partial region with less than $tf/2$ internal faces. 
We merge the partial region with any region, and the size of the resulting region is at least $tf/2$ and less than $3tf/2$. 

The above process clearly takes $O(n)$ time. \qed
\end{proof}

We define the {\em boundary} of an adjacent face component (or region) to be a set of edges in which each edge is both an edge of one internal face of the adjacent face component (or region) and an edge of a face in the $t$-face separator. 
Thus the boundary of an adjacent face component (or region) consists of one or more simple cycles: one simple cycle which is the outer face of the adjacent face component (or region), and one simple cycle corresponding to each adjacent face component (or region) inside it. 
A {\em boundary vertex} of an adjacent face component (or region) is a vertex on the boundary of the adjacent face component (or region), and an {\em internal vertex} is a vertex inside it. 
The {\em duplication degree} of each vertex is the number of regions it belongs to. 
Thus the duplication degree of an internal vertex is $1$. 
To bound the sum of the duplication degrees of boundary vertices of the regions, we have the following lemma:

\begin{lemma}
Consider a planar triangulation $G$ with $f$ internal faces and a $t$-face separator, $S$, of $G$ consisting of $O(\sqrt{f/t})$ faces, and partition it into regions using the approach above. 
The sum of the duplication degrees of all the boundary vertices of all its regions is $O(\sqrt{f/t})$. 
\end{lemma}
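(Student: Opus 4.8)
The plan is to reduce directly to Lemma~\ref{lem:dup}: passing from adjacent face components to regions only merges components and collapses boundary cycles, so no vertex's duplication degree increases and the set of boundary vertices does not grow.

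First I would verify two containments. (i) Every boundary vertex of a region is a boundary vertex of some adjacent face component: if $v$ lies on the boundary of a region $R$, then $v$ is incident to an edge shared by an internal face $F$ of $R$ and a face of the separator $S$; since $R$ is a union of complete adjacent face components, $F$ lies in one such component $C\subseteq R$, and the same edge witnesses that $v$ is a boundary vertex of $C$. (ii) For every vertex $v$, the number of regions $v$ belongs to is at most the number of adjacent face components $v$ belongs to: if $v$ belongs to a region $R$ then, as in (i), $v$ is incident to a face of $R$, hence to a face of the unique component of $R$ containing that face, so $v$ belongs to that component; and distinct regions, being formed from disjoint groups of components, yield distinct components.

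Combining (i) and (ii), the sum of the duplication degrees over all boundary vertices of all regions is at most the sum of the duplication degrees over all boundary vertices of all adjacent face components, which is $O(\sqrt{f/t})$ by Lemma~\ref{lem:dup}.

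I do not expect a real obstacle here; the only point needing care is keeping the two membership relations (for regions versus for components) straight, and using that regions are formed by grouping whole adjacent face components, so that they are pairwise disjoint and every face of a region lies in exactly one component of that region. An alternative, equally short route mimics the proof of Lemma~\ref{lem:dup} directly: the boundary of a region is a union of edge-disjoint simple cycles all of whose edges are edges of faces of $S$, and no edge lies on two such cycles, so the sum of the duplication degrees is at most the number of such edges, hence at most $3|S|=O(\sqrt{f/t})$.
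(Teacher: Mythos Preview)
Your proposal is correct, and in fact the alternative route you sketch at the end---bounding the total by the number of edges on the boundary cycles, each of which is an edge of a separator triangle, giving at most $3|S|=O(\sqrt{f/t})$---is precisely the paper's proof of this lemma.

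Your primary approach, reducing to the adjacent-face-component version (Lemma~\ref{lem:dup}) via the observations that (i) every region boundary vertex is already a component boundary vertex and (ii) grouping whole components into disjoint regions can only decrease each vertex's duplication degree, is a correct and genuinely different argument. The reduction has the conceptual advantage of making the monotonicity under coarsening explicit, so it would apply verbatim to any further grouping of regions; on the other hand it relies on the component version as a black box, whereas the paper's direct edge-count is self-contained and a line shorter. Both arguments ultimately rest on the same combinatorial fact that every boundary edge comes from a triangle of $S$, so the difference is one of packaging rather than substance.
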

\begin{proof}
Recall that the boundary of a region consists of a set of simple cycles as stated above. 
Thus the duplication degree of a boundary vertex is the number of simple cycles it is in. 
Therefore, the sum of the duplication degrees of all the boundary vertices is equal to the number of edges in all such simple cycles. 
As each such edge is an edge of a distinct internal triangle in $S$, the number of such edges is at most three times the number of internal triangles in $S$, which is $O(\sqrt{f/t})$. \qed
\end{proof}
\end{REMOVED}

\subsection{The Two-Level Partitioning Scheme}
\label{sec:twolevel}
We now use Lemma~\ref{lem:tface} to partition the input graph $G$. 
Recall that $G$ has $n$ vertices and $f$ internal faces. 
Thus $f = 2n-5$. 

We first use Lemma~\ref{lem:tface} to partition $G$. 
We choose $t=(\lg^a f) / f$, where $a$ is a positive constant parameter that we will fix later. 
Then the $t$-face separator, $S$, has $O(\sqrt{f/t}) = O(f / \lg^{a/2} t)$ faces and thus $O(n / \lg^{a/2} n)$ vertices. 
We call each adjacent face component of $G\setminus S$ a {\em region}. 
By Lemma~\ref{lem:componentno}, there are  $r=O(\sqrt{f/t}) = O(f / \lg^{a/2} t)$ regions. 
Each region has at most $tf = \lg^a f$ internal faces, and thus $O(\lg^a n)$ vertices. 
We use $R_i$ to denote the $i${\kth} region of $G$ (the relative order of regions does not matter). 
We call this the {\em top-level partition} of $G$. 

We perform another level of partitioning. 
For each region $R_i$, we triangulate the graph that consists of $R_i$ and the triangular outer face of $G$, and we denote the resulting planar triangulation $R_i'$. 
Thus $R_i'$ has $O(\lg^a n)$ vertices. 
We partition $R_i'$ to smaller ``regions'' called {\em subregions} so that each subregion has $O(\lg^b n)$ vertices, where $b$ is a positive constant parameter smaller than $a$ that we will fix later. 
We use $R_{i,j}$ to denote the $j${\kth} region of $R_i$ (the relative order of subregions in the same region does not matter). 
To do this, let $n_i$ be the number of vertices in $R_i$. 
If $n_i > \lg^b n$ (otherwise, the entire region is also a subregion and the separator has size $0$), we choose $t_i= (\lg^b n) / n_i$ and use Lemma~\ref{lem:tface} to construct a $t_i$-face separator, $S_i$, for each region $R_i$. 
Then $S_i$ has $O(\sqrt{n_i /(\lg^b n / n_i)}) = O(n_i / \lg^{b/2} n)$ vertices. 
The sum of the numbers of vertices in all the $S_i$'s is $\Sigma_{i=1}^rO(n_i / \lg^{b/2} n) = O(n /\lg^{b/2} n)$. 
By Lemma~\ref{lem:componentno}, there are  $O(\sqrt{n_i/t_i}) = O(n_i / \lg^{b/2} n)$ subregions in $R_i$.
%We then choose $t'=\lg^b f$, where $b$ is a constant parameter that we will fix later, and use Lemma~\ref{lem:tface} and Lemma~\ref{lem:tregion} to construct $t'$-separators for each region (thus $b < a$). 
%This way we partition each region into smaller ``regions'' called {\em subregions}. 
Therefore, the total number of subregions in $G$ is $\Sigma_{i=1}^rO(n_i / \lg^{b/2} n) = O(n /\lg^{b/2} n)$. 
We call this the {\em bottom-level partition} of $G$.

\subsection{The Labeling of the Vertices}
\label{sec:label}

We now design a labeling scheme for the vertices based on the two-level partition in Section~\ref{sec:twolevel}. 
This labeling scheme assigns a distinct number from the set $[n]$ to each vertex $x$ of the graph\footnote{We use $[i]$ to denote the set $\{1, 2, ..., i\}$.}. 
We call this number the {\em graph-label} of $x$. 
For each region $R_i$, this labeling scheme also assigns a distinct number from the set $[n_i]$ to each vertex $x$ in $R_i$, where $n_i$ is the number of vertices in this region. 
We call this number the {\em region-label} of $x$. 
For each subregion $R_{i,j}$, a unique number from the set $[n_{i,j}]$ is assigned to each vertex in $R_{i,j}$, where $n_{i,j}$ is the number of vertices in this subregion. 
We call this number the {\em subregion-label} of $x$.

Observe that, although each vertex $x$ has one and only one graph-label, it may have zero, one or several region-labels or subregion-labels. 
This is because each vertex may belong to more than one region or subregion, or only belong to the $t$-separator of $G$ or a $t_i$-separator of a region $R_i$ of $G$. 
%Each vertex may not necessarily have a region-label or subregion-label, because it may only belong to the $t$-separator of $G$ or a $t_i$-separator of a region $R_i$ of $G$. 

We assign the labels from bottom up. 
We first assign the subregion-labels. 
Given subregion $R_{i,j}$, 
we use Lemma~\ref{lem:encodetriag} to permute its vertices (we have to surround $R_{i,j}$ using a triangle and triangulate the resulting graph to use this lemma, but as the vertices of the added triangle are always the last three vertices when permuted, this does not matter). 
If a vertex $x$ in $R_{i,j}$ is the $k${\kth} vertex in this permutation, then the subregion-label of $x$ in $R_{i,j}$ is $k$. 

To assign a region-label to a vertex $x$ that belongs to region $R_i$, there are two cases. 
First, we consider the case where $x$ belongs to at least one subregion in $R_i$. 
%Assume that there are $h_i$ such vertices. 
Let $h_i$ be the number of vertices in $R_i$ that belongs to at least one subregion of $R_i$. 
We assign a distinct number from $[h_i]$ to each such vertex as its region-label in the following way:  
We visit each subregion $R_{i, j}$, for $i = 1, 2, \cdots, u_i$, where $u_i$ is the number of subregions in $R_i$. 
When we visit $R_{i,j}$, we check all its vertices sorted by their subregion-labels in increasing order. 
We output a vertex of $R_{i,j}$ iff we have not checked this vertex before (i.e. it does not belong to subregions $R_{i, 1}, R_{i,2}, \cdots, R_{i, j-1}$). 
This way we output each vertex that belongs to at least one subregion in $R_i$ exactly once, and we assign the number $k$ to the $k${\kth} vertex we output, and this number is its region-label in $R_i$. 
Second, we consider the case where $x$ does not belong to any subregion in $R_i$. 
There are $n_i - h_i$ such vertices. 
We assign a distinct number from $\{h_i+1, h_i+2, \cdots, n_i\}$ to each of them in an arbitrary order, and the numbers assigned are their region-labels.

We assign graph-labels to the vertices using an approach similar to the one in the previous paragraph. 
More precisely, we permute the $h$ vertices that have region-labels in the order we first visit them when we check all the vertices by region. 
The $k${\kth} vertex in such a permutation has graph-label $k$. 
We assign a distinct graph-label from $\{h+1, h+2, \cdots, n\}$ to the rest of the vertices of the graph in an arbitrary order.

We now show how to perform constant-time conversions between the graph-labels, region-labels and subregion-labels of the vertices of $G$. We have the following lemma:

\begin{lemma}
\label{lem:labconversion}
There is a data structure of $o(n)$ bits such that given a vertex $x$ as a subregion-label $k$ in subregion $R_{i,j}$, the region-label of $x$ in $R_i$ can be computed in $O(1)$ time. 
Similarly, there is a data structure of $o(n)$ bits such that given a vertex $x$ as a region-label $k$ in region $R_i$, the graph-label of $x$ can be computed in $O(1)$ time if $a > 2$. 
\end{lemma}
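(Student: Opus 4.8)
The plan is to build, for each of the two conversions, a small collection of \emph{sparse} succinct structures whose total size is governed by the number of ``duplicated'' vertex occurrences, which the separator bounds of Section~\ref{sec:partriag} force to be $o(n)$; a naive per-occurrence table would be far too large.

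First, the subregion$\to$region conversion. Recall how region-labels were assigned in Section~\ref{sec:label}: the subregions $R_{i,1},\dots,R_{i,u_i}$ are scanned in order, and within $R_{i,j}$ the vertices are scanned in increasing subregion-label order, each vertex being emitted only the first time it is seen; the $q$\textsuperscript{th} emitted vertex of $R_i$ gets region-label $q$. Hence, if the vertex $x$ with subregion-label $k$ in $R_{i,j}$ is \emph{new} there (does not occur in $R_{i,1},\dots,R_{i,j-1}$), its region-label equals $c_{i,j}+\bigl(\text{number of new-in-}R_{i,j}\text{ vertices with subregion-label}\le k\bigr)$, where $c_{i,j}$ is the number of distinct vertices occurring in $R_{i,1}\cup\dots\cup R_{i,j-1}$. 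I would therefore store: for each region $R_i$, the array $(c_{i,1},\dots,c_{i,u_i})$ ($O(\lg\lg n)$ bits per entry, since $c_{i,j}\le n_i=O(\lg^a n)$); for each subregion $R_{i,j}$, a bit vector $D_{i,j}$ of length $n_{i,j}$ whose $k$\textsuperscript{th} bit is $1$ iff the vertex with subregion-label $k$ is an \emph{old} occurrence (occurs in an earlier subregion of $R_i$), stored by Lemma~\ref{lem:ranksel}(b); and, for each subregion, an array $P_{i,j}$ listing the region-labels of those $d_{i,j}=\rankop_{D_{i,j}}(1,n_{i,j})$ old occurrences in subregion-label order. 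A query $(i,j,k)$ then takes $O(1)$ time: if $D_{i,j}[k]=1$ return $P_{i,j}[\rankop_{D_{i,j}}(1,k)]$, otherwise return $c_{i,j}+k-\rankop_{D_{i,j}}(1,k)$. The region$\to$graph conversion is entirely symmetric, with (region, subregion) replaced by (graph, region): a global array $(c_1,\dots,c_r)$ of prefix-counts of distinct vertices over $R_1,\dots,R_{i-1}$, a sparse bit vector $E_i$ of length $n_i$ marking the region-labels of vertices of $R_i$ that also occur in some earlier region, and an array $Q_i$ of the graph-labels of those old occurrences; the query formula is the same with $D_{i,j},c_{i,j}$ replaced by $E_i,c_i$.

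For the space bounds I would use Lemmas~\ref{lem:componentno} and~\ref{lem:dup}: summed over all (sub)regions, $\sum_{i,j}n_{i,j}=O(n)$ and $\sum_i n_i=O(n)$, the total number of subregions is $O(n/\lg^{b/2}n)$, $r=O(n/\lg^{a/2}n)$, and the total number of old occurrences $\sum_{i,j}d_{i,j}$ is at most the sum of the duplication degrees of boundary vertices over all bottom-level partitions, which is $O(n/\lg^{b/2}n)=o(n)$; likewise $\sum_i e_i=O(n/\lg^{a/2}n)=o(n)$. Then the $D_{i,j}$'s cost $\sum_{i,j}\!\bigl(\lg{n_{i,j}\choose d_{i,j}}+O(n_{i,j}\lg\lg n_{i,j}/\lg n_{i,j})\bigr)$ bits; the first sum is $\le\sum_{i,j}d_{i,j}\cdot O(\lg\lg n)=o(n)$, and since each $n_{i,j}=O(\lg^b n)$ makes $\lg\lg n_{i,j}/\lg n_{i,j}=o(1)$, the second sum is $o\bigl(\sum_{i,j}n_{i,j}\bigr)=o(n)$; the arrays $C_i$ and $P_{i,j}$ add $O(n/\lg^{b/2}n)\cdot O(\lg\lg n)=o(n)$ bits. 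For the region$\to$graph structures, $E_i$ is $o(n)$ by the same computation, but $(c_1,\dots,c_r)$ and the $Q_i$'s now hold $\Theta(\lg n)$-bit values, so they cost $O(r\lg n)+O\bigl(\sum_i e_i\cdot\lg n\bigr)=O(n\lg^{1-a/2}n)$ bits, which is $o(n)$ \emph{precisely when} $a>2$ --- this is where the hypothesis is used. Pointers locating the per-region and per-subregion blocks fit the same budget under the parameter choice of Section~\ref{sec:twolevel}, and everything is built during the $O(n)$-time labeling pass, so preprocessing stays $O(n)$.

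The main obstacle is the $o(n)$-bit budget: the obvious solution --- storing, for each vertex occurrence, its label in the parent --- costs $\Theta(n\lg\lg n)$ bits at the bottom level and $\Theta(n\lg n)$ at the top level, both $\omega(n)$. The crux is to recognize that only the $o(n)$ duplicated boundary-vertex occurrences are ``irregular'', to store full information (the arrays $P_{i,j},Q_i$) only for those, and to recover the label of every \emph{new} occurrence from a single per-(sub)region counter plus one $\rankop$ query on a \emph{sparse} bit vector. Verifying the counting identity for the region- and graph-labels, and checking that the resulting sparse structures really do sum to $o(n)$ --- in particular that the $\Theta(\lg n)$-bit data at the top level is $o(n)$ only under $a>2$ --- is the heart of the proof.
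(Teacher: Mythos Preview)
Your approach is essentially the paper's: mark each vertex occurrence as ``new'' or ``old'' with a sparse bit vector, recover the label of a new occurrence by a rank query (your $c_{i,j}+\rankop_{D_{i,j}}(0,k)$ is exactly the paper's $\rankop_{C_i}(1,l)$ after concatenation), and store the labels of old occurrences explicitly, with the space dominated by the $O(n/\lg^{b/2}n)$ duplication sum from Lemma~\ref{lem:dup} and the $a>2$ condition arising for the same reason. The one organizational difference is that the paper concatenates all subregion bit vectors into a single $C_i$ per region (plus a unary vector $B_i$ encoding the $n_{i,j}$'s to locate subregion boundaries) and applies Lemma~\ref{lem:ranksel}(b) once, whereas you keep a separate $D_{i,j}$ per subregion; your sentence about ``pointers locating the per-subregion blocks fit the same budget'' is the only soft spot, since with $O(n/\lg^{b/2}n)$ subregions and $\Theta(\lg n)$-bit pointers this is $\Theta(n\lg^{1-b/2}n)=\omega(n)$ for the paper's choice $b=1$ --- you need the same concatenation-plus-marker-bitvector trick the paper uses to address the blocks in $o(n)$ bits.
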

\begin{proof}
We first show how to prove the first claim of this lemma. 

Recall that we use $u_i$ to denote the number of subregions in $R_i$ and $h_i$ to denote the number of vertices in $R_i$ that have subregion-labels.  
We denote the number of regions of $G$ by $r$. 
We denote the number of vertices of subregions $R_{i,1}, R_{i,2}, \cdots, R_{i, u_i}$ by $n_{i,1}, n_{i,2}, \cdots, n_{i, {u_i}}$, respectively, where $u_i$ is the number of subregions in $R_i$.  
Let $n_i'$ be $\sum_{j=1}^{u_i} n_{i,j}$. 
As no internal vertex of a subregion occurs in another subregion, in order to bound $n_i'$, we need only consider the boundary vertices of the subregions in $R_i$. 
Then by Lemma~\ref{lem:dup}, we have $n_i' = h_i + O(\sqrt{n_i / t_i}) = h_i + O(n_i/ \lg^{b/2} n) \le n_i + O(n_i/ \lg^{b/2} n)$. 

We consider a conceptual array, $A_i$, of length $n_i'$ for each region $R_i$. 
It is the concatenation of the following conceptual arrays. 
For each subregion $R_{i,j}$, we construct a conceptual array $A_{i,j}$ in which $A_{i,j}[k]$ stores the region-label of the vertex in $R_{i,j}$ whose subregion-label is $k$. 
Then $A_i = A_{i,1}A_{i,2}\cdots A_{i,u_i}$. 
Clearly $A_i$ has the answers to our queries, but we do not store it explicitly. 
Instead, we construct the following data structures for each region $R_i$:
\begin{itemize}
\item A bit vector $B_i[1..n_i']$ which stores the numbers $n_{i,1}, n_{i,2}, \cdots, n_{i, u_i}$ in unary, i.e. $B_i = 10^{n_{i,1}-1}10^{n_{i,2}-1}\cdots10^{n_{i,u_i}-1}$; \footnote{We use $0^l$ to denote a bit sequence of $l$ $0$s.}
\item A bit vector $C_i[1..n_i']$ in which $C_i[k] = 1$ iff the first occurrence of the region-label $A_i[k]$ in $A$ is at position $k$ (let $q_i$ denotes the number of $0$s in $C_i$);
\item An array $D_i[1..q_i]$ in which $D_i[k]$ stores the region-label of the vertex that corresponds to the $k${\kth} $0$ in $C_i$, i.e. $D_i[k] = A_i[\selop_{C_i}(0, k)]$.
\end{itemize}

To analyze the space cost of the above data structures constructed for the entire graph $G$, we first show how to store all the $B_i$s. 
The first step is to concatenate all the $B_i$s and store them as a single sequence $B$, and use part (b) of Lemma~\ref{lem:ranksel} to represent $B$. 
Let $y$ be the length of $B$ and $z$ be the number of $1$s in $B$. 
Then $B$ can be stored in $\lg {y \choose z} + O(y\lg\lg y / \lg y)$ bits. 
%As the length of $B$ is $\sum_{i=1}^r n_i'$ and the number of $1$s in $B$ is $\sum_{i=1}^r u_i$, the space required to store $B$ is %$\sum_{i=1}^r (\lg {\sum_{i=1}^r n_i' \choose  \sum_{i=1}^r u_i} + o(\lg {\sum_{i=1}^r n_i'))$. 
As $n_i' \le n_i + O(n_i/ \lg^{b/2} n)$, we have $y = \sum_{i=1}^r n_i' \le n + O(n / \lg^{b/2} n)$. 
By Lemma~\ref{lem:componentno}, we have $u_i = O(n_i / \lg^{b/2} n)$, so $z = \sum_{i=1}^r u_i = O(n / \lg^{b/2} n)$. 
As each subregion has $O(\lg^b n)$ vertices, we also have $z = \sum_{i=1}^r u_i = \Omega (n / \lg^b n)$. 
By applying the inequality $\lg {w \choose u} \le u \lg \frac{ew}{u} + O(1)$~\cite[Section~4.6.4]{he2007}, we have $\lg {y \choose z} = \lg {\sum_{i=1}^r n_i' \choose \sum_{i=1}^r u_i} = O(n\lg\lg n / \lg^{b/2} n)$. 
Thus space cost of $B$ is $O(n\lg\lg n / \lg^{b/2} n) + O(n\lg\lg n / \lg n) = o(n)$ bits. 
The rank/select operations on $B$ can be performed in constant time, thus in order to support the same operations on each $B_i$ in constant time, it suffices to locate the starting position of any $B_i$ in $B$ in constant time. 
This can be done by using another bit vector, $X$, of length $y$ to mark the starting positions of all the $B_i$s in $B$. 
Thus the length of $X$ is at most $n + O(n/ \lg^{b/2} n)$ and it has $r=O(n / \lg^{a/2} n)$ $1$s, which can be stored using part (b) of Lemma~\ref{lem:ranksel} in $o(n)$ bits. 
The same scheme can be used to concatenate and store all the $C_i$s, and by a similar analysis, this occupies $O(n\lg\lg n / \lg^{b/2} n + O(n\lg\lg n / \lg n))=o(n)$ bits. 
The same bit vector $X$ allows us to support rank/select on each $C_i$ in constant time. 
Finally, as $q_i$ is less than the sum of the duplication degrees of all the vertices in $R_i$ under the bottom-level partition, by Lemma~\ref{lem:dup}, we have $q_i = O(\sqrt{n_i / t_i}) = O (n_i / \lg^{b/2} n)$. 
As each element of $D_i$ is within the range $[1..n_i']$, it can be stored using $O(\lg\lg n)$ bits. 
Thus $D_i$ occupies $O(n_i \lg\lg n / \lg^{b/2}n)$ bits, so all the $D_i$s occupy $O(n \lg\lg n / \lg^{b/2} n)$ bits. 
They can be concatenated and stored using the same scheme with $o(n)$ additional bits.

\begin{REMOVED}
To analyze the space cost of the above data structures constructed for the entire graph $G$, we first use part (b) of Lemma~\ref{lem:ranksel} to store $B_i$ in $\lg {n_i' \choose  u_i} + o(n_i')$ bits. 
Thus the overall space cost of all the $B_i$s is $\sum_{i=1}^r (\lg {n_i' \choose  u_i} + o(n_i'))$. 
Recall that $n_i' \le n_i + O(n_i/ \lg^{b/2} n)$ and $\sum_{i=1}^r u_i = O(n / \lg^{b/2} n)$. 
Thus $\sum_{i=1}^r n_i' \le n + O(n / \lg^{b/2} n)$. 
Therefore, the above cost is $\sum_{i=1}^r \lg {n_i' \choose  u_i} + o(n) \le \lg {\sum_{i=1}^r n_i' \choose \sum_{i=1}^r u_i} + o(n)$. 
As each subregion has $O(\lg^b n)$ vertices, we also have $\sum_{i=1}^r u_i = \Omega (n / \lg^b n)$. 
By applying the inequality $\lg {w \choose u} \le u \lg \frac{ew}{u} + O(1)$~\cite[Section~4.6.4]{he2007}, we have $\lg {\sum_{i=1}^r n_i' \choose \sum_{i=1}^r u_i} = O(n\lg\lg n / \lg^{b/2} n)$. 
Hence all the $B_i$s occupy $o(n)$ bits. 
We also use Lemma~\ref{lem:ranksel} to store $C_i$, and by a similar analysis, all the $C_i$s occupy $O(n\lg\lg n / \lg^{b/2} n)=o(n)$ bits. 
Finally, as $q_i$ is less than the sum of the duplication degrees of all the vertices in $R_i$ under the bottom-level partition, by Lemma~\ref{lem:dup}, we have $q_i = O(\sqrt{n_i / t_i}) = O (n_i / \lg^{b/2} n)$. 
As each element of $D_i$ is within the range $[1..n_i']$, it can be stored using $O(\lg\lg n)$ bits. 
Thus $D_i$ occupies $O(n_i \lg\lg n / \lg^{b/2}n)$ bits, so all the $D_i$s occupy $O(n \lg\lg n / \lg^{b/2} n)$ bits. 
Therefore, all the above data structures occupy $O(n\lg\lg n / \lg^{b/2} n)=o(n)$ bits. 

For each region $R_i$, $\Theta(\lg n)$ bits are enough to store pointers to the corresponding $B_i$, $C_i$ and $D_i$ constructed above, as well as their lengths, so that we can access each element of these data structures in constant time. 
This takes $O(n/\lg^{a/2-1}n)$ bits, which requires $a > 2$ so that the space cost is $o(n)$ bits. 
A better scheme can be used to remove this restriction, which is to concatenate all the $B_i$s and store them as a single sequence $B$, and use another bit vector, $B'$, of length $|B|$ to mark the starting positions of all the $B_i$s in $B$. 
The length $B'$ is at most $n_i + O(n_i/ \lg^{b/2} n)$ and it has $r=O(n / \lg^{a/2} n)$ $1$s, which can be stored using part (b) of Lemma~\ref{lem:ranksel} in $o(n)$ bits without requiring $a$ to be greater than $2$. 
The rank/select operations on $B'$ are sufficient to locate the starting position of any $B_i$ in $B$ and compute its size in constant time. 
The same scheme can be used to store all the $C_i$s and $D_i$s so that the additional cost is $o(n)$ bits for any positive number $a$. 
\end{REMOVED}

We now show how to compute, given a vertex $x$ with subregion-label $k$ in subregion $R_{i,j}$, the region-label of $x$ in $R_i$. 
We first locate the position, $l$, in $A_i$ that corresponds to the occurrence of vertex $x$ in subregion $R_{i,j}$. 
As the vertex with subregion-label $1$ in $R_{i,j}$ corresponds to position $\selop_{B_i}(1, j)$ in $A_i$, we have $l = \selop_{B_i}(1, j) + k - 1$. 
We then retrieve $C_i[l]$. 
If it is $1$, then the first occurrence of the region-label of $x$ in $A$ is at position $l$, thus $\rankop_{C_i}(1, l)$ is the result by the labeling scheme. 
Otherwise, the result is explicitly stored in $D_i[\rankop_{c_i}(0, l)]$. 
The above operations clearly take constant time. 

The second claim of the lemma can be proved similarly, and the space required is $O(n / \lg^{a/2-1} n) + o(n)$ bits, which is $o(n)$ bits when $a > 2$. 
\qed
\end{proof}

\subsection{Answering Point Location Queries}

\begin{lemma}
\label{lem:pointlocation}
Given a planar triangulation $G$ of $n$ vertices, 
there is a succinct geometric index of $o(n)$ bits that supports point location on $G$ in $O(\lg n)$ time. 
%there is an ordering of the vertices such that when their coordinates are stored in this order as a sequence, a succinct geometric index of $o(n)$ bits can be constructed to support point location in $O(\lg n)$ time. 
\end{lemma}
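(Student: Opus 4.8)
The plan is to answer a query by descending through the two levels of the partition of Section~\ref{sec:twolevel}, doing a coarse point location at each level and finishing by brute force inside a single tiny subregion. Fix the parameters so that $a>2$ (as required by Lemma~\ref{lem:labconversion}) and $0<b\le 1$ (so in particular $b<a$); for concreteness one may take $a=3$ and $b=1$.

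\emph{Top level.} Let $\Gamma$ be the planar subdivision obtained from $G$ by deleting every edge that lies strictly inside a region, so the faces of $\Gamma$ are exactly the triangles of the $t$-face separator $S$ together with the areas of the regions $R_i$. Since by Lemmas~\ref{lem:tface} and~\ref{lem:componentno} both $|S|$ and the number of regions are $O(n/\lg^{a/2}n)$, $\Gamma$ has $O(n/\lg^{a/2}n)$ vertices, edges and faces; after inserting $O(n/\lg^{a/2}n)$ bridge edges we may also assume every face is a simple polygon. Build any of the classical $O(k)$-word, $O(\lg k)$-time point-location structures on $\Gamma$, but store the combinatorics of its decision DAG with vertices referred to by their graph-labels and perform each point-line test by fetching the two endpoints' coordinates from the stored permuted sequence. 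This uses $O((n/\lg^{a/2}n)\lg n)=O(n/\lg^{a/2-1}n)=o(n)$ bits (here is where $a>2$ is needed) and, in $O(\lg(n/\lg^{a/2}n))=O(\lg n)$ time, tells us either that the query lies in a triangle of $S$ (done, report it) or in a specific region $R_i$.

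\emph{Bottom level and finish.} For each region $R_i$ we do the analogous thing one level down: form the subdivision $\Gamma_i$ of $R_i'$ by deleting the interior edges of its subregions, add $O(n_i/\lg^{b/2}n)$ bridges, and build a classical point-location structure on $\Gamma_i$ --- but now refer to vertices by their \emph{region-labels} in $R_i$, which cost only $O(\lg n_i)=O(\lg\lg n)$ bits each, recovering graph-labels and hence coordinates in $O(1)$ time via Lemma~\ref{lem:labconversion}. This structure occupies $O((n_i/\lg^{b/2}n)\lg\lg n)$ bits, so summed over all regions the cost is $O(n\lg\lg n/\lg^{b/2}n)=o(n)$ bits, and it locates the query inside some subregion $R_{i,j}$ in $O(\lg(n_i/\lg^{b/2}n))=O(\lg\lg n)$ time. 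Finally, the query lies in $R_{i,j}$, which has only $O(\lg^{b}n)$ vertices: using the bit vector $B_i$ to find the positions of $R_{i,j}$'s vertices and the conversions of Lemma~\ref{lem:labconversion}, read the coordinates of $R_{i,j}$'s vertices in subregion-label order --- which is exactly the order in which Lemma~\ref{lem:encodetriag} encoded them --- decode the triangulation of $R_{i,j}$ in $O(\lg^{b}n)$ time, and test the query point against each of its $O(\lg^{b}n)$ triangles, reporting the one that contains it. Because $b\le 1$ this last step is $O(\lg n)$, so the total query time is $O(\lg n)+O(\lg\lg n)+O(\lg n)=O(\lg n)$; the stored index (the two coarse structures plus the structures of Lemma~\ref{lem:labconversion}) is $o(n)$ bits; and preprocessing is $O(n)$ since the partition, the labels, and each coarse structure are built in linear time.

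\emph{Main obstacle.} The delicate part is making the coarse subdivisions $\Gamma$ and $\Gamma_i$ legitimate inputs for an off-the-shelf point-location structure and faithfully matching the index to its use: a region, and the separator $S$ itself, need not be connected or simply connected, and after collapsing subregions one must carefully record which face of $\Gamma$ (resp. $\Gamma_i$) corresponds to a genuine region (resp. subregion) as opposed to a separator triangle or a filler triangle introduced when forming $R_i'$, so that a query classified as ``inside $R_i$'' is never actually sitting in a hole of $R_i$. Resolving this needs a controlled number --- $O(n/\lg^{a/2}n)$ overall --- of bridge/auxiliary edges plus a face-to-region map, all kept within $o(n)$ bits, and it is this bookkeeping, together with the accounting that lets the bottom-level structure use short region-labels (and hence stay $o(n)$ for $b\le 1$) while still recovering coordinates in $O(1)$ time, that constitutes the real work of the proof.
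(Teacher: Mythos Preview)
Your proposal is correct and follows essentially the same approach as the paper: two-level partition, a coarse point-location structure at each level (with the bottom level using short region-labels), and a brute-force decode of the final subregion via Lemma~\ref{lem:encodetriag}, with the same parameter choice $a=3$, $b=1$. The only real difference is cosmetic: where you form $\Gamma$ (resp.\ $\Gamma_i$) by collapsing region interiors and then inserting bridge edges to get simple-polygon faces, the paper instead \emph{triangulates} the graph consisting of the separator $S$ (resp.\ $S_i$) together with the outer triangle to obtain a planar triangulation $S'$ (resp.\ $S_i'$), tags each resulting triangle with the region it lies in (or $0$ if it is a separator triangle), and runs Kirkpatrick on that---this sidesteps the holes/bridges bookkeeping you flag as the ``main obstacle'' and is slightly cleaner, but the two fixes are equivalent in both space and time.
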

\begin{proof}
We perform the two-level partitioning of $G$ as in Section~\ref{sec:twolevel}, and use the approach in Section~\ref{sec:label} to assign labels to the vertices of $G$, but we do not store these labels. 
Instead, we sort the vertices by their graph-labels in increasing order, and store their coordinates as a sequence. 
%use this as the order in this lemma. 
%Thus we need only show how to construct a succinct geometric index of $o(n)$ bits. % for this order. 
We then show how to construct a succinct geometric index of $o(n)$ bits. 

The succinct geometric index consists of three sets of data structures. 
This first set of data structures are the data structures constructed in Lemma~\ref{lem:labconversion} that supports conversions between subregion-labels, region-labels and graph-labels. 
By the proof of Lemma~\ref{lem:labconversion}, they occupy $O(n/\lg^{a/2-1} n) + o(n)$ bits. 
The second and the third sets of data structures correspond to the top-level and the bottom-level partitions. 

To design the data structures for the top-level partition, we consider the graph $S'$ constructed by triangulating the graph consisting of the separator $S$ and the outer face of $G$. 
$S'$ is a planar triangulation of $O(n / \lg^a n)$ vertices, so we can use the approach of Kirkpatrick~\cite{ki1983} (in fact, any structure that uses $O(n\lg n)$ bits for an $n$-vertex planar triangulation to answer point location in $O(\lg n)$ time can be used here) to construct a data structure $P$ of $O(n/\lg^a n)$ words (i.e. $O(n/ \lg^{a-1} n)$ bits) to support the point location queries on $S'$. 
Note that when we construct $P$, we simply use the graph-label of any vertex to refer to its coordinates, so that we do not store any coordinate in $P$. 
For each face of $S'$, we store an integer. If this face is in region $R_i$, we store $i$. 
We store $0$ if it is a face in separator $S$. 
As there are $O(n/\lg^a n)$ faces and the number assigned to each face can be stored in $\lg n$ bits, the space required to store these numbers is $O(n/ \lg^{a-1} n)$ bits. 
All these ($P$ and the numbers assigned to the faces of $S'$) are the data structures for the top-level partition of $G$, and they occupy $O(n/ \lg^{a-1} n)$ bits. 

The data structures for the bottom-level partition are constructed over the regions of $G$. 
Given a region $R_i$, recall that we construct a planar triangulation $R_i'$ when we perform the bottom-level partition. 
Consider the graph $S_i'$ constructed by triangulating the graph consisting of the separator $S_i$ (recall that it is a separator for $R_i'$) and the outer face of $R_i'$. 
We observe that $S_i'$ is a planar triangulation of $O(r_i / \lg^{b/2} n)$ vertices. 
As the number of vertices of $S_i'$ is $O(\lg^a n)$, a pointer that refers to a vertex of $S_i'$ can be stored in $O(\lg\lg n)$ bits. 
To refer to the coordinates of any vertex of $S_i'$, we only uses its region-label as we can compute its graph-label in constant time by Lemma~\ref{lem:labconversion}, and $O(\lg\lg n)$ bits are sufficient to store a region-label. 
Thus we can use the approach of Kirkpatrick~\cite{ki1983} to construct a data structure $P_i$ of $O(r_i\lg\lg n/\lg^{b/2} n)$ bits to support the point location queries on $S_i'$ in $O(\lg\lg n)$ time. 
We also store a number for each face of $S_i'$, and this number is $j$ if this face is in subregion $R_{i,j}$, and $0$ if it is a face in separator $S_i$. 
As there are $O(r_i / \lg^{b/2} n) = O(\lg^{a-b/2} n)$ subregions in $R_i'$, each number occupies $O(\lg\lg n)$ bits, so the space cost of storing these numbers is $O(r_i\lg\lg n/\lg^{b/2} n)$ bits. 
The space cost of these data structures for all the regions is $\sum_{i=1}^r O(r_i\lg\lg n/\lg^{b/2} n) = O(n\lg\lg n / \lg^{b/2} n) = o(n)$ bits. 

Therefore, the succinct geometric index constructed above occupies $O(n/\lg^{a/2-1} n)+o(n)$ bits. 

We now show how to answer point location queries using this succinct geometric index. 
Given a query point $x$, we first locate the face of $S'$ that contains $x$ using the set of data structures constructed for the top-level partition in $O(\lg n)$ time. 
We retrieve the integer, $i$, assigned to the face of $S'$ that $x$ is in. 
If $i$ is $0$, then this face is in $S$, and we return its three vertices as the result. 
If it is not, then $x$ is inside region $R_i$. 
We then use the bottom-level partition to perform a point location query on the graph $S_i'$ in $O(\lg\lg n)$ time using $x$ as the query point. 
We retrieve the integer, $j$, assigned to the face of $S_i'$ that $x$ is in. 
If $j$ is $0$, then this face is in $S_i$, and we return its three vertices as the result (we need convert the region-labels of these three vertices to their graph-labels when we return them). 
If $j$ is not, then $x$ is inside region $R_{i,j}$. 
Using the bit vector $B_i$ constructed in the proof of Lemma~\ref{lem:labconversion}, we can compute the number of vertices of $R_{i,j}$ in constant time. 
Recall that $r_{i,j}$ denotes this number. 
By Lemma~\ref{lem:labconversion}, we can compute the graph-label of any of these vertices in constant time, and thus retrieve its coordinates in $O(1)$ time. 
As we number these vertices using Lemma~\ref{lem:encodetriag}, we can use Lemma~\ref{lem:encodetriag} to construct the graph $R_{i,j}'$ in $O(r_{i,j})$ time, and then check each of its faces to find out the face that $x$ is in. 
This takes $O(r_{i,j})=O(\lg^b n)$ time. 
Therefore, the entire process takes $O(\lg n + \lg^b n)$ time. 

We have so far designed a succinct geometric index of $O(n/\lg^{a/2-1} n)+o(n)$ bits to support point location queries in $O(\lg n + \lg^b n)$ time. 
Choosing $a=3$ and $b=1$ yields a succinct geometric index of $o(n)$ bits that supports point location queries in $O(\lg n)$ time, which completes the proof. \qed
\end{proof}

\begin{lemma}
\label{lem:plpre}
%The computation of the order of the vertices and the construction of the succinct geometric index in Lemma~\ref{lem:pointlocation} can be preformed in $O(n)$ time. 
The data structures of Lemma~\ref{lem:pointlocation} can be constructed in $O(n)$ time. 
\end{lemma}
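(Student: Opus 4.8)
The plan is to walk through the construction in the proof of Lemma~\ref{lem:pointlocation} component by component and charge each component $O(n)$ time; in fact all but a few components are built in $o(n)$ time. For the two-level partition: the top-level $t$-face separator $S$, and hence the set of regions $R_i$, is computed in $O(n)$ time by Lemma~\ref{lem:tface}. For the bottom level, forming $R_i'$ (triangulating $R_i$ together with the outer face of $G$) takes time linear in $n_i$, since triangulating a plane graph with a fixed embedding --- bridging holes and disconnected pieces, then fanning diagonals inside each face --- introduces no new vertices and runs in linear time; Lemma~\ref{lem:tface} then computes each $S_i$ in $O(n_i)$ time. Because each internal vertex of a region is counted once and the duplication degrees of the boundary vertices sum to $O(\sqrt{f/t})$ by Lemma~\ref{lem:dup}, we have $\sum_i n_i \le n + O(\sqrt{f/t}) = O(n)$, so the bottom-level partition costs $O(n)$ in total.

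Next, the labeling of Section~\ref{sec:label}. For each subregion $R_{i,j}$, Lemma~\ref{lem:encodetriag} computes the permutation of its $n_{i,j}$ vertices (after surrounding the subregion by a triangle) in $O(n_{i,j})$ time, and $\sum_{i,j} n_{i,j} = \sum_i n_i' = O(n)$, again by Lemma~\ref{lem:dup}. Assigning region-labels and then graph-labels is the left-to-right scan described in Section~\ref{sec:label}; its only nontrivial ingredient is recognizing whether a vertex has already been visited, which we implement with a single Boolean array of length $n$ indexed by the input identifiers of the vertices, resetting after each region only the entries that were actually touched. The region-label scan then runs in time proportional to $\sum_{i,j} n_{i,j} = O(n)$ and the graph-label scan in time proportional to $\sum_i n_i = O(n)$; once all graph-labels are known, permuting the coordinate array into graph-label order is another $O(n)$.

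For the succinct structures: the conversion data structures of Lemma~\ref{lem:labconversion} --- the unary vectors $B_i$, the flag vectors $C_i$, and the arrays $D_i$ --- can be filled during the same scan in time linear in their total length, since $C_i[l]$ is just the bit ``this occurrence is the first visit of its vertex'' and $D_i$ records the remaining occurrences; their total length is $O(n)$. Concatenating the $B_i$'s into $B$ (and likewise the $C_i$'s and $D_i$'s), building the marker vector $X$, and installing the rank/select indexes of parts (a) and (b) of Lemma~\ref{lem:ranksel}, all take time linear in the lengths involved, because those indexes admit linear-time construction; hence $O(n)$. For the point-location structures, $S'$ has $O(\sqrt{f/t}) = O(n/\lg^{a/2} n)$ vertices, so Kirkpatrick's structure~\cite{ki1983} $P$ on $S'$ together with the per-face region indices is built in $o(n)$ time, and each $S_i'$ has $O(n_i/\lg^{b/2} n)$ vertices, so the $P_i$'s and their per-face subregion indices cost $\sum_i O(n_i/\lg^{b/2} n) = o(n)$ altogether; throughout, vertices are referred to by their region- or graph-labels, each computable in $O(1)$ time from the structures already built. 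Summing all contributions gives $O(n)$ preprocessing.

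The main obstacle --- essentially the only point needing care --- is the label-assignment scan: a naive implementation that, say, sorts the vertices of each region to detect duplicates would cost $\Theta(n\lg\lg n)$ in aggregate, so one must argue that the reusable marking array, with its per-region partial reset, keeps the scan genuinely linear. A secondary point is to confirm that every triangulation used in the construction --- of the individual regions, and of $S$ and of each $S_i$ together with the appropriate outer face --- is carried out in linear time using only the vertices already present.
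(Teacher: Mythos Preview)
Your argument follows the same component-by-component accounting as the paper's proof, and the bookkeeping (the use of Lemma~\ref{lem:dup} to bound $\sum_i n_i$ and $\sum_{i,j} n_{i,j}$, the linear-time Kirkpatrick build on $S'$ and the $S_i'$, the rank/select construction) is all correct. Where the emphasis differs is interesting: the paper singles out the construction of each $R_i'$ as the one step that actually needs an argument, while you spend your care on the label-assignment scan (the reusable Boolean array with per-region partial reset), which the paper simply asserts is linear. Your treatment of the scan is a genuine addition and is correct.

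The one slip to fix is in your triangulation step. ``Fanning diagonals inside each face'' does not produce a valid straight-line triangulation when the face is a non-convex simple polygon, which is exactly what the holes of $R_i$ and the annular region between the outer boundary of $R_i$ and the outer triangle of $G$ can be. The paper handles this by invoking Chazelle's linear-time simple-polygon triangulation~\cite{ch1991} on each inner cycle, and for the outer annular face it first cuts it into two simple polygons by a single diagonal from a vertex on the outer boundary of $R_i$ to a suitably chosen vertex of the outer triangle, then applies Chazelle again. Replace the fanning remark with this (or any linear-time geometric triangulation primitive) and your proof goes through.
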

\begin{proof}
We first show how to compute the order of the vertices in $O(n)$ time. 
To prove this, we need show that the two-level partition in Section~\ref{sec:twolevel} and the labeling of the vertices in Section~\ref{sec:label} can be performed in $O(n)$ time. 

To show that the two-level partition can be performed in $O(n)$ time, we first observe that the computation of $t$-face separators at both levels can be performed in time linear in the numbers of the vertices of the graphs. 
Thus such computation can be performed in $O(n)$ time. 
The only part that is not clear is the time required to construct the graphs $R_i'$. 
Recall that we construct $R_i'$ by triangulating the graph that consists of $R_i$ and the outer face of $G$, and that the boundary of $R_i$ consists of one or more simple cycles: one simple cycle which is the outer face of $R_i$, and at most one simple cycle corresponding to each adjacent face component inside it. 
Thus we need only to triangulate the interior of each simple cycle corresponding to each adjacent face component inside $R_i$, and the face, $F$, defined by the simple cycle which is the outer face of $R_i$ and the boundary of the outer face of $G$. 
To triangulate the interior of each simple cycle, we use the linear-time algorithm by Chazelle~\cite{ch1991}. 
To triangulate $F$, 
we start at an arbitrary vertex $y$ on the outer face of  $R_i$. 
We locate a vertex of the triangular outer face of $G$ such that the line segment between this vertex and $y$ does not cross any edge of the outer face of $R_i$, and we draw an edge between $x$ and this vertex. 
This divides $F$ into two simple polygons, which can be triangulated in linear time. 
%using the linear-time algorithm by Chazelle~\cite{ch1991}. 
\begin{REMOVED}
we first use the linear-time algorithm by McCallum and Avis~\cite{ma1979} to compute the convex hull of the outer face of $R_i$. 
Each face created between the convex hull of $R_i$ and the outer face of $R_i$ is a simple polygon and we use the linear-time algorithm by Chazelle~\cite{ch1991} to triangulate it. 
Now we need only triangulate the face $F'$ which is defined by the convex hull of $R_i$ and the boundary of the outer face of $G$. 
To do this, we start with an arbitrary vertex $x$ that is on the convex hull of $R_i$. 
We locate a vertex on the outer face of $G$ such that the line segment between this vertex and $x$ does not cross any edge of the convex hull of $R_i$, and we draw an edge between $x$ and $v_0$. 
Assume, without loss of generality, that this vertex is $v_0$ (we also assume that, when listed in counterclockwise order, the three vertices on the outer face of $G$ are $v_0$, $v_1$ and $v_{n-1}$). 
We then visit the vertex, $y$, that is next to $x$ on the convex hull of $R_i$ in counterclockwise order. 
If vertex $x$ is to the right of the line segment between $v_0$ and $y$, then the triangle defined by $x$, $y$, $v_0$ is outside the convex hull of $R_i$. 
We draw an edge between $v$ and $y$, and this triangle is a face of $R_i'$. 
If not, then the line segment between $y$ and $v$ crosses at least one edge of the convex hull of $R_i$. 
We then find the next vertex after $v_0$ on the outer face of $G$ in counterclockwise order such that the line segment between this vertex and $y$ does not cross any edge of the convex hull of $R_i$, and repeat the above process. 
As there are only three vertices on the outer face of $G$, this process takes time linear in the number of vertices on the convex hull of $R_i$. 
\end{REMOVED}
Thus we can construct the graph $R_i'$ in time linear to the number of its vertices after we perform the top-level partition. 
Therefore, it takes $O(n)$ time to construct all the $R_i'$s. 

The linear time construction of the succinct geometric index in Lemma~\ref{lem:pointlocation} directly follows the linear time construct of Kirkpatrick's point location structure~\cite{ki1983} and the data structure for part (b) of Lemma~\ref{lem:ranksel}. 
\qed
\end{proof}

Combining Lemmas~\ref{lem:pointlocation} and \ref{lem:plpre}, we have our first and main result:

\begin{theorem}
\label{thm:pointlocation}
Given a planar triangulation $G$ of $n$ vertices, 
there is a succinct geometric index of $o(n)$ bits that supports point location on $G$ in $O(\lg n)$ time. 
This index can be constructed in $O(n)$ time. 
%Given a planar triangulation $G$ of $n$ vertices, 
%there is an ordering of the vertices such that when their coordinates are stored in this order as a sequence, a succinct geometric index of $o(n)$ bits can be constructed to support point location in $O(\lg n)$ time. 
%The preprocessing time is $O(n)$. 
%Consider a planar triangulation $G$ with $n$ vertices and $m$ edges, whose vertices have coordinates in the plane. 
%There is an ordering of the vertices such that when the sequence of coordinates of these vertices are stored in this order, a succinct geometric index of $o(n)$ bits can be constructed in $O(n)$ time to support point location in $O(\lg n)$ time. 
%This ordering can also be computed in $O(n)$ time.  
\end{theorem}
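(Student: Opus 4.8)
The plan is to obtain the theorem by simply assembling the two preceding lemmas: Lemma~\ref{lem:pointlocation} already supplies the $o(n)$-bit succinct geometric index answering point location in $O(\lg n)$ time, and Lemma~\ref{lem:plpre} already supplies the matching $O(n)$ construction bound for exactly those data structures. Since Lemma~\ref{lem:plpre} is phrased as ``the data structures of Lemma~\ref{lem:pointlocation} can be constructed in $O(n)$ time,'' the two statements are witnessed by the same permutation of the point set and the same auxiliary structures, so no reconciliation is needed and the theorem follows in one line. The substance, of course, lives in those two lemmas, so I would make the write-up self-contained by recalling the architecture.

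First I would recall the two-level partition of Section~\ref{sec:twolevel}, obtained by applying Lemma~\ref{lem:tface} twice: a top-level $t$-face separator with $t=\lg^a f/f$ splits $G$ into $r=O(n/\lg^{a/2}n)$ regions of $O(\lg^a n)$ vertices, and inside each region (triangulated together with the outer face of $G$) a second $t_i$-face separator with $t_i=\lg^b n/n_i$ splits it into subregions of $O(\lg^b n)$ vertices. Next I would recall the labeling scheme of Section~\ref{sec:label}: working bottom-up, Lemma~\ref{lem:encodetriag} orders each subregion's vertices, and this induces consistent region-labels and graph-labels that are interconvertible in $O(1)$ time via the structures of Lemma~\ref{lem:labconversion}; reordering the stored coordinate sequence by graph-label is what lets the index itself be coordinate-free. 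Finally I would recall the query structures: a Kirkpatrick point-location structure on the triangulated top separator $S'$ (using graph-labels as proxies for coordinates), with each face of $S'$ tagged by its region index, together with an analogous Kirkpatrick structure on each triangulated region separator $S_i'$ whose faces are tagged by subregion index. A query walks $S'$ in $O(\lg n)$ time, then the relevant $S_i'$ in $O(\lg\lg n)$ time, then rebuilds the located subregion by Lemma~\ref{lem:encodetriag} and brute-forces its $O(\lg^b n)$ faces. Taking $a=3$ and $b=1$ makes every auxiliary component $o(n)$ bits and the query time $O(\lg n)$.

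For the construction bound I would follow Lemma~\ref{lem:plpre}: both separators are computed in time linear in the vertex counts by Lemma~\ref{lem:tface}; the labeling is a single linear-time bottom-up pass; part~(b) of Lemma~\ref{lem:ranksel} and Kirkpatrick's structure are built in linear time; and the only delicate point is building each triangulation $R_i'$ from $R_i$ plus the outer triangle of $G$ in linear time, which is handled by triangulating every bounded boundary cycle with Chazelle's polygon-triangulation algorithm and splitting the remaining annular face between the outer cycle of $R_i$ and the outer triangle of $G$ into two simple polygons.

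The step I expect to demand the most care is not the assembly (which is trivial) but the space accounting inside Lemma~\ref{lem:labconversion}: one must bound $\sum_i n_i'$, the subregion count $\sum_i u_i$, and the duplication counts $q_i$ simultaneously (via Lemmas~\ref{lem:componentno} and~\ref{lem:dup}), then concatenate all the per-region arrays $B_i,C_i,D_i$ into global structures indexed by a single sparse bit vector stored through part~(b) of Lemma~\ref{lem:ranksel}, so that the $\Theta(\lg n)$-bit-per-region pointer overhead that would otherwise require $a>2$ is avoided. Making the exponents $a/2,b/2$ match the estimate $\lg\binom{w}{u}\le u\lg\frac{ew}{u}+O(1)$ so that every term is genuinely $o(n)$ is the real work — but since all of it is already carried out in Lemmas~\ref{lem:labconversion}, \ref{lem:pointlocation} and~\ref{lem:plpre}, the theorem itself reduces to citing them.
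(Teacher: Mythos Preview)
Your proposal is correct and matches the paper's approach exactly: the paper proves Theorem~\ref{thm:pointlocation} in one line by combining Lemmas~\ref{lem:pointlocation} and~\ref{lem:plpre}, and your recap of the two-level partition, labeling scheme, Kirkpatrick structures on $S'$ and $S_i'$, and the linear-time construction via Chazelle's triangulation mirrors the content of those lemmas. One small inaccuracy in your recap: the $a>2$ constraint in Lemma~\ref{lem:labconversion} is not eliminated by the concatenation trick but rather stems from the top-level analogue of the $D$-array storing $\Theta(\lg n)$-bit graph-labels for $O(n/\lg^{a/2}n)$ duplicated vertices; the paper keeps the constraint and satisfies it by choosing $a=3$.
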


We now design three variants of this succinct geometric index to address the query efficiency with different assumptions. 
We first consider the exact number of point-line comparisons required to answer a query. 
\begin{corollary}
\label{cor:exact}
Given a planar triangulation $G$ of $n$ vertices, 
there is a succinct geometric index of $o(n)$ bits that supports point location on $G$ using at most $\lg n + 2\sqrt{\lg n} + O(\lg^{1/4} n)$ steps. 
This index can be constructed in $O(n)$ time. 
%Given a planar triangulation $G$ of $n$ vertices, 
%there is an ordering of the vertices such that when their coordinates are stored in this order as a sequence, a succinct geometric index of $o(n)$ bits can be constructed to support point location using at most $\lg n + 2\sqrt{\lg n} + O(\lg^{1/4} n)$ steps. 
%The preprocessing time is $O(n)$. 
\end{corollary}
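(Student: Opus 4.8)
The plan is to reuse, essentially verbatim, the two-level partitioning and labeling scheme of Section~\ref{sec:twolevel} and Section~\ref{sec:label}, together with the label-conversion structures of Lemma~\ref{lem:labconversion} and the bottom-level data structures of Lemma~\ref{lem:pointlocation}; the only change is at the top level, where we replace Kirkpatrick's point location structure on $S'$ by the structure of Seidel and Adamy~\cite{sa2000}. Recall that $S'$ is obtained by triangulating the top-level separator $S$ together with the outer face of $G$, so $S'$ has $N = O(n/\lg^a n)$ vertices. The Seidel--Adamy structure on an $N$-vertex triangulation occupies $O(N)$ words, i.e.\ $O(n/\lg^{a-1} n) = o(n)$ bits, answers a query using at most $\lg N + 2\sqrt{\lg N} + O(\lg^{1/4} N)$ point-line comparisons, and can be built in $O(N\lg N) = O(n/\lg^{a-1} n) = o(n)$ time. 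As in Lemma~\ref{lem:pointlocation}, when we build it we refer to coordinates only through graph-labels, so no coordinate is stored inside it. The other ingredients are unchanged, so for $a > 2$ the index still occupies $O(n/\lg^{a/2-1} n) + o(n) = o(n)$ bits, and, exactly as in Lemma~\ref{lem:plpre}, it can be constructed in $O(n)$ time (the top-level structure is now the only part whose preprocessing differs, and that is $o(n)$).

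To answer a query for a point $x$, I would first run the Seidel--Adamy structure on $S'$ to locate the face of $S'$ containing $x$; if that face belongs to $S$ we are done and make no further comparisons. Otherwise $x$ lies in a region $R_i$, and we locate the face of $S_i'$ containing $x$ exactly as in the proof of Lemma~\ref{lem:pointlocation}, using Kirkpatrick's structure on the $O(\lg^{a-b/2} n)$-vertex triangulation $S_i'$; this costs $O(\lg\lg n)$ comparisons. If that face belongs to $S_i$ we are done; otherwise $x$ lies in a subregion $R_{i,j}$, whose triangulation $R_{i,j}'$ we reconstruct in $O(\lg^b n)$ time via Lemma~\ref{lem:encodetriag} and then search by brute force over its $O(\lg^b n)$ faces, for a further $O(\lg^b n)$ comparisons. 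Label conversions are $O(1)$ each and cost no point-line comparisons, so the total number of comparisons is at most
\[
  \lg N + 2\sqrt{\lg N} + O(\lg^{1/4} N) + O(\lg\lg n) + O(\lg^b n).
\]

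It remains to choose $a$ and $b$. Since $\lg N = \lg n - a\lg\lg n + O(1) \le \lg n$, we have $2\sqrt{\lg N} \le 2\sqrt{\lg n}$ and $O(\lg^{1/4} N) = O(\lg^{1/4} n)$, so the bound above is at most $\lg n - a\lg\lg n + 2\sqrt{\lg n} + O(\lg^{1/4} n) + O(\lg\lg n) + O(\lg^b n)$. Taking $b = \tfrac14$ makes the last term $O(\lg^{1/4} n)$, and taking $a$ to be a sufficiently large constant (in particular $a > 2$, so that Lemma~\ref{lem:labconversion} still applies, and $a$ at least the constant hidden in the $O(\lg\lg n)$ middle-level cost) makes $-a\lg\lg n$ absorb that $O(\lg\lg n)$ term. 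The total is then $\lg n + 2\sqrt{\lg n} + O(\lg^{1/4} n)$, as claimed.

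\textbf{Main obstacle.} The one delicate point is precisely this bookkeeping: the per-query overhead contributed by the middle and bottom levels must be shown to fit inside the $O(\lg^{1/4} n)$ slack of the target bound. This is why $b$ must be taken no larger than $\tfrac14$ (so the final brute force is cheap enough) and $a$ must be taken large enough that the $\Theta(\lg\lg n)$ savings from running Seidel--Adamy on a graph of size $n/\lg^a n$ rather than $n$ dominate the $\Theta(\lg\lg n)$ cost of the middle-level query; conveniently, enlarging $a$ only shrinks the index and speeds up its construction, so it never conflicts with the $o(n)$-space or $O(n)$-time requirements.
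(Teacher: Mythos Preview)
Your proposal is correct and follows essentially the same approach as the paper: keep the two-level scheme and the bottom-level machinery intact, set $b=\tfrac14$, and swap in the Seidel--Adamy structure at the top level. The one minor difference is that you go to extra trouble to absorb the $O(\lg\lg n)$ middle-level cost by enlarging $a$ and exploiting the $-a\lg\lg n$ savings in $\lg N$; the paper simply keeps $a=3$ and observes that $\lg\lg n = O(\lg^{1/4} n)$, so that term is already swallowed by the slack without any tuning of $a$.
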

\begin{proof}
We use the same approach as that for Theorem~\ref{thm:pointlocation}, but we choose $b = 1/4$. 
When we construct the data structures for the top-level partition, we use the approach by Seidel and Adamy~\cite{sa2000} to construct the point location structure $P$. 
This way point location on $S'$ can be compute in at most $\lg n + 2\sqrt{\lg n} + O(\lg^{1/4} n)$ steps. 
As the point location on $S_i'$ and $R_{i,j}'$ can be supported in $O(\lg\lg n)$ and $O(\lg^b n) = O(\lg^{1/4} n)$ steps, respectively, the overall steps required to answer point location queries on $G$ is at most $\lg n + 2\sqrt{\lg n} + O(\lg^{1/4} n)$. 

The other claims in this corollary are easy to prove. 
Note that when analyzing the preprocessing time, the super-linear preprocessing time of the approach by Seidel and Adamy~\cite{sa2000} is not a problem, 
%does not make the preprocessing time of our approach super-linear, 
as we apply it to the graph $S_i'$, which has $O(n/\lg^3 n)$ vertices. 
Thus $P$ can be constructed in $O(n/\lg^3 n \times \lg(n/\lg^3 n)) = O(n/\lg^2 n)$ time. 
\qed
\end{proof}

Note that Corollary~\ref{cor:exact} not only matches the best result~\cite{sa2000} in terms of the exact number of point-line comparisons using negligible space, but also improves the preprocessing time from $O(n\lg n)$ to $O(n)$. 

If all the coordinates are integers bounded by $U \le
2^w$, we have the following variant:
\begin{corollary}
\label{cor:integer}
Assume that all the point coordinates in the plane are integers bounded by $U\le 2^w$. 
Given a planar triangulation $G$ of $n$ vertices, 
there is a succinct geometric index of $o(n)$ bits that supports point location on $G$ in $O(\min\{\lg n/\lg\lg n, \sqrt{\lg U}\}
+ \lg^{\epsilon} n)$ time, for any constant $\epsilon > 0$. 
This index can be constructed in $O(n)$ time. 
%Given a planar triangulation $G$ of $n$ vertices, 
%there is an ordering of the vertices such that when their coordinates are stored in this order as a sequence, a succinct geometric index of $o(n)$ bits can be constructed to support point location in $O(\min(\lg n/\lg\lg n, \sqrt{\lg U})
%+ \lg^{\epsilon} n)$ time, for any constant $\epsilon > 0$. 
%The preprocessing time is $O(n)$. 
%
%Consider a planar triangulation with $n$ vertices, whose
%vertices are all integers bounded by $U\le 2^w$. There is an
%ordering of the vertices such that when the sequence of
%coordinates of these vertices are stored in this order, a succinct
%geometric index of $o(n)$ bits can be constructed in $O(n)$ time
%to support point location in $O(\min(\lg n/\lg\lg n, \sqrt{\lg U})
%+ \lg^{\epsilon} n)$ time, for any constant $\epsilon > 0$. This
%ordering can also be computed in $O(n)$ time.
\end{corollary}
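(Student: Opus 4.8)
The plan is to reuse, essentially verbatim, the three-level construction from the proofs of Lemma~\ref{lem:pointlocation} and Corollary~\ref{cor:exact}, changing only the point location structure used at the top level. We may assume $0<\epsilon<1$, since for $\epsilon\ge 1$ the claimed running time is $\Omega(\lg n)$ and the index of Theorem~\ref{thm:pointlocation} already meets it. So I would perform the two-level partition of Section~\ref{sec:twolevel} with $a=3$ and $b=\epsilon$ (note that then $b<a$, as the scheme requires), label the vertices as in Section~\ref{sec:label}, store only the permuted coordinate sequence, and keep the label-conversion structures of Lemma~\ref{lem:labconversion}; since $a=3>2$ the latter occupy $O(n/\lg^{1/2}n)=o(n)$ bits. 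Note also that triangulating $S$, the $R_i$, or the subregions adds edges only, never vertices, so every vertex of $S'$, of each $S_i'$, and of each $R_{i,j}'$ is an original vertex of $G$ and hence still has integer coordinates bounded by $U$.

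For the top level, instead of Kirkpatrick's structure I would build over the triangulated separator graph $S'$ the integer point location structure of Chan~\cite{ch2006} and P\v{a}tra\c{s}cu~\cite{pa2006}, still referring to a vertex only by its graph-label so that no coordinate is stored inside it (a point-line comparison performs an $O(1)$-time lookup into the stored sequence). Since $S'$ has $N=O(n/\lg^3 n)$ vertices, this structure costs $O(N)$ words $=O(n/\lg^2 n)$ bits, and a query on it takes $O(\min\{\lg N/\lg\lg N,\ \sqrt{\lg U}\})=O(\min\{\lg n/\lg\lg n,\ \sqrt{\lg U}\})$ time: the $\sqrt{\lg U}$ term is unchanged because $U$ is, and $\lg N/\lg\lg N=\Theta(\lg n/\lg\lg n)$. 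Each face of $S'$ is tagged with the index of the region containing it (or $0$). The middle-level structures $P_i$ over the graphs $S_i'$ and the bottom-level step are built exactly as in Lemma~\ref{lem:pointlocation}, answering a query in $O(\lg\lg n)$ time at the middle level and then reconstructing $R_{i,j}'$ via Lemma~\ref{lem:encodetriag} and scanning its $O(\lg^b n)=O(\lg^\epsilon n)$ faces at the bottom.

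Putting this together, a query costs $O(\min\{\lg n/\lg\lg n,\ \sqrt{\lg U}\})$ at the top level, $O(\lg\lg n)$ at the middle level, and $O(\lg^\epsilon n)$ at the bottom; since $\lg\lg n=o(\lg^\epsilon n)$ this is $O(\min\{\lg n/\lg\lg n,\ \sqrt{\lg U}\}+\lg^\epsilon n)$, even when $U$ is so small that $\sqrt{\lg U}<\lg\lg n$. The space is $o(n)$ bits for label conversion, $O(n/\lg^2 n)$ bits for the tagged top-level structure, and $\sum_i O(r_i\lg\lg n/\lg^{\epsilon/2}n)=O(n\lg\lg n/\lg^{\epsilon/2}n)=o(n)$ bits for the middle-level structures over all regions, hence $o(n)$ overall. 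For preprocessing, the partition, the labeling, and the middle/bottom structures take $O(n)$ time by Lemma~\ref{lem:plpre}; the Chan/P\v{a}tra\c{s}cu structure, even with a polylogarithmic construction overhead, is built on a graph of only $O(n/\lg^3 n)$ vertices and therefore costs $o(n)$ time, so the whole index is built in $O(n)$ time.

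The main thing to pin down carefully — though I do not expect a genuine difficulty — is that replacing Kirkpatrick's structure by the integer structure at the top level really yields the stated bound in terms of the original $n$ and $U$: one must check that shrinking the top-level graph to $O(n/\lg^3 n)$ vertices does not weaken the $\lg n/\lg\lg n$ term (it does not, since $\lg N/\lg\lg N=\Theta(\lg n/\lg\lg n)$ and $U$ is untouched), that the additive $\lg^\epsilon n$ genuinely absorbs both the middle-level $O(\lg\lg n)$ cost and the bottom-level scan in every regime of $U$, and that the word/space accounting of the integer structure is compatible with the $\Theta(\lg n)$-bit word RAM model assumed throughout. Each of these points should simply be stated explicitly in the write-up.
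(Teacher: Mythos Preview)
Your proposal is correct and follows essentially the same approach as the paper: set $b=\epsilon$ in the two-level scheme of Lemma~\ref{lem:pointlocation} and replace Kirkpatrick's structure at the top level by the Chan--P\v{a}tra\c{s}cu integer structure, with the middle and bottom levels unchanged. Your write-up is in fact more careful than the paper's (handling $\epsilon\ge1$, noting that no new vertices are introduced so coordinates stay bounded by $U$, and checking that $\lg N/\lg\lg N=\Theta(\lg n/\lg\lg n)$); the only minor difference is that the paper asserts the top-level structure is built in linear time rather than invoking the polylog slack you allow.
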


\begin{proof}
We use the same approach as that for Theorem~\ref{thm:pointlocation}, but we choose $b = \epsilon$. 
When we construct the data structures for the top-level partition, we use the approach by Chan~\cite{ch2006} and P\v{a}tra\c{s}cu~\cite{pa2006} to construct the point location structure $P$. 
This way point location on $S'$ can be compute in $O(\min\{\lg n/\lg\lg n,
\sqrt{\lg U}\})$ time. 
As the point location on $S_i'$ and $R_{i,j}'$ can be supported in $O(\lg\lg n)$ time and $O(\lg^b n) = O(\lg^{\epsilon} n)$ time, respectively, the overall time required to answer point location queries on $G$ is $O(\min\{\lg n/\lg\lg n, \sqrt{\lg U}\} +
\lg^{\epsilon} n)$. 

%The other claims in this corollary are easy to prove. 
%Note that when analyzing the preprocessing time, the super-linear preprocessing time of the approach by Seidel and Adamy~\cite{sa2000} is not a problem, 
%does not make the preprocessing time of our approach super-linear, 
%as we apply it to the graph $S_i'$, which has $O(n/\lg^3 n)$ vertices. 
%Thus $P$ can be constructed in $O(n/\lg^3 n \times \lg(n/\lg^3 n)) = O(n/\lg^2 n)$ time. 
%We use the same approach as that for Theorem~\ref{thm:pointlocation}. %, but we choose $b = \epsilon$. 
%When we construct the data structures for the top-level partition, 
%We follow the two-level partitioning structure. In the top-level
%partition, 
%we build a point-location structure of $O(n/\log^3 n)$
%vertices with Chan's $o(\log n)$ time point-location
%structure~\cite{ch2006}. With this structure, we can locate the
%region containing the query point in $O(\min\{\lg n/\lg\lg n,
%\sqrt{\lg U}\})$ time. The size of the data structure is
%$O(n/\lg^{1/2} n)$ bits. In the bottom-level partition, we make
%each subregion contain $O(\lg^{\epsilon} n)$ triangles. Adapting
%the original data structure of regions and subregions (in
%Lemma~\ref{lem:pointlocation}), we can locate the triangle in a
%region and a subregion in $O(\lg\lg n + \lg^{\epsilon} n)$
%time, and this data structure only requires $o(n)$ bits.

%Combining the data structures for two levels, we can answer the
%point-location query in $O(\min\{\lg n/\lg\lg n, \sqrt{\lg U}\} +
%\lg^{\epsilon} n)$ time, and the data structure only requires
%$o(n)$ bits. 
These data structures can still be constructed in $O(n)$ time, as the structure by Chan~\cite{ch2006} and P\v{a}tra\c{s}cu~\cite{pa2006} can be constructed in linear time. \qed
\end{proof}

We then consider the case where query distribution is known. 
\begin{corollary}
Given a planar triangulation $G$ of $n$ vertices, 
there is a succinct geometric index of $o(n)$ bits that supports point location on $G$ in $O(H+1)$ expected time. 
This index can be constructed in $O(n)$ time. 
%Given a planar triangulation $G$ of $n$ vertices, 
%there is an ordering of the vertices such that when their coordinates are stored in this order as a sequence, a succinct geometric index of $o(n)$ bits can be constructed to support point location in $O(H+1)$ expected time. 
%The preprocessing time is $O(n)$. 
%Consider a planar triangulation $G$ with $n$ vertices and $m$ edges, whose vertices have coordinates in the plane. 
%There is an ordering of the vertices such that when the sequence of coordinates of these vertices are stored in this order, a succinct geometric index of $o(n)$ bits can be constructed in $O(n)$ time to support point location in $O(H+1)$ expected time. 
%This ordering can also be computed in $O(n)$ time.  
\end{corollary}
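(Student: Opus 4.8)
The plan is to reuse the two-level partitioning machinery of Theorem~\ref{thm:pointlocation} essentially verbatim, replacing only the point location structure~$P$ used for the top-level separator graph~$S'$ with an entropy-sensitive structure. Concretely, I would choose the parameters $a=3$ and $b=1$ as before, so that the label-conversion data structures of Lemma~\ref{lem:labconversion} and the bottom-level data structures still occupy $o(n)$ bits, and the queries on $S_i'$ and on $R_{i,j}'$ still cost $O(\lg\lg n)$ and $O(\lg^b n)=O(\lg n)$ time respectively. The only change is that, for the graph $S'$ obtained by triangulating the top-level separator together with the outer face of $G$, I would build an $O(|S'|)$-word distribution-sensitive point location structure, namely the one of Iacono~\cite{in2004} (or an equivalent $O(H+1)$-expected-time structure for planar triangulations), instead of Kirkpatrick's worst-case structure. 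Since $|S'|=O(n/\lg^3 n)$, this still occupies $O(n/\lg^2 n)=o(n)$ bits.

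The main step is to argue that the resulting expected query time is $O(H+1)$, where $H=\sum_i p_i\log_2(1/p_i)$ is the entropy of the distribution over the $f$ internal faces of $G$. The subtlety is that the query is answered in two phases: first a point location in $S'$, then a point location in the relevant subregion triangulation $R_{i,j}'$ costing $O(\lg n)$ worst-case time. I would handle this by defining, for each face $F'$ of $S'$, the aggregated probability $P_{F'}=\sum_{F\subseteq F'}\Pr[\text{query in }F]$ (summing over the original faces $F$ of $G$ contained in $F'$), and feeding these aggregated probabilities to Iacono's structure on $S'$. Its expected cost is then $O(H'+1)$ where $H'=\sum_{F'}P_{F'}\log_2(1/P_{F'})$, and since aggregating faces can only decrease entropy, $H'\le H$. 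For the second phase, note that when the query lands in a face of $S'$ that lies inside some region~$R_i$ and then inside subregion~$R_{i,j}$, the $O(\lg n)$ cost is incurred only with probability $P_{R_{i,j}}=\sum_{F\subseteq R_{i,j}}\Pr[\text{query in }F]$, so the expected cost of the second phase is $\sum_{i,j}P_{R_{i,j}}\cdot O(\lg n)$. The key inequality is that $\sum_{i,j}P_{R_{i,j}}\lg n = O(H+1)$: since each subregion $R_{i,j}$ contains $O(\lg^b n)=O(\lg n)$ faces and these face sets partition the faces of~$G$ (up to the $o(n)$-size separators, whose faces can be folded into the first-phase analysis), a standard convexity/grouping argument — exactly the kind used to bound $\sum_i P_i \lg(\text{block size})$ against entropy plus the entropy of the coarsened distribution — gives $\sum_{i,j}P_{R_{i,j}}\log_2(\lg n) \le H + O(1)$ whenever $\log_2(\lg n)$ dominates the ``intra-block'' cost, which it does here since each block has at most $O(\lg n)$ faces.

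The hard part will be making this last grouping bound fully rigorous, in particular dealing with query points that land in a separator face (at the top level $S$, or at some level $S_i$): such a point is reported directly in $O(\lg n)$ or $O(\lg\lg n)$ time, and its probability mass must be charged somewhere. I would charge it to the first-phase entropy term by treating each separator face of $S'$ as its own ``block'' with its own aggregated probability, which is legitimate because Iacono's structure already pays $O(\log_2(1/P_{F'}))$ expected time to reach face~$F'$, and $\log_2(1/P_{F'}) \ge \log_2(1/P_{F'})$ trivially covers the $O(1)$ extra work of reporting three vertices; the $O(\lg\lg n)$ second-level cost for separator faces of $S_i$ is dominated by the $O(\lg n)$ bound already accounted for. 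Finally, the preprocessing remains $O(n)$ because the two-level partition and labeling take $O(n)$ time by Lemma~\ref{lem:plpre}, the aggregated probabilities $P_{F'}$ and $P_{R_{i,j}}$ can be accumulated in $O(n)$ time by a single pass over the faces of~$G$, and Iacono's structure on the $O(n/\lg^3 n)$-vertex graph~$S'$ is built in time linear in its size (or, if it is not, in time $o(n)$ since the graph is so small). This completes the proof.
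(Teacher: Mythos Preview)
Your second-phase bound is where the argument breaks. You need $\sum_{i,j}P_{R_{i,j}}\cdot\lg n=O(H+1)$, but since the subregions (together with the separator faces) cover all of $G$, $\sum_{i,j}P_{R_{i,j}}$ is essentially $1$, so the left side is $\Theta(\lg n)$ independent of $H$. The grouping inequality you invoke bounds $\sum_j P_j\cdot\log(\text{block size})=\sum_j P_j\cdot\lg\lg n$ by the intra-block entropy, not $\sum_j P_j\cdot(\text{block size})=\sum_j P_j\cdot\lg n$; these differ by a $\lg n/\lg\lg n$ factor and the latter simply cannot be charged to $H$. Concretely, let one face $F^{*}$ have probability $1-\epsilon$, so $H=O(1)$. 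Your face-count-based partition is oblivious to probabilities, so $F^{*}$ lands in some subregion $R_{i,j}$, and with probability $1-\epsilon$ the query pays the full $\Theta(\lg n)$ to decode and scan $R_{i,j}$. No convexity argument rescues this.

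The missing idea, and what the paper does, is to use a \emph{probability-weighted} face separator at a new top level. Applying Lemma~\ref{lem:tseparatorweight} to the dual with the face probabilities as vertex weights yields a $t$-face separator $S''$ with $t=\lg^3 f/f$ whose removal leaves ``super regions'' each of total \emph{probability} at most $\lg^3 n/n$. Hence every face $z\notin S''$ has $p_z\le\lg^3 n/n$, so $\lg(1/p_z)\ge\lg n-3\lg\lg n$, and the $O(\lg n)$ time spent inside its super region is $O(\lg(1/p_z))$ and thus chargeable to $H$. Iacono's structure is built on the triangulation $G''$ of $S''$ plus the outer face (which has $O(n/\lg^{3/2}n)$ vertices, so $o(n)$ bits); faces of $S''$ keep (half of) their true probability and are located in $O(\lg(1/p))$ time, while the artificial triangles inside super regions get probability $\Theta(1/n)$ so Iacono still reaches them in $O(\lg n)$ time. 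Inside each super region one then runs the ordinary two-level scheme of Theorem~\ref{thm:pointlocation}, giving a three-level partition overall. Without the weighted separator, high-probability faces cannot be kept out of the brute-force bottom level, and the $O(H+1)$ bound fails.
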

\begin{proof}
If the probability of a face or a set of faces containing a query point is $p$, we say that this face or this set of faces has probability $p$. 
In this proof, we define the $t$-face separator to be the set of faces of $G$ whose removal partitions $G$ into adjacent face components each of which has probability at most $t$. 
We consider graph $G^*$, which is the dual graph of $G$ excluding the vertex corresponding to the outer face of $G$ and its incident edges, and assign the probability of each face of $G$ to its corresponding vertex in $G^*$. 
By Lemma~\ref{lem:tseparatorweight}, the following lemma is immediate:

\begin{lemma}
\label{lem:tfaceprob}
Consider a planar triangulation $G$ of $f$ internal faces, with probability associated to each face. 
For any $t$ such that $0<t<1$, there is a $t$-face separator consisting of $O(\sqrt{f/t})$ faces that can be computed in $O(n)$ time. 
\end{lemma}

Observe that Lemmas~\ref{lem:componentno} and \ref{lem:dup} also apply to $t$-face separators for graphs whose faces are associated with probabilities. 
This is because we prove these two lemmas by bounding the number of edges in the separator, which has nothing to do with probabilities. 

We choose $t=\lg^3 f / f$ to apply Lemma~\ref{lem:tfaceprob} to $G$. 
Let $S''$ be the $t$-face separator. 
Then $S''$ has $O(n/\lg^{3/2} n)$ vertices. 
We call each adjacent face component of $G\setminus S$ a {\em super region}. 
Thus the number of super regions $O(n/\lg^{3/2} n)$ vertices, and the sum of the duplication degrees of the boundary vertices of all the super regions is also $O(n/\lg^{3/2} n)$. 
Note that we can no longer prove that each super region has $o(n)$ vertices; this is because a super region can have a large number of faces with very low probabilities. 
Thus we further perform a two-level partition on each super region as in the proof of Lemma~\ref{lem:pointlocation}. 
Therefore, we actually perform a three-level partition on $G$, and we call them first-level, second-level and third-level partitions from top down. 
It is straightforward to extend the techniques in Section~\ref{sec:twolevel} to this case to compute the permuted sequence of the vertices, and to perform conversions between the labels assigned to the same vertices at different levels of the partition. 

For the first level partition, we construct a triangulated graph $G''$ by triangulating the graph consisting of $S''$ and the outer face of $G$. 
To assign a probability to each face of $G''$, initially we let the probability of each face of $S''$ to be the same as its probability in $G$, and let the probability of any other internal face to be $1/n$. 
However, the sum of all the probabilities of the faces of $G''$ can be larger than $1$, though it is at most $2$. 
We thus reduce the probability of each face of $G''$ by a constant ratio, so that the sum becomes $1$. 
It is clear that the above process reduces the probability of each face by at most half. 
Therefore, the probability of each face of $S''$ in $G''$ is at least half of that in $G$, and the probability of each internal face of $G''$ that is not in $S''$ is at least $1/(2n)$. 
We construct a point location structure, $P''$, for $G''$ using the approach of Iacono~\cite{ia2004}, or any linear-space structure that answers point location in $O(\lg (1/p))$ time, if the query point is contained in a face of probability $p$. 
$P''$ occupies $O(n/\lg^{1/2} n)$ bits. 
We also store additional information for each face of $G''$ to indicate whether it is a face in $S''$ and, if not, which super region it is in. 
For the second-level and third-level structures, we construct data structures similar to those constructed in Lemma~\ref{lem:pointlocation}. 
The algorithm to answer point location queries is similar, except that we now perform operations at three levels of partition. 

To analyze the query time, it is sufficient to show that, if the face, $z$, of $G$ that contains the query point $x$ has probability $p$, the query can be answered in deterministic time $O(min\{\lg n, \lg (1/p)\})$. 
There are two cases. 
First, $z$ is a face in $S''$. 
In this case, we need only use $P''$ to retrieve the result. 
By Iacono's result~\cite{ia2004}, the time required is $O(\min\{\lg n, \lg (1/p')\}$, where $p'$ is the probability of $z$ in $G''$. 
By the analysis in the above paragraph, $p' > p/2$. 
Thus the claim is true in this case. 
Second, $z$ is not a face in $S''$. 
In this case, the query is answered in $O(\lg n)$ time. 
Thus it suffices to prove that $O(\min\{\lg n, \lg (1/p)\}) = O(\lg n)$. 
Recall that each super region has probability at most $t=\lg^3 n / n$. 
As $z$ is part of a super region, we have $p \le \lg^3 n / n$. 
Thus $\lg (1/p) \ge \lg n - 3\lg\lg n$, and the claim follows. 

It is straightforward to show that the space cost is $o(n)$ bits and that the preprocessing time is $O(n)$. 
\qed
\end{proof}

\begin{REMOVED}
\subsection{Practical Considerations}
The previous results we use to prove Theorem~\ref{thm:pointlocation} have practical implementations. 
For example, we can use practical algorithms to triangulate a polygon when implementing our data structures. 
%For example, there is a practical implementation of succinct bit vectors~\ref{}. 
The main challenge of implementing our data structures is the fact that Lemma~\ref{lem:encodetriag} require the number of vertices to be larger than $1090$. 
Recall that we use Lemma~\ref{lem:encodetriag} to encode a subregion of the graph, which has $O(\lg^b n)$ vertices. 
For the current applications, $O(\lg^b n)$ is not sufficiently large. 
However, we can combine a constant number of subregions when we use Lemma~\ref{lem:encodetriag} to encode them. 
This is because the main idea of Lemma~\ref{lem:encodetriag} is to use the permutation of a constant number of vertices to encode enough information, so the number of permuted vertices has to be large enough. 
When we combine a certain constant number of subregions, we can guarantee that this number is large enough. 
We thus expect it feasible to implement our succinct geometric index. 
\end{REMOVED}

\section{Point Location in Planar Subdivisions}
We now generalize the techniques in Section~\ref{sec:triangulated} to design succinct geometric indexes supporting point location queries in general planar subdivisions. 
In this section, we adopt the assumption that a planar subdivision is contained inside a bounding simple polygon (i.e. it does not have any infinite faces), and each face is also a simple polygon.

\subsection{Encoding a Planar Subdivision by Permuting Its Point Set}

We now generalize Lemma~\ref{lem:encodetriag} to the case of planar subdivisions.

\begin{lemma}
\label{lem:encodesub}
Given a planar subdivision of $n$ vertices for sufficiently large $n$, there exists an algorithm that can encode it as a permutation of its point set in $O(n)$ time, such that the subdivision can be decoded from this permutation in $O(n)$ time. 
\end{lemma}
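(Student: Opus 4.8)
The plan is to reduce the planar-subdivision case to the already-established triangulation case (Lemma~\ref{lem:encodetriag}) by triangulating each face, while being careful that the triangulation can be recovered without extra stored information. First I would triangulate every face of the subdivision (including the outer face, by surrounding the whole subdivision with a bounding triangle as in the triangulation section), using Chazelle's linear-time polygon triangulation algorithm~\cite{ch1991}; this produces a planar triangulation $T$ on the same vertex set, in $O(n)$ total time. If one could simply apply Lemma~\ref{lem:encodetriag} to $T$ and also mark which edges of $T$ are ``real'' (original subdivision edges) versus ``added'' (diagonals of the triangulation), the subdivision would be fully recoverable; the whole difficulty is to encode this $O(n)$-bit edge-classification inside the permutation itself rather than in an auxiliary structure.

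The key steps, in order, would be: (1) fix a canonical, deterministic triangulation rule so that, given only the combinatorial triangulation $T$, both encoder and decoder agree on which triangulation would have been produced --- but since the triangulation of a face is not unique, instead I would (2) use the slack in Lemma~\ref{lem:encodetriag}: that lemma encodes the triangulation as a permutation but does \emph{not} use up all $\lg(n!)$ bits of a permutation of $n$ points, so a constant fraction of the permutation entropy (indeed $\Theta(n)$ bits, since the triangulation has $\Theta(n)$ bits of entropy and a permutation has $\Theta(n\lg n)$ bits) is available to carry a side channel. Concretely, (3) I would run the Denny--Sohler encoding on $T$ to get a base permutation, then perturb it within its equivalence class (the set of permutations that decode to the same triangulation $T$) to additionally encode the bit string describing which of the $O(n)$ edges of $T$ are diagonals. (4) The decoder first runs the Denny--Sohler decoder to recover $T$, then reads off the perturbation to recover the diagonal-marking bits, then deletes the marked diagonals to recover the original subdivision; everything is linear time. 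One must check that the number of distinct permutations decoding to a fixed $n$-vertex triangulation is at least $2^{cn}$ for some $c>0$ and large $n$ --- this follows because Denny--Sohler only constrains a sublinear ``header'' portion of the permutation and leaves the ordering of the remaining $\Theta(n)$ points essentially free, giving $2^{\Omega(n\lg n)} \gg 2^{O(n)}$ room --- so the $O(n)$ classification bits fit.

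I would then assemble these pieces into the statement of Lemma~\ref{lem:encodesub}: the encoding is $O(n)$ time (triangulation by Chazelle, plus Denny--Sohler, plus writing the side channel), the decoding is $O(n)$ time (Denny--Sohler decode, read side channel, delete diagonals), and ``sufficiently large $n$'' absorbs both the $n > 1090$ hypothesis of Lemma~\ref{lem:encodetriag} and the threshold beyond which the free-permutation space provably exceeds $2^{O(n)}$.

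The main obstacle I anticipate is making the ``side channel inside the permutation'' argument fully rigorous: one needs a clean, algorithmically efficient bijection between (triangulation $T$, classification string $c$) pairs and permutations, such that both directions are computable in linear time and such that the decoder can tell where the Denny--Sohler-relevant part of the permutation ends and the side-channel part begins --- this is essentially a careful re-examination of the internals of Lemma~\ref{lem:encodetriag} rather than a black-box use of it. An alternative, cleaner route that sidesteps peeking inside Denny--Sohler would be to note that the classification of $O(n)$ edges is itself $O(n)$ bits, i.e. $O(n/\lg n)$ words, which could in principle be folded into the permutation of a separate block of $O(n/\lg n)$ sacrificial vertices by permuting them to spell out the bits; I would present whichever of these is cleanest, but in either case the crux is the same: squeezing $\Theta(n)$ extra bits into the permutation without spending any auxiliary space.
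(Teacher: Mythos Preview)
Your approach is essentially the paper's: triangulate the subdivision (after adding a bounding triangle), then piggyback the $O(n)$ edge-classification bits onto the Denny--Sohler permutation encoding of the resulting triangulation $T$. The paper makes this concrete in exactly the way you anticipate --- by opening up Denny--Sohler rather than treating it as a black box --- but the slack is not where you locate it: Denny--Sohler does not leave a free tail of $\Theta(n)$ unconstrained points; rather, at each peeling round it partitions the removed independent set into constant-size blocks and permutes within each block to encode $\lg 41$ bits per vertex (the reinsertion choice among at most $41$ possibilities). The paper simply enlarges this block-size constant so that each block's permutation carries $\lg 41 + 6$ bits per vertex, the extra $6$ bits recording, for each of the at most six edges drawn when that vertex is reinserted during decoding, whether the edge belongs to the original subdivision $G$ or is a triangulation diagonal. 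The decoder reconstructs $T$ as before, reads the flags as it goes, and deletes the marked diagonals at the end --- linear time in both directions, and it sidesteps the bijection-engineering you correctly flag as the main obstacle in a black-box approach.
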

\begin{proof}
To encode the given planar subdivision $G$, we first surround it using a bounding triangle, triangulate it, and denote the resulted graph by $T$. 
This process takes $O(n)$ time using the approach in the proof of Lemma~\ref{lem:plpre}. 
Lemma~\ref{lem:encodetriag} is sufficient to encode $T$, but we need encode more information in order to decode $G$ as we add more edges when triangulating it. 
We show how to encode and decode such information in the rest of this proof. 

To use Lemma~\ref{lem:encodetriag} to encode $T$, we first compute a maximal independent set, $I$, of vertices of $T$ with degree at most $6$. 
The size of $I$ is at least $n/10$ as shown by Denny and Sohler~\cite{ds1997}. 
We remove the vertices in $I$ from $T$ and re-triangulate $T$. 
We then visit the new triangulation in a canonical way such as BFS, and order the vertices in $I$ by the order of the triangles that contain them. 
We divide the vertices in $I$ into sets of the same constant size, and permute each set to encode enough information so that given the new triangulation and a vertex in $I$ (it is sufficient to visit all the faces of the new triangulation to determine each face that contains a vertex in $I$), we know how to insert it into the new triangulation to reconstruct $T$.  
As each vertex in $I$ has degree at most $6$, its removal from $T$ creates a polygon of size at most $6$. 
Based on this, Denny and Sohler~\cite{ds1997} proved that there are at most $41$ possibilities of inserting a vertex. 
Thus each set of $I$ only has to be large enough so that the permutations of its subsets is sufficient to encode $\lg 41$ bits for each point in $I$. 
We continue this process until all the vertices except the vertices in the outer triangular face are removed. 

To modify the above process to encode enough information to indicate which edge of $T$ is present in $G$, we observe that, to decode $T$, when we insert a vertex into the current triangulation, we determine its neighbours in the previous version of this triangulation, remove the edges in the polygon defined by these neighbours, and draw an edge between this vertex and each of its neighbours. 
Therefore, we draw at most $6$ edges when we insert a vertex. 
To encode whether each of these edge is an edge in $G\setminus T$, we need $6$ bits. 
Thus we make each subset of $I$ to be large enough to encode $\lg 41 + 6$ bits of information for each vertex in it. 
When we decode $T$, each time we insert a vertex to the triangulation and draw an edge between it and one of its neighbours, we use the encoded information and store a flag for each edge to indicate whether it is an edge in $G\setminus T$. 
Once we decode $T$, we visit all its edges to remove those in $G\setminus T$ to get $G$. 

The processes of encoding and decoding clearly take $O(n)$ time. 
\qed
\end{proof}

\subsection{Partitioning a Planar Subdivision by Removing Faces}
We first generalize the $t$-face separators defined for planar triangulations to planar subdivisions. Note that the definition of adjacent face component in Section~\ref{sec:partriag} can be directly applied to planar subdivisions. 
We have the following definition: 

\begin{definition}
Consider a planar subdivision $G$ with $f$ internal faces. 
A {\bf $\boldmath{t}$-face separator} of $G$ is a set of its internal faces of size $O(\sqrt{f/t})$ whose removal from $G$ leaves no adjacent face component of more than $tf$ faces. 
\end{definition}

Observe that in the proof of Lemma~\ref{lem:tface}, we do not make use of the fact that each face of a planar triangulation is a triangle. 
%Lemma~\ref{lem:tseparator} also applies to planar graphs that are not simple. 
Thus we have the following lemma:
\begin{lemma}
\label{lem:tfacesub}
Consider a planar subdivision $G$ with $f$ internal faces. 
For any $t$ such that $0<t<1$, there is a $t$-face separator consisting of $O(\sqrt{f/t})$ faces that can be computed in $O(n)$ time. 
\end{lemma}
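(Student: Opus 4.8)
The statement to prove is Lemma~\ref{lem:tfacesub}: a planar subdivision with $f$ internal faces has, for any $0<t<1$, a $t$-face separator of $O(\sqrt{f/t})$ faces computable in $O(n)$ time. The excerpt's hint — "in the proof of Lemma~\ref{lem:tface}, we do not make use of the fact that each face of a planar triangulation is a triangle" — essentially tells me the proof should be a near-verbatim replay of the proof of Lemma~\ref{lem:tface}, and I should just check the dual-graph argument still goes through.

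Let me think about what could actually break. In the triangulated case, $G^*$ is the dual of $G$ minus the outer-face vertex; it has $f$ vertices, and the key facts used were (i) $G^*$ is a simple planar graph, so Lemma~\ref{lem:tseparator} applies, and (ii) a connected component of $G^* \setminus S^*$ of at most $tf$ vertices corresponds to an adjacent face component of $G$ of at most $tf$ faces. For a general planar subdivision inside a bounding polygon, the dual of the internal faces is still planar; the only subtlety is whether it is \emph{simple} (no loops, no multi-edges). A loop would require a face adjacent to itself along an edge, which cannot happen in a subdivision where each face is a simple polygon; a multi-edge would require two internal faces sharing two distinct edges. Unlike triangulations, two faces of a general subdivision \emph{can} share more than one edge. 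I expect the fix is the standard one: Lemma~\ref{lem:tseparator} (Aleksandrov–Djidjev) is typically stated for planar graphs and does not actually require simplicity — multi-edges and loops don't affect the separator cardinality bound — so one simply notes this, or alternatively one works with the simple planar graph obtained by suppressing multiplicities (which only shrinks the edge set and leaves vertices and the notion of connected component unchanged). Also "adjacency" for faces still means "sharing a common edge," and the correspondence between connected components of the dual and adjacent face components is definitional, exactly as before.

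So the plan is: First, form $G^*$, the dual graph of $G$ restricted to the $f$ internal faces (equivalently, the dual of $G$ with the outer-face vertex and its incident edges deleted); it has $f$ vertices, one per internal face, with an edge between two vertices whenever the corresponding faces share an edge of $G$. Second, observe $G^*$ is planar (it inherits planarity from the embedding of $G$) and that, whether or not it happens to be simple, Lemma~\ref{lem:tseparator} applies with all weights set to $1$, giving a $t$-separator $S^*$ of $O(\sqrt{f/t})$ vertices computable in $O(n)$ time (the number of vertices and edges of $G^*$ is $O(n)$ by Euler's formula, so this is $O(n)$ as claimed). Third, let $S$ be the set of internal faces of $G$ corresponding to $S^*$; then $|S| = O(\sqrt{f/t})$. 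Fourth, translate back: removing $S^*$ from $G^*$ leaves no connected component with more than $tf$ vertices, which by the definition of adjacency for faces and of adjacent face components means removing $S$ from $G$ leaves no adjacent face component with more than $tf$ faces. Hence $S$ is a $t$-face separator, which completes the proof.

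The main (and only real) obstacle is the simplicity issue for $G^*$, which I would dispatch in a sentence — either by remarking that Lemma~\ref{lem:tseparator} as stated covers multigraphs too, or by passing to the underlying simple graph, since collapsing parallel dual edges changes neither the vertex set nor the partition into connected components. Everything else is a transcription of the proof of Lemma~\ref{lem:tface} with "triangulation" replaced by "subdivision" and "triangle" dropped wherever it appeared, exactly as the preceding sentence in the excerpt advertises. I would keep the written proof correspondingly short, just pointing to the proof of Lemma~\ref{lem:tface} and flagging the one place where generality matters.
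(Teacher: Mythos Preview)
Your proposal is correct and follows essentially the same approach as the paper: the paper's entire ``proof'' is the sentence preceding the lemma, which simply asserts that the argument for Lemma~\ref{lem:tface} goes through unchanged because it never used the triangle shape of faces. You are in fact being more careful than the paper, since you notice that the line ``As $G$ is a planar triangulation, $G^*$ is a simple planar graph'' in the proof of Lemma~\ref{lem:tface} does appeal to the triangulation hypothesis, and you correctly observe that either Lemma~\ref{lem:tseparator} does not require simplicity or one may pass to the underlying simple graph without affecting connected components; the paper does not address this point at all.
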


We also define the notion of boundary, boundary vertex, internal vertex and duplication degree on planar subdivisions as in Section~\ref{sec:partriag}. 
Same as the case of planar triangulations, we can bound the number of adjacent face components and the sum of the duplication degrees of boundary vertices after removing a $t$-face separator from a planar subdivision. 
The only difference is when we count the number of edges in the $t$-face separator, we can no longer use the fact that each face has $3$ edges. 
Instead, we make use of the maximum number of vertices of any internal face of the planar subdivision to bound these two values. 
Therefore, we have the following two lemmas:

\begin{lemma}
\label{lem:componentnosub}
Consider a planar subdivision $G$ with $f$ internal faces and a $t$-face separator $S$ constructed using Lemma~\ref{lem:tfacesub}. 
The number of adjacent face components of $G\setminus S$ is $O(k\sqrt{f/t})$, where $k$ is the maximum number of vertices of any internal face of $G$. 
\end{lemma}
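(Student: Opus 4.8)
The plan is to reuse the argument proving Lemma~\ref{lem:componentno} almost verbatim, changing only the one place where it exploits that every face of a planar triangulation has exactly three edges. First I would recall, exactly as in Section~\ref{sec:partriag}, that no edge can lie on the boundaries of two distinct adjacent face components of $G\setminus S$, since such an edge would be shared by two internal faces belonging to different components, contradicting the definition of an adjacent face component. Because the boundary of each component is a union of simple cycles (non-empty whenever $S\neq\emptyset$; the case $S=\emptyset$ being trivial), the number of adjacent face components of $G\setminus S$ is at most the total number of edges lying on these boundaries, so it suffices to bound that number.

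Next I would observe that every such boundary edge is shared by an internal face of some component and a face of the $t$-face separator $S$, together with the edges of the outer face of $G$; the latter contribute only $O(1)$, either because we may first enclose $G$ in a bounding triangle (as done in the proof of Lemma~\ref{lem:encodesub}) or because, as in the proof of Lemma~\ref{lem:componentno}, these edges are negligible compared with the main term. Hence the number of boundary edges is at most the sum, taken over all faces in $S$, of the number of edges of that face, plus $O(1)$.

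Finally --- and this is the only step that genuinely differs from Lemma~\ref{lem:componentno} --- since by hypothesis every internal face of $G$ has at most $k$ vertices, and hence at most $k$ edges, the above sum is at most $k\cdot|S|$. Lemma~\ref{lem:tfacesub} gives $|S|=O(\sqrt{f/t})$, so the number of adjacent face components is $O(k\sqrt{f/t})$, as claimed. I do not expect any real obstacle: the whole argument is a routine adaptation, and the only point that needs a moment of care is the treatment of the outer face of $G$, which no longer has a bounded number of edges in general and so must either be handled by the bounding-triangle device or absorbed into the $k\cdot|S|$ term.
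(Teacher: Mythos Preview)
Your proposal is correct and follows essentially the same approach the paper takes: the paper explicitly states that the proof of this lemma is identical to that of Lemma~\ref{lem:componentno} except that, when counting boundary edges, one can no longer use the fact that each face has three edges and must instead use the bound of $k$ edges per internal face. Your handling of the outer face via a bounding triangle is appropriate (the alternative of absorbing it into the $k\cdot|S|$ term does not quite work, since the outer face is not internal and need not have at most $k$ edges), but this is a minor technicality and does not affect the argument.
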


\begin{lemma}
\label{lem:dupsub}
Consider a planar subdivision $G$ with $f$ internal faces and a $t$-face separator $S$ constructed using Lemma~\ref{lem:tfacesub}. 
The sum of the duplication degrees of all its boundary vertices is $O(k\sqrt{f/t})$, where $k$ is the maximum number of vertices of any internal face of $G$. 
\end{lemma}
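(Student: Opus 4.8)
The plan is to mimic the proof of Lemma~\ref{lem:dup} almost verbatim, adjusting only the final counting step to account for faces of the subdivision having more than three edges. First I would recall the structural facts that were already established (and that, as the text notes, do not depend on faces being triangles): the boundary of an adjacent face component of $G\setminus S$ consists of a collection of simple cycles, no edge lies in two distinct such cycles, and the duplication degree of a boundary vertex is exactly the number of these simple cycles it belongs to. From this, summing over all boundary vertices of all components, the sum of the duplication degrees equals the total number of (vertex, cycle) incidences, which equals the total number of edges lying on the boundary cycles, since each such edge contributes exactly two endpoint incidences and each cycle has as many edges as vertices.

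The next step is to observe that every boundary edge is, by definition of the boundary, an edge of some face of the $t$-face separator $S$. Hence the number of boundary edges is at most the sum over all faces $F \in S$ of the number of edges of $F$. In the triangulation case this sum was bounded by $3|S|$; here each internal face has at most $k$ edges (where $k$ is the maximum number of vertices of any internal face of $G$, equivalently its number of edges since each face is a simple polygon), so the sum is at most $k|S|$. Since Lemma~\ref{lem:tfacesub} guarantees $|S| = O(\sqrt{f/t})$, we conclude that the sum of the duplication degrees of all boundary vertices is $O(k\sqrt{f/t})$, as claimed. (The analogous adjustment gives Lemma~\ref{lem:componentnosub}: the number of adjacent face components is at most the number of boundary edges, which is again $O(k\sqrt{f/t})$.)

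I do not expect any real obstacle here — the argument is a routine generalization. The one point to be slightly careful about is making sure that the ``no edge in two cycles'' property and the description of the boundary as a union of simple cycles genuinely survive the passage from triangulations to general subdivisions; but this follows from the same reasoning as before (if an edge were shared by two boundary cycles of different components, two faces in different adjacent face components would share an edge, contradicting the definition of adjacent face component), and it uses nothing about face degrees. So the only substantive change is replacing the constant $3$ by the parameter $k$ in the edge count, which is exactly what produces the extra factor of $k$ in the stated bound.
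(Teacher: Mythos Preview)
Your proposal is correct and follows essentially the same approach as the paper: the paper explicitly states that the proof of Lemma~\ref{lem:dup} carries over, with the only change being that one replaces the factor $3$ (edges per triangle) by the parameter $k$ when bounding the number of edges contributed by the faces of $S$. Your write-up is in fact slightly more careful than the paper's sketch, spelling out why the structural facts about boundary cycles survive the generalization and why the edge count equals the sum of duplication degrees.
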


\subsection{The Two-Level Partitioning Scheme}
\label{sec:twolevelsub}

We now use Lemma~\ref{lem:tfacesub} to partition the planar subdivision $G$. 
Recall that $n$, $m$ and $f$ denote the numbers of vertices, edges and internal faces of $G$, respectively. 
However, we cannot use Lemma~\ref{lem:tfacesub} directly, as $f$ can be as small as $1$ for any $n$. 
Instead, we divide the faces of the planar subdivision that have sufficiently many vertices into smaller faces whose sizes are bounded by non-constant parameters. 
It may seem odd not to simply divide the faces into triangles, but it is crucial to choose a non-constant parameter for our solution. 
We have the following lemma:

\begin{lemma}
\label{lem:facediv}
Consider a simple polygon $P$ of $n$ vertices. 
Given an integer $l$ where $l < n$, there is an $O(n)$-time algorithm that can, by adding edges between the vertices of $P$ that only intersect at the vertices of $P$, 
divide the interior of $P$ into a planar subdivision such that each internal face has at least $l$ vertices (with the exception of at most one internal face) and at most $3l$ vertices. 
\end{lemma}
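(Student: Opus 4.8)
The plan is to reduce the statement to a balanced partition of a tree. First I would triangulate the interior of $P$ in $O(n)$ time using Chazelle's algorithm~\cite{ch1991}; this yields $n-2$ triangles, each of whose vertices is a vertex of $P$. Let $\mathcal{T}$ be the dual graph of this triangulation, with one node per triangle and an edge joining two triangles that share a diagonal. Since $P$ is simply connected, $\mathcal{T}$ is a tree, and since every triangle has three sides, $\mathcal{T}$ has maximum degree $3$. The final subdivision will consist of the edges of $P$ together with those triangulation diagonals that separate two parts of a partition of $\mathcal{T}$ into connected pieces; it then suffices to choose the partition so that each piece, viewed as a polygon, has the required number of vertices.

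The one geometric fact I will need is that the union of the triangles of any connected subtree $\mathcal{S}\subseteq\mathcal{T}$ is a simple polygon, which moreover has exactly $k+2$ vertices when $\mathcal{S}$ has $k$ nodes. The first assertion I would prove by assembling this union one triangle at a time along a traversal of $\mathcal{S}$: each new triangle is glued to the current region along exactly one edge, and because the triangles incident to the apex of the new triangle form a contiguous path of $\mathcal{T}$ through it — so none of them other than the new triangle itself has been added yet — the new triangle touches the current region nowhere else, preserving simple connectivity. Given this, the vertex count $k+2$ follows from Euler's formula: a simply connected union of $k$ triangles with no vertex in its interior has exactly $k-1$ interior edges and hence $k+2$ boundary edges.

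It remains to partition $\mathcal{T}$ into connected pieces, each with at most $3l-2$ nodes and all but at most one with at least $l-2$ nodes. I would root $\mathcal{T}$ at a leaf and process it bottom up, maintaining at each node $v$ a count $\mathrm{cnt}(v) = 1 + \sum_c \mathrm{cnt}(c)$ taken over the children $c$ of $v$ whose pieces have not yet been split off; the first time $\mathrm{cnt}(v)$ reaches $l-2$ I split off the as-yet-uncut subtree at $v$ as a finished piece, after which $v$ contributes $0$ to its parent. An uncut node has count at most $l-3$ and a non-root node has at most two children, so $\mathrm{cnt}(v)$ never exceeds $2l-5$; thus every finished piece has between $l-2$ and $2l-5$ triangles, i.e.\ is a face with between $l$ and $2l-3 \le 3l$ vertices. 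The root has a single child, so the leftover piece that contains it has at most $l-2$ triangles, hence at most $l$ vertices — this is the one face allowed to have fewer than $l$ vertices — and when $l < n$ one checks that at least one split occurs, so this leftover is indeed a genuine piece. Building the triangulation, extracting $\mathcal{T}$, and the single bottom-up pass are each $O(n)$, and the added diagonals, being a subset of the triangulation's, meet only at vertices of $P$.

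The main obstacle is making the ``no pinch'' step of the geometric fact precise — verifying that the next triangle in the traversal cannot meet the current region at an isolated vertex and so close off a hole — since the tree partition and the accompanying arithmetic are routine. I would therefore devote the bulk of the write-up to that step (or cite it as folklore) and keep the rest terse.
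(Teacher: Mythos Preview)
Your approach is essentially the same as the paper's: triangulate with Chazelle, pass to the dual tree (bounded degree because every face is a triangle), and partition that tree into connected pieces of controlled size, so that each piece's union of triangles is a face of the desired subdivision. The paper simply cites the tree-partition algorithm of Munro, Raman and Storm~\cite{mrs2001} for the last step, whereas you spell out an explicit bottom-up greedy cut; you are also more careful than the paper in two places --- you make the $k\mapsto k+2$ conversion from triangle count to boundary-vertex count explicit (the paper's proof states the target subtree sizes as $l$ to $3l$ without this adjustment), and you justify that each piece is actually a simple polygon rather than pinched.
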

\begin{proof}
We first triangulate $P$ in $O(n)$ time~\cite{ch1991} and denote the resulted graph by $P'$. Note that we do not add a triangular outer face when triangulating $P$. 
As each internal face of $P'$ is a triangle, each vertex of the dual graph, $P^*$, of $P'$ (without considering the outer face) has degree at most $3$. 
The BFS tree, $T$, of $P^*$ is thus a binary tree. 
It only suffices to prove that we can partition $T$ into subtrees of size at least $l$ (with exception of at most one subtree) and at most $3l$ in $O(n)$ time. 
One way of achieving this is to apply the partition algorithm of Munro, Raman and Storm~\cite{mrs2001}. 
\qed
\end{proof}

With the above lemma, we can now present our partitioning scheme. 
We choose $l = \lg^2 n/3$ and use Lemma~\ref{lem:facediv} to divide each internal face of $G$ that has more than $\lg^2 n$ vertices into smaller faces. 
We denote the resulted graph by $G'$. 
Thus any internal face of $G'$ has at most $\lg^2 n$ vertices. 
We call a face of $G'$ that is a face of $G$ an {\em original face}, and a face of $G'$ that is part of a larger face of $G$ a {\em modified face}. 
By Lemma~\ref{lem:facediv}, among the modified faces of $G$' that are in the same face of $G$, there is at most one modified face that has less than $l$ vertices. 
Therefore, the total number of modified faces of $G'$ is $O(f / l) = O(n / \lg^2 n)$. 
Let $f'$ be the number of internal faces of $G'$. 
Then we have $ 4(2n-5)/\lg^2 n \le f' \le 2n-5$. 

For the top-level partition, we choose $t = \lg^8 f' / f'$ to apply Lemma~\ref{lem:tfacesub} on $G'$. 
Then the $t$-face separator, $S$, of $G'$ has $O(\sqrt{f'/t'}) = O(f' / \lg^4 f') = O(n/\lg^4 n)$ faces. 
As each face of $G'$ has at most $\lg^2 n$ vertices, the number of vertices in $S$ is at most $O(n/\lg^2 n)$. 
We call each adjacent face component of $G'\setminus S$ a {\em region}. 
By Lemma~\ref{lem:componentnosub}, there are at most $O(n/\lg^2 n)$ regions. 
The number of faces of each region is at most $tf' = \lg^8 f' = O(\lg^8 n)$, so each region has at most $O(\lg^{10} n)$ vertices. 
By Lemma~\ref{lem:dupsub}, the sum of the duplication degrees of the boundary vertices of all the regions is $O(n/\lg^2 n)$. 

Consider a region $R_i$. 
Let $f_i$ and $n_i$ be the number of faces and vertices of $R_i$, respectively. 
Then $f_i = O(\lg^8 n)$ and $n_i =O(\lg^{10} n)$. 
We choose $l = \lg^{1/4} n/3$ to apply Lemma~\ref{lem:facediv} to divide each internal face of $R_i$ that has more than $\lg^{1/4} n$ vertices into smaller faces. 
We denote the resulted graph by $R_i'$. 
Thus any internal face of $R_i'$ has at most $\lg^{1/4} n$ vertices. 
We call a face of $R_i'$ that is a face of $R_i$ an {\em original region face}, and a face of $R_i'$ that is part of a larger face of $R_i$ a {\em modified region face}. 
Same as the analysis for $G'$, we have that the number of modified region faces in $R_i$ is $O(n_i/\lg^{1/4} n)$, so the total number of modified region faces in all the regions of $G'$ is $O(n/\lg^{1/4} n)$. 
Let $f_i'$ be the number of internal faces of $R_i'$. 
We also have $4(2n_i-5)/\lg^{1/4} n \le f_i' \le 2n_i-5$. 

We perform bottom-level partition on each region $R_i$. 
Let $t_i=\lg^{3/4}n / f_i'$. 
We use Lemma~\ref{lem:tfacesub} to compute a $t_i$-face separator, $S_i$, for $R_i'$. 
Then $S_i$ has $O(\sqrt{f_i'/t_i}) = O(f_i'/\lg^{3/8}n)$ faces, so $S_i$ has $O(f_i'/\lg^{1/8} n)$ vertices. 
We call each adjacent face component of $R_i'\setminus S_i$ a {\em subregion} of $R_i'$ (or $R_i$), and we denote the $j${\kth} subregion of $R_i'$ by $R_{i,j}$. 
By Lemma~\ref{lem:componentnosub}, for region $R_i$, there are at most $O(\lg^{1/4} n \times \sqrt{f_i'/t_i}) = O(f_i'/\lg^{1/8} n)$ subregions. 
As $f_i' \le 2n_i-5$, the total number of subregions of all the regions of $G_i$ is $O(n/\lg^{1/8} n)$. 
The number of faces of each region is at most $t_if_i' = \lg^{3/4} n$, so each region has at most $\lg n$ vertices. 
By Lemma~\ref{lem:dupsub}, the sum of the duplication degrees of the boundary vertices of all the subregions of $R_i$ is $O(n_i/\lg^{1/8} n)$, so the sum of the duplication degrees of all the boundary vertices of the subregions in the entire graph $G$ is $O(n/\lg^{1/8} n)$. 

\subsection{Vertex Labels and Face Labels}
\label{sec:labelsub}

We now design a labeling scheme for the vertices based on the two-level partition in Section~\ref{sec:twolevelsub}. 
Same as the case of planar triangulations, we assign a distinct number called {\em graph-label} from the set $[n]$ to each vertex $x$ of $G$. 
Each vertex $x$ also has a {\em region-label} for each region $R_i$ it is in, which is a distinct number from the set $[n_i]$. 
The {\em subregion-label} of $x$ is defined similarly at the subregion level. 

We use the techniques in Section~\ref{sec:label} to assign the labels from bottom up. 
To assign the subregion-labels, 
given a subregion $R_{i,j}$, 
we use Lemma~\ref{lem:encodesub} to permute its vertices. 
If a vertex $x$ in $R_{i,j}$ is the $k${\kth} vertex in this permutation, then the subregion-label of $x$ in $R_{i,j}$ is $k$. 
With the subregion labels assigned, we use exactly the same process in Section~\ref{sec:label} to compute the region-labels and graph-labels of all the vertices. 

As we use the same technique in Section~\ref{sec:label} to label the vertices (except that we use Lemma~\ref{lem:encodesub} instead of Lemma~\ref{lem:encodetriag}), the techniques of Lemma~\ref{lem:labconversion} can be used to perform constant-time conversions from subregion-labels (or region-labels) to region-labels (or graph-labels). 
The analysis of the number of regions/subregions and the sums of the duplication degrees of the boundary vertices of the regions/subregions of $G$ in Section~\ref{sec:twolevelsub} guarantees that the space required is still $o(n)$ bits. 
Thus we have the following lemma:
\begin{lemma}
\label{lem:labconversionsub}
There is a data structure of $o(n)$ bits such that given a vertex $x$ as a subregion-label $k$ in subregion $R_{i,j}$, the region-label of $x$ in $R_i$ can be computed in $O(1)$ time. 
Similarly, there is a data structure of $o(n)$ bits such that given a vertex $x$ as a region-label $k$ in region $R_i$, the graph-label of $x$ can be computed in $O(1)$ time. 
\end{lemma}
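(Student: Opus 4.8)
The plan is to follow the proof of Lemma~\ref{lem:labconversion} essentially verbatim, since the labeling scheme of Section~\ref{sec:labelsub} is literally the one of Section~\ref{sec:label} with Lemma~\ref{lem:encodesub} substituted for Lemma~\ref{lem:encodetriag}; the substitution changes only how the subregion-labels are defined, not the conversion machinery built on top of them. The substantive part is then to re-verify that every space bound from that proof still evaluates to $o(n)$ under the (different and somewhat more delicate) parameter choices made in Section~\ref{sec:twolevelsub}.

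For the subregion-to-region direction I would reintroduce, for each region $R_i$, the conceptual array $A_i$ obtained by concatenating the arrays $A_{i,j}$, where $A_{i,j}[k]$ is the region-label in $R_i$ of the vertex of $R_{i,j}$ with subregion-label $k$, and store, in place of $A_i$, the three structures $B_i$ (the subregion sizes $n_{i,1},\ldots,n_{i,u_i}$ written in unary), $C_i$ (marking the positions along $A_i$ at which a region-label first occurs), and $D_i$ (the region-labels of the vertices corresponding to the $0$s of $C_i$). The conversion is unchanged: set $l = \selop_{B_i}(1,j)+k-1$, and return $\rankop_{C_i}(1,l)$ if $C_i[l]=1$, and $D_i[\rankop_{C_i}(0,l)]$ otherwise, all in $O(1)$ time. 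To keep each region's block addressable in constant time I would again concatenate all the $B_i$'s (respectively all the $C_i$'s, all the $D_i$'s) into one sequence and store an auxiliary bit vector $X$, via part~(b) of Lemma~\ref{lem:ranksel}, marking where each region's block begins; this ``concatenate $+$ marker'' device is exactly what removes any parameter side condition, which is why the second claim of the lemma needs none. The region-to-graph-label direction is the symmetric construction one level up, with regions of $G$ playing the role of subregions and $G$ itself the role of $R_i$.

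The only genuine work is the space accounting, and it goes through because every relevant quantity is polylogarithmic. From Section~\ref{sec:twolevelsub}: each subregion has at most $\lg n$ vertices and each region $O(\lg^{10} n)$ vertices; the number of subregions of $R_i$ is $O(n_i/\lg^{1/8} n)$, so the total number $z=\sum_i u_i$ of subregions is $O(n/\lg^{1/8} n)$ and, since the subregions together cover all $n$ vertices, also $\Omega(n/\lg n)$; and, by Lemma~\ref{lem:dupsub}, $y := \sum_i n_i' = \sum_i h_i + O(n/\lg^{1/8} n) \le n + O(n/\lg^{1/8} n)$. Feeding these into $\lg {y \choose z} \le z\lg\frac{ey}{z}+O(1)$ gives $\lg {y \choose z} = O(n\lg\lg n/\lg^{1/8} n) = o(n)$ bits for the concatenated $B$ (and the same, plus the $O(n\lg\lg n/\lg n)$ overhead of Lemma~\ref{lem:ranksel}(b), for $C$). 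Since $q_i$ is bounded by the duplication-degree sum inside $R_i$, which is $O(n_i/\lg^{1/8} n)$, while each entry of $D_i$ fits in $O(\lg\lg n)$ bits ($n_i' = O(\lg^{10} n)$), all the $D_i$'s together take $O(n\lg\lg n/\lg^{1/8} n) = o(n)$ bits; and $X$ has length $n+o(n)$ with $O(n/\lg^2 n)$ ones, hence $o(n)$ bits. The same estimates, with the region-level counts $O(n/\lg^2 n)$ in place of the subregion-level ones, settle the region-to-graph direction.

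I expect the main obstacle to be purely bookkeeping rather than conceptual: checking that the extra factor $k$ appearing in Lemmas~\ref{lem:componentnosub} and \ref{lem:dupsub} has already been absorbed into the counts quoted from Section~\ref{sec:twolevelsub} (it has, because every face of $R_i'$ has at most $\lg^{1/4} n$ vertices and the subregion count there is stated with that factor), and that the two-sided estimate $\Omega(n/\lg n) \le z \le O(n/\lg^{1/8} n)$ is tight enough for the binomial bound. No new idea beyond Lemma~\ref{lem:labconversion} is needed.
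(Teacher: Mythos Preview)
Your proposal is correct and follows exactly the paper's approach: the paper's own ``proof'' is just the paragraph preceding the lemma, stating that the construction of Lemma~\ref{lem:labconversion} carries over verbatim and that the counts established in Section~\ref{sec:twolevelsub} keep every term $o(n)$; you have simply filled in those re-computations. One minor correction: the reason the second claim carries no side condition like $a>2$ is not the concatenate-and-marker trick (that is already used in Lemma~\ref{lem:labconversion}) but rather that the top-level duplication-degree sum here is $O(n/\lg^2 n)$, so the array $D$ of full $\lg n$-bit graph-labels costs only $O(n/\lg n)=o(n)$ bits.
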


For planar subdivisions, we also need design a labeling schemes for its faces. 
We do not have to do this for planar triangulations, because in that case, each face has three vertices, and it is sufficient to locate these three vertices to return the face. 
However, we cannot do so for general planar subdivisions, because a face may have a large number of vertices, and it may take too much time to return all these vertices. 

For each face of $G$, we assign a distinct number called {\em graph-id} from the set $[f]$. 
This is the identifier we return when answering point location queries. 
For each face in a region $R_i$, we also assign a distinct number called {\em region-id} from the set $[f_i]$. 
Note that a face of the region $R_i$ is not necessarily a face of $G$ (i.e. it can be a modified face instead of an original face). 
For each face in a subregion $R_{i,j}$, we assign a distinct number called {\em subregion-id} from the set $[f_{i,j}]$, where $f_{i,j}$ is the number of faces in $R_{i,j}$. 
Again a face of $R_{i,j}$ is not necessarily a face of $R_i$. 

We number the faces from bottom up. 
For each subregion $R_{i,j}$, we list its faces by a canonical order (such as BFS order) of the corresponding vertices of its dual graph. 
The $k${\kth} face listed is assigned $k$ as its subregion-id in $R_{i,j}$. 
To assign identifiers to a face $x$ in region $R_i$, there are two cases. 
First, we consider the case where $x$ is in at least one subregion of $R_i$, or part of it is a modified region face that is in at least one subregion of $R_i$. 
Let $q_i$ be the number of such faces. 
We assign a distinct number from $[q_i]$ to each such face as its region-id by computing a permuted sequence of all these faces as follows. 
We visit each subregion $R_{i,j}$, for $i=1,2,\cdots, u_i$, where $u_i$ is the number of subregions in $R_i$. 
When we visit $R_{i,j}$, we list all its faces sorted by their subregion-ids in increasing order. 
As some faces of $R_{i,j}$ are modified region faces, we replace these modified region faces by the faces of $R_i$ that they are in. 
This way after we visit all the subregions of $R_{i,j}$, we have a sequence of the faces of $R_i$ that are in this case. 
Note that each face of $R_i$ may occur multiple times in this sequence, and by only keeping its first occurrence in the sequence, we have a permuted sequence of such faces. 
We assign number $k$ to the $k${\kth} face of $R_i$ as its region-id. 
Second, for the case where $x$ or parts of it only exist in $S_i$, we arbitrarily assign a distinct number from the set $[q_i, q_{i+1}, \cdots, f_i]$ to each such face as its region-id. 
The approach to assign graph-ids to the faces of $G$ is similar, except that we perform the above process for the top-level partition. 

Given a face of a subregion (or region) and its subregion-id (or region-id), we need find the identifier of the face in the corresponding region (or in $G$) that contains this face. 
To do this, we have the following lemma: 
\begin{lemma}
\label{lem:faceconversion}
There is a data structure of $o(n)$ bits such that given a face $x$ with subregion-id $k$ in subregion $R_{i,j}$, the region-label of the face in $R_i$ that contains $x$ can be computed in $O(1)$ time. 
Similarly, there is a data structure of $o(n)$ bits such that given a face $x$ with region-id $k$ in region $R_i$, the graph-id of the face of $G$ that contains $x$ can be computed in $O(1)$ time. 
\end{lemma}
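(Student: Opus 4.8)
The plan is to mirror the construction in the proof of Lemma~\ref{lem:labconversion}, replacing vertices by faces throughout. I would again set up, for each region $R_i$, a conceptual array $A_i$ of length $f_i' := \sum_{j=1}^{u_i} f_{i,j}$ obtained by concatenating sub-arrays $A_{i,j}$, where $A_{i,j}[k]$ stores the region-id of the face of $R_i$ that \emph{contains} the face of $R_{i,j}$ whose subregion-id is $k$. (The ``contains'' is the only real difference from Lemma~\ref{lem:labconversion}: a face of a subregion may be a modified region face that is only a piece of a face of $R_i$; but since the region-id of that containing face was precisely the value written into the permuted sequence when we numbered the faces of $R_i$ in Section~\ref{sec:labelsub}, the array $A_i$ still literally holds the answers we want.) As before, I would not store $A_i$ explicitly but instead store: a bit vector $B_i$ encoding $f_{i,1},\dots,f_{i,u_i}$ in unary so that $\mathrm{select}_{B_i}(1,j)$ gives the start of $A_{i,j}$; a bit vector $C_i$ marking the first occurrence in $A_i$ of each region-id; and an array $D_i$ recording the region-id at each position where $C_i$ is $0$. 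The query algorithm is verbatim that of Lemma~\ref{lem:labconversion}: compute $l=\mathrm{select}_{B_i}(1,j)+k-1$, test $C_i[l]$, and return either $\mathrm{rank}_{C_i}(1,l)$ or $D_i[\mathrm{rank}_{C_i}(0,l)]$.

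For the space analysis I would reuse the concatenate-and-mark trick from Lemma~\ref{lem:labconversion}: concatenate all $B_i$ into one bit vector $B$, all $C_i$ into one bit vector $C$, all $D_i$ into one array $D$, and use an auxiliary bit vector $X$ (stored via part (b) of Lemma~\ref{lem:ranksel}) to mark the region boundaries so that rank/select on each $B_i$ and $C_i$ stays $O(1)$. The bounds I need are exactly the face-analogues of those used before, and they were already established in Section~\ref{sec:twolevelsub}: the total number of subregions is $O(n/\lg^{1/8} n)$ (bounding the number of $1$s in $B$ and the length of $X$), each subregion has $O(\lg^{1/4}n)$ faces (giving a matching lower bound $z=\Omega(n/\lg^{1/4}n)$ needed for the $\lg\binom{y}{z}=O(n\lg\lg n/\lg^{1/8}n)$ estimate via $\lg\binom{w}{u}\le u\lg(ew/u)+O(1)$), and the sum of the duplication degrees of all boundary vertices of all subregions is $O(n/\lg^{1/8} n)$, which (together with Lemma~\ref{lem:dupsub} applied to $R_i$) bounds $q_i$, the length of $D_i$; since each entry of $D_i$ fits in $O(\lg\lg n)$ bits, all the $D_i$ together occupy $O(n\lg\lg n/\lg^{1/8}n)=o(n)$ bits. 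Hence the total is $o(n)$ bits. The second claim is the same construction carried out one level up, over the top-level partition, using the region boundary bounds $O(n/\lg^2 n)$ from Section~\ref{sec:twolevelsub} in place of the subregion bounds; here the analysis is if anything easier and imposes no side condition analogous to the $a>2$ restriction of Lemma~\ref{lem:labconversion}.

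The one point that needs slightly more care than in Lemma~\ref{lem:labconversion} is the definition of $A_{i,j}$ for a \emph{modified region face}: I must make sure that when a face of $R_{i,j}$ is only a fragment of a face $F$ of $R_i$, the value stored is the region-id of $F$, and that this is consistent with the ``first occurrence'' bookkeeping in $C_i$ and with how region-ids were assigned in Section~\ref{sec:labelsub}. This is the main obstacle, but it is a definitional/bookkeeping matter rather than a technical one: the numbering procedure for region-ids in Section~\ref{sec:labelsub} was explicitly designed so that the first time such an $F$ is encountered while scanning subregions in order of subregion-id, it receives its region-id, which is exactly the invariant the $C_i/D_i$ pair encodes. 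Everything else is a routine transcription of the earlier proof, so I would state the construction, note the one change, cite the bounds from Section~\ref{sec:twolevelsub}, and conclude. \qed
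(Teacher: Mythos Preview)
Your construction is exactly the paper's: it builds the same conceptual array (there called $E_i$), the same unary bit vector ($F_i$), the same first-occurrence bit vector ($J_i$), and the same overflow array ($K_i$), and it too defers the space analysis to that of Lemma~\ref{lem:labconversion}.

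There is, however, a real gap in your bound on the length of $D_i$. You invoke the sum of duplication degrees of boundary \emph{vertices} (Lemma~\ref{lem:dupsub}) to bound the number of $0$s in $C_i$. But faces behave differently from vertices: a face of $R_i'$ lies in exactly one subregion, since the subregions are the adjacent face components of $R_i'\setminus S_i$ and hence partition its faces. There is no ``boundary-face duplication'' analogous to boundary-vertex duplication. The repetitions in $A_i$ come from a different mechanism, namely the one you flagged only as a definitional matter: a single face $F$ of $R_i$ may have been split by Lemma~\ref{lem:facediv} into several modified region faces of $R_i'$, each contributing a separate entry with the same region-id. The correct bound on the number of $0$s in $C_i$ is therefore the number of modified region faces in $R_i'$, which is $O(n_i/\lg^{1/4} n)$ since only faces of $R_i$ with more than $\lg^{1/4} n$ vertices were split; summed over all regions this is $O(n/\lg^{1/4} n)$, and your $o(n)$ conclusion for $\sum_i |D_i|$ survives. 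The paper opens its proof with exactly this count of modified region faces. (A minor related slip: each subregion has at most $\lg^{3/4} n$ faces, not $\lg^{1/4} n$; this does not affect your $\lg\binom{y}{z}=o(n)$ estimate.)
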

\begin{proof}
We only show how to prove the first part of this lemma; the second part can be proved similarly. 

We use the same notation as in the previous part of this Section. 
%Let $z_i$ be $\sum_{j=1}^{u_i} f_{i,j}$. 
As only faces of $R_i$ that have more than $\lg^2 n$ vertices are divided into modified region faces, the number of modified region faces of $R_i'$ is $O(n_i/\lg^2 n)$. 
Thus $f_i' = f_i + O(n_i/\lg^2 n)$. 

Observe that when we compute the region-ids of the faces of $R_i$, we construct a conceptual array of length $f_i'$ (i.e. the sequence we obtain before we remove the multiple occurrences of faces in it to obtain the permuted sequence of the faces of $R_i$). 
We denote this array by $E_i$. 
$E_i$ have the answers of our queries, but we cannot afford storing it explicitly. 
Instead, we construct the following data structures for each region $R_i$:

\begin{itemize}
\item A bit vector $F_i[1..f_i']$ which stores the numbers $f_{i,1}, f_{i,2}, \cdots, f_{i, u_i}$ in unary, i.e. $F_i = 10^{f_{i,1}-1}10^{f_{i,2}-1}\cdots10^{f_{i,u_i}-1}$;
\item A bit vector $J_i[1..f_i']$ in which $J_i[k] = 1$ iff the first occurrence of the region-id $E[k]$ in $E_i$ is at position $k$ (let $z_i$ denotes the number of $0$s in $J_i$);
\item An array $K_i[1..z_i]$ in which $K_i[k]$ stores the region-id of the face that corresponds to the $k${\kth} $0$ in $J_i$, i.e. $K_i[k] = E_i[\selop_{J_i}(0, k)]$.
\end{itemize}

The analysis of the space costs of these data structures is similar to that in the proof of Lemma~\ref{lem:labconversion}. 
We can prove that space costs is $o(n)$ bits. 
The same algorithm in Lemma~\ref{lem:labconversion} can be used to compute the region-id of $x$ in constant time. \qed
\end{proof}

\subsection{Supporting Point Location Queries}
\begin{theorem}
\label{thm:pointlocationsub}
Given a planar subdivision $G$ of $n$ vertices, 
there is a succinct geometric index of $o(n)$ bits that supports point location on $G$ in $O(\lg n)$ time. 
This index can be constructed in $O(n)$ time. 
%Given a planar subdivision $G$ of $n$ vertices, 
%there is an ordering of the vertices such that when their coordinates are stored in this order as a sequence, a succinct geometric index of $o(n)$ bits can be constructed to support point location in $O(\lg n)$ time. 
%The preprocessing time is $O(n)$. 
%Consider a planar subdivision $G$ with $n$ vertices and $m$ edges, whose vertices have coordinates in the plane. 
%There is an ordering of the vertices such that when the sequence of coordinates of these vertices are stored in this order, a succinct geometric index of $o(n)$ bits can be constructed in $O(n)$ time to support point location in $O(\lg n)$ time. 
%This ordering can also be computed in $O(n)$ time.  
\end{theorem}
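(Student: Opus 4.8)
The plan is to mirror the three-tier construction used for planar triangulations in Theorem~\ref{thm:pointlocation}, replacing each triangulation-specific ingredient with its subdivision analogue that we have just developed. Concretely, I would perform the two-level partitioning of $G$ as in Section~\ref{sec:twolevelsub}: first blow $G$ up to $G'$ by subdividing large faces with Lemma~\ref{lem:facediv} (parameter $l=\lg^2 n/3$), take a $t$-face separator $S$ of $G'$ with $t=\lg^8 f'/f'$ via Lemma~\ref{lem:tfacesub}, obtaining $O(n/\lg^2 n)$ regions each with $O(\lg^{10}n)$ vertices; then within each region $R_i$ subdivide its large faces again (parameter $l=\lg^{1/4}n/3$) to get $R_i'$, take a $t_i$-face separator $S_i$ with $t_i=\lg^{3/4}n/f_i'$, obtaining subregions $R_{i,j}$ each with at most $\lg n$ vertices. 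We then permute the point set using the subregion permutations produced by Lemma~\ref{lem:encodesub}, glued together bottom-up exactly as in Section~\ref{sec:label}; the label-conversion machinery is furnished by Lemma~\ref{lem:labconversionsub}, and the analogous face-id-conversion machinery by Lemma~\ref{lem:faceconversion}.

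The succinct index then consists of four pieces: (i) the $o(n)$-bit label- and face-id-conversion structures of Lemmas~\ref{lem:labconversionsub} and~\ref{lem:faceconversion}; (ii) for the top level, triangulate the graph formed by $S$ together with the outer boundary of $G$ into a planar triangulation $S'$ on $O(n/\lg^2 n)$ vertices, build a Kirkpatrick~\cite{ki1983} point-location structure $P$ on $S'$ in $O(n/\lg^2 n)$ words $=O(n/\lg n)$ bits (storing only graph-labels, never coordinates), and tag each face of $S'$ with the index $i$ of the region it lies in (or $0$ if it belongs to $S$); (iii) for each region $R_i$, symmetrically triangulate $S_i$ together with the outer boundary of $R_i'$ into $S_i'$ on $O(f_i'/\lg^{1/8}n)$ vertices, build a Kirkpatrick structure $P_i$ of $O(f_i'\lg\lg n/\lg^{1/8}n)$ bits supporting queries in $O(\lg\lg n)$ time (using region-labels, which fit in $O(\lg\lg n)$ bits since $n_i=O(\lg^{10}n)$), and tag each face with its subregion index $j$ (or $0$ if it belongs to $S_i$); (iv) nothing extra is stored for the subregions themselves, since each subregion has only $O(\lg n)$ vertices and can be reconstructed on the fly from the stored permutation via Lemma~\ref{lem:encodesub} in $O(\lg n)$ time. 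Summing, the total space is $O(n/\lg n)+o(n)=o(n)$ bits, and the construction is $O(n)$ time because every subroutine (Lemmas~\ref{lem:facediv}, \ref{lem:tfacesub}, \ref{lem:encodesub}, the polygon triangulation of Lemma~\ref{lem:plpre}, and Kirkpatrick's linear-time build) is linear in the relevant vertex count.

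A query point $x$ is processed top-down: locate $x$ in $P$ in $O(\lg n)$ time and read the region tag $i$; if $i=0$, the containing face of $S'$ is a face of $S$, hence a face of $G$, so return its graph-id (obtained via Lemma~\ref{lem:faceconversion} applied at the top level, or directly if it is an original face). Otherwise locate $x$ in $P_i$ in $O(\lg\lg n)$ time and read the subregion tag $j$; if $j=0$ the face lies in $S_i$, and we convert its region-id to the graph-id of the $G$-face containing it via Lemma~\ref{lem:faceconversion}. Otherwise $x$ lies in subregion $R_{i,j}$: decode $R_{i,j}$ from the permutation using Lemma~\ref{lem:encodesub} in $O(\lg n)$ time, walk over its $O(\lg n)$ faces to find the one containing $x$, take its subregion-id, and convert it (via Lemma~\ref{lem:faceconversion}, composing the subregion$\to$region$\to$graph maps) to the answer. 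The overall query time is $O(\lg n + \lg\lg n + \lg n)=O(\lg n)$.

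The step I expect to be the main obstacle is verifying that the face-identifier bookkeeping is fully consistent across the levels --- in particular, that the chain of maps subregion-id $\to$ region-id $\to$ graph-id, together with the fact that modified faces at one level get "absorbed" into original faces one level up (and that a separator face may be a modified face, not a face of $G$), always resolves to the unique genuine face of $G$ containing $x$, and that all the bit-vector and array counts inherited from Lemmas~\ref{lem:componentnosub} and~\ref{lem:dupsub} (with the extra factor $k\le\lg^2 n$) still sum to $o(n)$ with the specific exponents chosen in Section~\ref{sec:twolevelsub}. The geometric correctness --- that point location in the triangulated separator graph correctly identifies which region/subregion contains $x$ --- is routine given the partitioning invariants, but the combinatorial/space accounting is where care is needed.
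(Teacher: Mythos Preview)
Your proposal is essentially the paper's own proof: same two-level partition from Section~\ref{sec:twolevelsub}, same labelling via Lemma~\ref{lem:encodesub}, Kirkpatrick structures on the triangulated separators $S'$ and $S_i'$, and brute-force decoding of the subregion at query time.

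The one place where your write-up is slightly off is exactly the point you flagged. When the top-level query lands in a face of $S'$ that belongs to $S$, you write ``hence a face of $G$'' and appeal to Lemma~\ref{lem:faceconversion}. Neither is quite right: a face of $S$ is a face of $G'$, possibly a \emph{modified} face (a piece of a large $G$-face cut up by Lemma~\ref{lem:facediv}), and Lemma~\ref{lem:faceconversion} maps region-ids to graph-ids, but separator faces are not assigned region-ids at all. The paper's fix is the obvious one: for each face of $S'$ lying in $S$, explicitly store the graph-id of the $G$-face containing it (together with a bit distinguishing the two cases), and analogously store the region-id for each face of $S_i'$ lying in $S_i$. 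Since $S'$ has $O(n/\lg^2 n)$ faces and each $S_i'$ has $O(n_i/\lg^{1/8}n)$ faces, this adds $O(n/\lg n)$ and $O(n\lg\lg n/\lg^{1/8}n)$ bits respectively, still $o(n)$. With that adjustment your bookkeeping chain is consistent and the space accounting goes through with the exponents chosen in Section~\ref{sec:twolevelsub}.
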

\begin{proof}
We preform the two-level partitioning of $G$ as in Section~\ref{sec:twolevelsub}, and use the approaches in Section~\ref{sec:labelsub} to assign labels to the vertices and faces of $G$, but we do not store these labels explicitly. 
Instead, we sort the vertices by their graph-labels in increasing order, and store their coordinates as a sequence. 
We then show how to construct a succinct geometric index of $o(n)$ bits.
%, and use this as the order in this theorem. 
%Thus we need only show how to construct a succinct geometric index of $o(n)$ bits. % for this order. 

The succinct geometric index consists of three sets of data structures. 
This first set of data structures are the $o(n)$-bit data structures constructed in Lemmas~\ref{lem:labconversionsub} and \ref{lem:faceconversion} that supports conversions between subregion-labels (or subregion-ids), region-labels (or region-ids) and graph-labels (or graph-ids). 
The second and the third sets of data structures correspond to the top-level and the bottom-level partitions. 

To construct the data structures for the top-level partition, we consider the graph $S'$ that can be constructed by triangulating the graph consisting of the separator $S$ and the outer face of $G$. 
$S'$ is a planar triangulation of $O(n / \lg^2 n)$ vertices, so we can use the approach of Kirkpatrick~\cite{ki1983} to construct a data structure $P$ of $O(n/\lg^2 n)$ words (i.e. $O(n/ \lg n)$ bits) to support the point location queries on $S'$. 
Note that when we construct $P$, we simply use the graph-label of any vertex to retrieve its coordinates, so that we do not store any coordinate in $P$. 
For each face of $S'$, we store an integer and a bit. If this face is in region $R_i$, we store $i$ and a bit $1$. 
If it is a face in separator $S$, we store the graph-id of the face of $G$ that contains this face and a bit $0$. 
As there are $O(n/\lg^2 n)$ faces and the number assigned to each face can be stored in $\lg n$ bits, the space required to store these numbers and bits is $O(n/ \lg n)$ bits. 
All these ($P$ and the values assigned to the faces of $S'$) are the data structures for the top-level partition of $G$, and they occupy $O(n/ \lg n)$ bits. 

The data structures for the bottom-level partition are constructed over the regions of $G$. 
Given a region $R_i$,  we consider the graph $S_i'$ that can be constructed by triangulating the graph consisting of the separator $S_i$ and the outer face of $G$. 
Then $S_i'$ has $O(n_i / \lg^{1/8} n)$ vertices, so a pointer that refers to a vertex of $S_i'$ can be stored in $O(\lg\lg n)$ bits. 
To refer to the coordinates of any vertex of $S_i'$, we only uses its region-label as we can compute its graph-label in constant time by Lemma~\ref{lem:labconversionsub}, and $O(\lg\lg n)$ bits are sufficient to store a region-label. 
Thus we can use the approach of Kirkpatrick~\cite{ki1983} to construct a data structure $P_i$ of $O(n_i\lg\lg n/\lg^{1/8} n)$ bits to support the point location queries on $S_i'$ in $O(\lg\lg n)$ time. 
We store a number for each face of $S_i'$, and this number is $j$ if this face is in subregion $R_{i,j}$, and if it is a face in separator $S_i$, we explicitly store the region-id of the face containing it. 
We also use a bit to indicate whether a face of $S_i'$ is in a subregion or not. 
The space cost of storing these numbers and bits is $O(n_i\lg\lg n/\lg^{1/8} n)$ bits. 
The space cost of these data structures for all the regions is $O(n\lg\lg n / \lg^{1/8} n) = o(n)$ bits. 

Therefore, the succinct geometric index constructed above occupies $o(n)$ bits. 

To support point location queries, 
given a query point $x$, we first use the set of data structures constructed for the top-level partition to perform a point location query on the graph $S'$ in $O(\lg n)$ time using $x$ as the query point. 
This tells us whether $x$ is in a face of $S$ or not. 
If it is, we also have the graph-id of this face and we return it as the result. 
Otherwise, we get the number of the region that $x$ is in. 
Assume that $x$ is in region $R_i$. 
We then use the bottom-level data structures to perform a point location query on the graph $S_i'$ in $O(\lg\lg n)$ time using $x$ as the query point. 
Similarly, this tells us whether $x$ is in a face of $S_i$ or not. 
If it is, we also have the region-id of this face, and we compute its graph-id using Lemma~\ref{lem:faceconversion} in constant time and return it as the result. 
Otherwise, we get the subregion that $x$ is in. 
Assume that it is $R_{i,j}$. 
As each subregion has at most $\lg n$ points, and by Lemma~\ref{lem:labconversionsub}, we can compute the graph-label of any of these vertices in constant time, and thus retrieve its coordinates in $O(1)$ time, we can use Lemma~\ref{lem:encodesub} to construct the graph $R_{i,j}'$ in $O(\lg n)$ time, and then check each of its faces to find out the face that $x$ is in. 
The subregion-id of this face can be determined from the dual graph, and can be used to compute its graph-id in constant time. 
Therefore, the entire process takes $O(\lg n)$ time. 

Same as the analysis in Lemma~\ref{lem:plpre}, the preprocessing time is $O(n)$. 
\qed
\end{proof}

We can use this theorem to solve the following problem. 
Given a simple polygon and a query point, we want to test whether the
polygon contains the query point. 
This is called the {\em membership query} on
the polygon.
\begin{corollary}
Given a simple polygon of $n$ vertices, 
there is a succinct geometric index of $o(n)$ bits that supports membership query on the polygon in $O(\lg n)$ time. 
This index can be constructed in $O(n)$ time. 
%there is an ordering of the vertices such that when their coordinates are stored in this order as a sequence, a succinct geometric index of $o(n)$ bits can be constructed to support membership query on the polygon in $O(\lg n)$ time. 
%
%The preprocessing time is $O(n)$. 
%Consider a simple polygon of $n$ vertices. With a permutation of
%these vertices and a succinct geometric index of $O(n)$ bits, any
%membership query on the polygon can be answered in $O(\lg n)$ time
%with $O(\lg n)$ space. Such a permutation and the succinct index
%can computed in $O(n)$ time.
\end{corollary}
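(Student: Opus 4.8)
The plan is to reduce a membership query on $P$ to a planar point location query on a planar subdivision that has only $O(1)$ faces of interest, so that Theorem~\ref{thm:pointlocationsub} can be invoked almost verbatim.

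First I would build, in $O(n)$ time, a planar subdivision $G$ as follows. Surround $P$ by a bounding triangle $\Delta$ whose three vertices are new points with coordinates dominating those of $P$; these three coordinates are stored explicitly in the index at a cost of $O(\lg n) = o(n)$ bits. I leave the interior of $P$ as a \emph{single} face, and only subdivide the region between $\partial P$ and $\Delta$. That region is cut into simple polygons by adding an edge from a vertex of $P$ to a visible vertex of $\Delta$ — found exactly as in the proof of Lemma~\ref{lem:plpre} — after which each piece is triangulated by Chazelle's linear-time algorithm~\cite{ch1991}. The resulting $G$ has $n+3$ vertices, is contained in the simple polygon $\Delta$, has every face a simple polygon, has exactly one face equal to $\mathrm{int}(P)$, and has all of its remaining (triangular) faces lying in $\Delta\setminus\mathrm{int}(P)$, i.e. outside $P$.

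Next I would apply Theorem~\ref{thm:pointlocationsub} to $G$: although one face of $G$ has $\Theta(n)$ vertices, this is precisely what the face-subdivision step of Section~\ref{sec:twolevelsub} was designed to handle, so we obtain an $o(n)$-bit index, constructed in $O(n)$ time, that answers point location on $G$ in $O(\lg n)$ time and returns the graph-id of the face containing the query. During preprocessing I record, in $O(\lg n)$ additional bits, the graph-id $\iota$ assigned to the face $\mathrm{int}(P)$. To answer a membership query for a point $q$: if $q$ lies outside $\Delta$ — tested in $O(1)$ time against the three stored vertices of $\Delta$ — report $q\notin P$; otherwise perform the point location query on $G$ and report $q\in P$ iff the returned graph-id equals $\iota$. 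This runs in $O(\lg n)$ time, the index occupies $o(n)$ bits, and the preprocessing takes $O(n)$ time, exactly as claimed; the permutation-of-points model is respected since Theorem~\ref{thm:pointlocationsub} permutes the $n+3$ vertices of $G$ and we store the $3$ bounding coordinates separately.

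The one place that needs care is keeping the space sublinear. The naive reduction — triangulate $P$ entirely and tag each triangle as inside or outside — would spend $\Theta(n)$ bits on tags. The point of the construction above is to leave $\mathrm{int}(P)$ as one face, so that a single recorded id $\iota$ distinguishes "inside" from "outside", while triangulating only the exterior and making sure that exterior is split into genuine simple-polygon faces before $G$ is handed to Theorem~\ref{thm:pointlocationsub}. Everything else is a direct instantiation of the planar-subdivision machinery already developed in this section.
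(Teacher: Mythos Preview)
Your approach is correct and follows the same reduction as the paper: enclose $P$ in a bounding shape, apply Theorem~\ref{thm:pointlocationsub}, and compare the returned face-id against the interior face. The paper's own argument is noticeably simpler, though: it surrounds $P$ by an orthogonal \emph{rectangle} and takes the resulting two-face subdivision (interior of $P$; annulus between $\partial P$ and the rectangle) directly as the input to Theorem~\ref{thm:pointlocationsub}, with no triangulation of the exterior at all. With only two faces, no id needs to be recorded --- the point-location answer already is the membership bit.

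Your extra step of triangulating the annulus between $\partial P$ and $\Delta$ is unnecessary for the reduction, but it does buy you something the paper glosses over: it guarantees that every internal face of $G$ is a simple polygon, which is exactly the hypothesis Section~\ref{sec:twolevelsub} states and which Lemma~\ref{lem:facediv} needs. In the paper's two-face version, the annular exterior face has a hole and is not a simple polygon, so strictly speaking Theorem~\ref{thm:pointlocationsub} does not apply to it as written. In that sense your version is the more rigorous of the two, at the cost of a slightly longer construction.
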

\begin{proof}
We simply choose an orthogonal rectangle in the plane that contains this polygon. 
This rectangle and the polygon form a planar subdivision $G$ of two faces. 
We apply Theorem~\ref{thm:pointlocationsub} to $G$. 
Given a query point $x$, we can check whether it is inside or outside the rectangle in constant time. 
If it is inside the rectangle, we further use the succinct geometric index for $G$ to decide which face it is in. 
\qed
\end{proof}

\section{Applications}

\begin{REMOVED}
\subsection{Membership Query in A Simple Polygon.}
Given a simple polygon and a query point, to test whether the
polygon contains the query point is the {\em membership query} on
the polygon.
\begin{corollary}
Consider a simple polygon of $n$ vertices. With a permutation of
these vertices and a succinct geometric index of $O(n)$ bits, any
membership query on the polygon can be answered in $O(\lg n)$ time
with $O(\lg n)$ space. Such a permutation and the succinct index
can computed in $O(n)$ time.
\end{corollary}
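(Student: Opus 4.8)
The plan is to reduce a membership query to a planar point-location query and then invoke Theorem~\ref{thm:pointlocationsub} (or, for the weaker $O(n)$-bit guarantee, any linear-space point-location structure). First I would scan the coordinates of the polygon $P$ in $O(n)$ time to compute an axis-aligned rectangle $Q$ whose interior strictly contains $P$. The rectangle and $P$ together form a planar subdivision $G$ with exactly two internal faces: the interior of $P$, and the part of $Q$ lying outside $P$. Since the latter face is an annular region rather than a simple polygon, I would add a single edge from one corner of $Q$ to a visible vertex of $P$ (found in $O(n)$ time), which cuts the annular face into one simple-polygon face without changing which face contains any point of $Q$; this adds only $O(1)$ vertices and preserves the model assumptions of Section~\ref{sec:twolevelsub}.

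Next I would apply Theorem~\ref{thm:pointlocationsub} to $G$, a planar subdivision on $n+O(1)$ vertices, obtaining a succinct geometric index of $o(n)$ bits (hence certainly $O(n)$ bits) that answers point location on $G$ in $O(\lg n)$ time and is built in $O(n)$ time. The theorem also fixes the permutation in which the coordinates are stored; since only the four corners of $Q$ are not vertices of $P$, this permutation is, up to $O(1)$ points, a permutation of $P$'s own vertex set, and the extra corners can be appended as the last few positions exactly as the triangulation construction in Section~\ref{sec:label} appends the vertices of its enclosing triangle.

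To answer a membership query for a point $x$, I would first test in $O(1)$ time whether $x$ lies inside $Q$ by two coordinate comparisons against the sides of $Q$; if it does not, then $x$ lies outside $P$ and we answer ``outside''. Otherwise $x$ lies in exactly one of the two faces of $G$, so we run the point-location query of Theorem~\ref{thm:pointlocationsub}, which in $O(\lg n)$ time reports the identifier of the face containing $x$; we answer ``inside'' iff this identifier is the one associated with the interior of $P$. The query uses only $O(\lg n)$ words of working space, the index occupies $o(n)$ bits, and preprocessing is $O(n)$ time, as required.

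The step I expect to be the main obstacle is not computational but a matter of checking the modelling details: confirming that the outer region of $G$ can legitimately be turned into a simple-polygon face by an $O(1)$-size augmentation, and that appending a constant number of rectangle corners to the stored permutation does not disturb the encoding guarantee of Lemma~\ref{lem:encodesub} or the label-conversion structures --- both of which tolerate a constant number of distinguished ``trailing'' vertices, just as already used for bounding triangles throughout Section~\ref{sec:triangulated}.
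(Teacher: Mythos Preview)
Your proof is correct, but it takes a genuinely different route from the paper's proof of this particular corollary. The paper's argument triangulates both the polygon and its exterior (inside a bounding triangle), applies Theorem~\ref{thm:pointlocation} for planar \emph{triangulations}, and then augments the index with one extra bit per triangle to flag whether that triangle lies inside $P$; summing these inside/outside bits over all subregions is precisely where the $O(n)$-bit (rather than $o(n)$-bit) bound in the statement comes from. You instead treat $P$ together with a bounding rectangle as a two-face planar subdivision and invoke Theorem~\ref{thm:pointlocationsub} directly; since that theorem already returns a face identifier, no per-triangle flags are needed, and you obtain the stronger $o(n)$-bit bound as a byproduct. Your approach is in fact the one the paper adopts for its later, streamlined membership corollary stated after Theorem~\ref{thm:pointlocationsub}. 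One point you handle more carefully than that later version is the annular outer face: you explicitly split it with a single edge so that every internal face is a simple polygon, matching the model assumption of Section~\ref{sec:twolevelsub}; this is a real detail that the paper's short proof glosses over.
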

\begin{proof}
We first triangulate the polygon and its outer face in $O(n)$
time~\cite{ch1991}. Using Theorem~\ref{thm:pointlocation}, we can
compute a permutation of the vertices and a succinct geometric
index of size $o(n)$. Then we modify this succinct index to
identify whether each triangle in the triangulation is inside the
polygon. Consider the two-level partitioning scheme used in
Theorem~\ref{thm:pointlocation}. Additional triangulations are
built for the top-level and the bottom-level partitions. In the
succinct index for these triangulations, we assign one extra bit
for each triangle, to identify whether it is inside the polygon.
There are two possible cases: the triangle is completely inside
the polygon or the triangle is not completely the polygon. This
modification does not increase the space bound of the original
data structure. For each subregion, the permutation of the
vertices represents the order to construct the triangulation. We
attach a bit vector to indicate whether each triangle, in this
order, is inside the polygon. The total size of bit vectors for
all subregions is $O(n)$. With the modified succinct index, we can
locate the triangle containing the query point and identify
whether this triangle is inside the polygon in $O(\lg n)$ time.
%The extra space of size $O(\lg n)$ is required in reconstructing the triangulation of a subregion
\qed
\end{proof}

Same as Corollary~\ref{cor:integer}, we also can answer point location queries in
trapezoidal decompositions, and arbitrary planar subdivisions in
sub-logarithmic time, if all coordinates are integers.
\end{REMOVED}

\subsection{Vertical Ray Shooting Query}

Given a set of disjoint line segments, we can build its
trapezoidal decomposition in $O(n\lg n)$ time~\cite{DVOS97}. By
answering point-location query in a trapezoidal decomposition, the
trapezoid containing the query point is reported. Alternatively,
we can return the line segments defining the upper and the lower
edges of the trapezoid. This query is referred as the {\em
vertical ray shooting query.}
\begin{theorem}
Given a set of disjoint line segments in the plane, 
there is a succinct geometric index of $o(n)$ bits that supports vertical ray shooting on this set in $O(\lg n)$ time. 
This index can be constructed in $O(n\lg n)$ time. 
%there is an ordering of these line segments such that when the coordinates of their endpoints are stored in this order as a sequence, a succinct geometric index of $o(n)$ bits can be constructed to support vertical ray shooting in $O(\lg n)$ time. 
%The preprocessing time is $O(n\lg n)$. 
%Consider a set of disjoint line segments in 2D. With a permutation
%of these line segments and a succinct geometric index of size
%$o(n)$, the vertical ray shooting query can be answered in $O(\lg
%n)$ time with $O(1)$ space. Such a permutation and the succinct
%index can be computed in $O(n\lg n)$ time.
\end{theorem}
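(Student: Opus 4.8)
\noindent\emph{Proof proposal.} The plan is to reduce vertical ray shooting to point location on the trapezoidal decomposition and then invoke Theorem~\ref{thm:pointlocationsub}. First I would build the trapezoidal decomposition $\mathcal{T}$ of the $n$ disjoint segments, enclosed in a bounding box, in $O(n\lg n)$ time~\cite{DVOS97}; this is a planar subdivision with $O(n)$ vertices in which every face is a trapezoid (a simple polygon with $O(1)$ vertices), so Theorem~\ref{thm:pointlocationsub} applies and yields a permutation of the $O(n)$ vertices of $\mathcal{T}$ together with an $o(n)$-bit index that locates, in $O(\lg n)$ time, the trapezoid containing a query point $x$. Recall from the proof of that theorem that a query actually reconstructs the $O(\lg n)$-vertex base subregion containing $x$, so in fact we obtain the located trapezoid together with all of its bounding edges and the coordinates of their endpoints.

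The remaining task is to name, for the located trapezoid, the input segments that contain its top and bottom edges. I would augment the index as follows. In the encoding of each base subregion via Lemma~\ref{lem:encodesub} I would reserve $O(1)$ extra bits per vertex to flag the original segment endpoints, so that inside a reconstructed subregion I can trace the maximal collinear chain of segment-edges through the top (resp.\ bottom) edge of the trapezoid and, if it reaches an original endpoint, report the segment directly. If the chain instead leaves the subregion, it must cross a subregion boundary edge, which in turn is shared with a region-level separator face, and, continuing upward, the chain eventually reaches the top-level separator $S$. Accordingly I would store, for each of the $O(n/\mathrm{polylog}\,n)$ faces of $S$, explicit $O(\lg n)$-bit pointers to the two input segments on its top and bottom edges, and, for each segment-edge appearing on a subregion boundary, an $O(\lg\lg n)$-bit pointer that, within its own (polylog-sized) region, leads either to an original endpoint of that segment or to the edge at which the segment exits the region, i.e.\ to an incident face of $S$. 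Since the separator bounds obtained from Lemmas~\ref{lem:componentnosub} and~\ref{lem:dupsub} keep the number of faces of $S$ and the number of subregion boundary edges sublinear (e.g.\ $O(n/\lg^{1/8} n)$ for the latter), these auxiliary tables occupy $o(n)$ bits in total. A query then performs the $O(\lg n)$ point location, an $O(\lg n)$ chain traversal inside the reconstructed base subregion, and then $O(1)$ table redirections, for $O(\lg n)$ time overall; the preprocessing is $O(n\lg n)$, dominated by constructing $\mathcal{T}$, with all the augmentations and the index of Theorem~\ref{thm:pointlocationsub} built in $O(n)$ time.

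The main obstacle is precisely the $o(n)$-bit budget for recovering the bounding segments: a single input segment may form the top edge of $\Theta(n)$ trapezoids, so one cannot afford a segment pointer per trapezoid, nor even $\Theta(1)$ bits per maximal collinear chain living inside a polylog-sized region. The resolution is to store explicit segment names only where the structure is sparse --- at the top-level separator and along the few edges that lie on subregion boundaries --- and to resolve everything else by reconstructing a base subregion locally and walking the collinear chain, using the per-vertex endpoint flags to recognise when the chain has reached an actual endpoint. Verifying that this cascade of $O(1)$-time redirections (base subregion $\rightarrow$ region-level pointer $\rightarrow$ face of $S$) always terminates on the correct input segment, and that each layer of pointers genuinely fits in $o(n)$ bits under the separator bounds, is the part that will need the careful bookkeeping; the rest follows routinely from Theorems~\ref{thm:pointlocationsub} and the standard $O(n\lg n)$-time construction of the trapezoidal map.
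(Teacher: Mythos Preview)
There is a genuine gap in the data model. In the succinct-index framework the only ``free'' storage is a permutation of the \emph{input objects}; everything else must fit in the $o(n)$ auxiliary bits. For vertical ray shooting the input is the set of $n$ line segments, not the vertices of the trapezoidal decomposition $\mathcal{T}$. You propose to apply Theorem~\ref{thm:pointlocationsub} as a black box, which would mean storing a permuted sequence of the $O(n)$ vertices of $\mathcal{T}$. But a constant fraction of those vertices are Steiner points --- vertical projections of segment endpoints onto other segments --- and their coordinates are not part of the input. Writing them down costs $\Theta(n)$ words, not $o(n)$ bits, so the reduction to Theorem~\ref{thm:pointlocationsub} does not respect the model. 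Your later augmentation (endpoint flags, boundary pointers, chain walking) is addressing a secondary issue and does not repair this.

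The paper avoids this by permuting the \emph{segments} themselves and never materialising the Steiner points. Each vertex of $\mathcal{T}$ is instead represented implicitly as ``two segment labels and a flag'', since every trapezoid vertex is determined by at most two input segments. With this representation the two-level partitioning scheme of Section~\ref{sec:triangulated} is rerun directly on $\mathcal{T}$, labelling and grouping segments (not vertices) by separator, region, and subregion; the crucial observation that makes the duplication-degree bound go through is that a segment lying in two different regions must contribute a vertex to the separator, so the total duplication is controlled by the separator size. At the subregion level the paper simply scans all $O(\lg n)$ segments of the subregion against the query point, rather than reconstructing a triangulation and walking chains. If you want to salvage your route, the first step is to replace ``permutation of the vertices of $\mathcal{T}$'' by ``permutation of the input segments, with vertices encoded as segment-index pairs'' --- at which point you are essentially redoing the paper's construction rather than invoking Theorem~\ref{thm:pointlocationsub}.
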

\begin{proof}
We first build the trapezoidal decomposition of the plane with the
set of given line segments, which takes $O(n\log n)$
time~\cite{DVOS97}. In this planar subdivision, each face is a
trapezoid; each vertex is determined by at most two line segments.

Similar to building the succinct index for planar triangulations,
we apply the two-level partitioning scheme on the trapezoidal
decomposition. In the top-level partition, $O(n/\lg^3 n)$
trapezoids are selected. The line segments defining selected
trapezoids are grouped together. The rest of line segments are
grouped based on their regions. In the bottom-level partitions,
$O(n/\log^{1/2} n)$ trapezoids are selected in total. Again, line
segments are grouped together for the separator and each
subregions. Different from the triangular subdivision, each vertex
is defined by at most two line segments. In the data structure, it
is represented by two labels and a flag. Similar to the extra data
structure used in the triangular subdivision, we also build point
location structures for both levels and this index only takes
$o(n)$ bits in total.

Inside subregions, we handle differently. We simply group all line
segments in the same subregion in an arbitrary order. 
We can use a similar labeling scheme as in Section~\ref{sec:label}. 
This works because each time a line segment is in two different regions (or subregions), it must contain at least one vertex in the separator of the graph (or of the region). 
As no two given segments intersect, we can use this fact to bound the sum of the number of regions (or subregions) that a line segment can be in. 

When processing the query, we only use the additional data
structure to locate the trapezoid containing the query point in
the top-level and bottom-level partitions, and also locate the
subregion containing the query point. In the subregion, we scan
through all line segments, and determine the line segments above
(below) the query point. Comparing with the line-segments above
(below) the query point in two levels, we obtain the real answer.
\qed
\end{proof}

%\subsection{Making Succinct Data Structures In-Place}
\subsection{Implicit Geometric Data Structures}
Implicit geometric data structures have been
studied since 2000~\cite{BMM07,BCC04,BI04,CC08}. For example,
several implicit 2-d nearest neighbor query structures 
(this is equivalent to supporting point location on Voronoi Diagrams) 
%(a
%point location structure for a special planar subdivision) 
have
been proposed~\cite{BCC04, CC08}. However, implicit point location
structures for planar subdivisions are still unknown. We
can apply succinct geometric indexes to solve this problem.

We apply the well-known bit encoding technique used in many
previous work on in-place algorithms and implicit data structures
(e.g.~\cite{M86}): divide the array of input into consecutive
pairs. We permute each pair of the data. In the lexical order, if
the first datum is smaller, 0 is encoded, otherwise 1 is encode.
(Assume we have removed all duplicates.) Retrieving one bit in
this encoded data structure requires $O(1)$ time, and retrieving a
pointer of size $O(\lg n)$ requires $O(\lg n)$ time.

Before encoding the succinct data structure, in the input array,
we put each pair of the data in the lexical increasing order. To
ensure the data structure is still valid above the above
permutation, when applying the separator theorem, we need to ensure
all regions (and subregions) have an even size. If one region has
an odd size, we can move one of the element in that region into
the separator. When computing the sequence to construct the
triangulation of a sub-region, we ensure the number of vertices
selected in one round in~\cite{ds1997} is even, and every time two
vertices are removed from the triangulation together.

In the succinct index, we store labels based on the sequence
described above. By permuting each pair of data, we can encode
$1/2n$ bits in the array. Therefore, we encode the succinct data
structure of size $o(n)$ in the input array, for a $n$ large
enough. For point location queries, the input array is a set of
vertices, and for vertical ray shooting queries, the input array is a set
of line segments.

When we answer queries, we locate the pair of data in the array,
put them in the lexical increasing order, then locate the datum
based on the label stored in the succinct index. Then we have:

\begin{theorem}
Given a planar triangulation of $n$ vertices, there is a permutation
of the vertex coordinates array that can support point location
 in $O(\lg^2 n)$ with $O(1)$ words of working space.
%Given a triangulation of $n$ vertices, we can have a permutation
%of the vertex array, such that we can answer the point location
%query in $O(\lg^2 n)$ with $O(\lg n)$ space.
\end{theorem}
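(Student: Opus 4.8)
The plan is to combine the succinct geometric index of Theorem~\ref{thm:pointlocation} with the classical bit-encoding trick for implicit data structures. Theorem~\ref{thm:pointlocation} already gives, for any planar triangulation, a permutation of its vertex coordinates together with an $o(n)$-bit index that answers point location in $O(\lg n)$ time, and whose query algorithm manipulates only $O(1)$ words beyond the index and the coordinate array. The task is therefore to eliminate the $o(n)$ bits of extra storage by hiding them inside the coordinate array itself. First I would group the $n$ array slots into $\lfloor n/2\rfloor$ consecutive pairs; within each pair the two coordinates can be stored either in lexicographically increasing or in lexicographically decreasing order, which encodes one free bit per pair and hence $\lfloor n/2\rfloor$ bits in total --- more than enough room for an $o(n)$-bit index once $n$ is large enough. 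The ``canonical'' sequence in which slot $2j-1$ holds the lexicographically smaller and slot $2j$ the lexicographically larger element of pair $j$ is the permutation with respect to which we build the succinct index; the actual within-pair order then stores the index bits, bit $j$ being recovered in $O(1)$ time by comparing the two coordinates of pair $j$, and the canonical element at position $k$ being recovered in $O(1)$ time by reading pair $\lceil k/2\rceil$ and sorting it lexicographically.

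The subtlety is that the permutation demanded by the succinct index --- in particular the order within each subregion produced by Lemma~\ref{lem:encodetriag} and the bottom-up order in which region- and subregion-labels are assigned in Section~\ref{sec:label} --- must be made compatible with the pairing. I would handle this exactly as in the generic bit-encoding framework: when applying the face-separator lemma (Lemma~\ref{lem:tface}) at each of the two levels, I ensure every region and subregion has even size by moving, if necessary, a single vertex into the separator; since this changes each separator by only $O(1)$ per component, Lemmas~\ref{lem:componentno} and~\ref{lem:dup} are unaffected and the label-conversion structures of Lemma~\ref{lem:labconversion} still occupy $o(n)$ bits. When running Denny and Sohler's encoding (Lemma~\ref{lem:encodetriag}) inside a subregion I would remove the independent-set vertices two at a time and make every chunk of vertices whose permutation carries reinsertion information have even size, so that the canonical (within-pair-sorted) order of the subregion is still a valid Denny--Sohler encoding; the remaining $o(n)$ index bits of Theorem~\ref{thm:pointlocation} are then written into the free pair-bits. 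The construction time is dominated by that of Theorem~\ref{thm:pointlocation}, i.e. $O(n)$, plus the $O(n)$ cost of performing the pair permutations.

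For a query I would simply run the $O(\lg n)$-step algorithm of Theorem~\ref{thm:pointlocation}, redirecting every access it makes to the index or to a vertex coordinate through the $O(1)$-time pair-decoding routines above. The only change in the time bound comes from reading a $\Theta(\lg n)$-bit quantity from the embedded index --- for instance a pointer in Kirkpatrick's hierarchy or an entry of the arrays $D_i$ of Lemma~\ref{lem:labconversion} --- which now requires decoding $\Theta(\lg n)$ pairs and hence $O(\lg n)$ time instead of $O(1)$. Since the algorithm performs $O(\lg n)$ such word-sized accesses, the total query time becomes $O(\lg^2 n)$, while the working space stays $O(1)$ words because the decoding routines need only a constant number of temporary words. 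I expect the main obstacle to be the bookkeeping of the preceding paragraph: verifying that every place where the succinct index relies on a specific vertex permutation (Lemma~\ref{lem:encodetriag} and the label assignment of Section~\ref{sec:label}) can be realized by a pairing-respecting permutation, while all the separator-size and $o(n)$-space bounds survive the even-size adjustment.
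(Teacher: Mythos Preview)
Your proposal is correct and follows essentially the same approach as the paper: encode the $o(n)$-bit succinct index of Theorem~\ref{thm:pointlocation} into the pair-ordering of the coordinate array, force regions and subregions to have even size by pushing a vertex into the separator when needed, adapt Denny--Sohler's encoding to remove vertices in pairs, and absorb the $O(\lg n)$ slowdown per word-access to get $O(\lg^2 n)$ query time. The paper's treatment is in fact terser than yours; your identification of the compatibility issues between the pairing and the vertex labeling of Section~\ref{sec:label} and Lemma~\ref{lem:encodetriag} is exactly the point the paper addresses.
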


Similarly, we can design implicit data structures to answer
point location queries in an arbitrary planar subdivision and support vertical ray shooting. These data structures are represented
as a permutation of the input array, and answering point location
queries in these data structures takes $O(\lg^2 n)$ time.

\section{Conclusion}
In this paper, we start a new line of research by designing succinct geometric indexes. 
We design a succinct geometric for triangular planar subdivision that occupies $o(n)$ bits that, by taking advantage of the points permuted and stored elsewhere as a sequence, to support point location in $O(\lg n)$ time. 
We also considered the exact number of point-line comparisons, integer coordinates from a bounded universe, and the case where the query has a certain distribution, by designing three succinct geometric indexes for them. 
We also generalize our techniques to planar subdivisions, and apply them to design succinct geometric indexes for vertical ray shooting and the design of implicit data structures supporting point location in $O(\lg^2 n)$ time. 
In addition, we believe that our techniques are practical. 
This is because several previous results we use have practical implementations, such as practical bit vectors~\cite{fggv2006}, and we can group constant number of subregions to have enough vertices in order to apply Lemma~\ref{lem:encodetriag}. 
Thus we expect our technique to influence the design of space-efficient geometric data structures. 

There are a few open problems. First, the index we design for the case where the query distribution is known supports point location in $O(H+1)$ expected time. 
Thus it is an open problem to improve this to $H + o(H)$ expected number of comparisons. 
Another open problem is to design succinct geometric indexes for other types of queries, such as general ray shooting.

\bibliographystyle{plain}
\bibliography{pointlocation}

\end{document}